\theoremstyle{remark}
\theoremstyle{definition}
\newtheorem{teo}{Theorem}[section]
\newtheorem{ejem}[teo]{Examples}
\newtheorem{lema}[teo]{Lemma}
\newtheorem{prop}[teo]{Proposition}
\newtheorem{cor}[teo]{Corollary}
\newtheorem{obs}[teo]{Remark}
\newtheorem{defi}[teo]{Definition}
\newtheorem{notat}[teo]{Notation}
\newcommand{\aref}[1]{(\ref{#1})}
\newcommand{\R}{{\mathbb R}}
\newcommand{\C}{{\mathcal C}}
\newcommand{\Z}{{\mathbb{Z}}}
\newcommand{\N}{{\mathbb N}}
\newcommand{\F}{{\mathcal F}}
\newcommand{\B}{{\mathcal B}}
\def\A{\mathcal A}
\def\F{\mathcal F}
\def\1{{\mathbf 1}}
\def\R{\mathbb{R}}
\def\N{\mathbb{N}}
\def\A{\mathcal A}
\def\F{\mathcal F}
\newcommand{\rojo}[1]{\textcolor{red}{#1}}
\numberwithin{equation}{section}
\begin{document}

\title{Stability of gas measures under perturbations and discretizations}
\author{Roberto Fern\'andez, Pablo Groisman and Santiago Saglietti}
\date{}
\begin{abstract} For a general class of gas models ---which includes discrete and continuous Gibbsian models as well as contour or polymer ensembles--- we determine a \emph{diluteness condition} that implies: (1) Uniqueness of the infinite-volume equilibrium measure; (2) stability of this measure under perturbations of parameters and discretization schemes, and (3)  existence of a coupled perfect-simulation scheme for the infinite-volume measure together with its perturbations and discretizations. Some of these results have previously been obtained through methods based on cluster expansions.  In contrast, our treatment is purely probabilistic and its diluteness condition is weaker than existing convergence conditions for cluster expansions.
\end{abstract}
\maketitle

\section{Introduction}

Phase transitions in statistical mechanics are often studied through sequences of models involving convergent sequences of parameters. The analysis is usually ``towards a target'': properties of the target model are inferred from properties of models in the sequence. For instance, sequences of models with asymptotically vanishing fields can be used to show first order phase transitions for the zero-field model. In the same spirit, transitions in continuum-space models are often studied through limits of models on lattices with decreasing mesh.  

In this paper we treat the opposite ---``from the target''--- point of view. We investigate conditions under which properties of a target model are inherited by models  obtained by perturbing parameters (e.g. fields or fugacities) or the configuration space itself (e.g. through discretizations). While our treatment is general, our basic motivation came from target models in the continuum, for which we wished to address two types of issues:   

\begin{description}
	\item [Faithfulness of simulation schemes] Simulation of continuum models requires unavoidable discretizations. It is tacitly understood that using sufficiently refined discretizations leads to trustable determinations of phase diagrams. Still, one may wonder if this is always the case. The question involves, in fact, a previous issue:  How should the discretization be performed? The natural choice would be to discretize the model (both configuration space and interactions) and sample from the Gibbs measure in the discrete system. An alternative, however, would be to sample the discretized version of the actual continuum measure. These approaches are quite likely not equivalent in general, as it is known that coarse-graining may lead to non-Gibbsianness.
	
	\item[Universality] Stability of the continuum equilibrium measure under discretizations should imply the irrelevance in the discretized systems of  interaction terms which disappear in the continuum limit. This includes, for instance, \mbox{hard-core exclusions} involving events with zero probability in the continuum (see the thin rods model in Section \ref{sec:dtrm} below, for example). In this sense, stability can be \mbox{interpreted as} the continuum model acting as a ``universality class'' for whole families of discrete systems.
\end{description}

Our treatment is geared towards general \emph{gas models}, that is systems involving families of geometrical objects ---possibly with further decorations such as color or spin--- distributed on some underlying space. This distribution is assumed to be, while on bounded volumes, absolutely continuous with respect to a basic ``free'' Poisson process.
Our approach uses only general properties of this density with respect to the ``free measure'', and hence it is applicable to general point processes not necessarily endowed with a Gibbsian description. In particular, it applies to the contour ensembles used to describe low-temperature phases, starting with the well-known Peierls contours.   

The results reported below hold for models satisfying an appropriate \emph{dilution condition} which physically corresponds to gaseous phases. This dilution condition leads to objects typically clustering into finite islands separated by percolating empty space. In particular, our condition implies uniqueness of the infinite-volume point process or Gibbs measure. In the Gibbsian setting, our requirement leads naturally to models at high temperature or low fugacity. However, the generality of our approach makes the technique relevant also for low-temperature (or condensation) regimes in which typical configurations can be described by diluted contours. This generality will be exploited in a forthcoming paper; in the present paper we focus on stability issues within the uniqueness regime.   

Our main result establishes that whenever a target model satisfies this dilution condition 
all sufficiently small perturbations of it also admit exactly one consistent infinite-volume measure which, furthermore, is a slight perturbation of the measure in the target model. Let us list some particularities of our approach:

\begin{itemize}
	\item The dilution condition amounts to a strong form of uniqueness that is, however, weaker than the condition associated to the validity of cluster expansion methods and therefore applies to a wider range of systems. This extension comes at a cost: Roughly speaking, our condition implies only the continuity of the \mbox{unique measure} with respect to slight perturbations in the parameters of the model, whereas \mbox{the convergence} of cluster expansions leads to analytic dependences.  
	
	\item Our approach yields a coupled perfect simulation algorithm, that is an algorithm yielding simultaneously exact samples of the infinite-volume equilibrium measures (restricted to a finite volume) of target and approximating models. 
	
	\item This coupled algorithm leads to the almost sure convergence of the samples above, and thus to the weak convergence of the equilibrium measures of perturbed models towards that of the target model.
	
\end{itemize}

We believe that our results are quite natural and easy to apply, as we illustrate through a number of examples. Nevertheless, our presentation is not devoid of technical details.  In particular, in Section \ref{sec:bs} we present a careful account of the general setup for gas models (which is used indistinctly for discrete and continuum systems) followed by the precise definition of ``approximation'' operations (i.e. perturbations of the configuration space). Also, in Section \ref{secffg} we discuss with some detail the \emph{ancestor algorithm} which constitutes the main tool of our analysis. This algorithm ---introduced 15 years ago as a substitute for cluster expansions in \cite{FFG1}--- reconstructs configurations through a (time-backwards) oriented percolation model of space-time ``cylinders'', i.e. objects in the gas model which live for a certain period of time. This ancestor algorithm succeeds ---implying uniqueness and space-time mixing of the infinite-volume measure--- if these cylinders do not percolate. Such a condition is naturally suited for stability studies, because finite cylinder clusters are robust under perturbations. Our dilution condition is crafted to ensure, in general terms, the lack of percolation for cylinders associated to the target model. This lack of percolation is then inherited by the perturbed models, whose cylinder clusters become in one-to-one correspondence with those of the target model as the strength of the perturbation vanishes.  

\section{The basic setup}\label{sec:bs}

\subsection{Configuration space}
We start by describing the general measure-theoretical setup. The definitions aim at a general configuration space on an underlying space of locations. Whenever the latter is discrete, we call the model in question a \textit{lattice system}, although other type of systems also fit into this framework.

\subsubsection{Particle configurations and configuration space}\label{sec:pc}

We consider two locally compact complete separable metric spaces: a \textit{location space} $(S,d_S)$ and a space of \textit{animals} $(G,d_G)$. A countable $S$ is often called a \textit{lattice} and a finite $G$ is interpreted as a set of \mbox{\textit{colors} or \textit{spins}.} The product space $S \times G$ is also locally compact complete and separable if endowed with the metric $d=d_S + d_G$. For convenience, we shall call an element $(x,\gamma) \in S \times G$ a \textit{particle} and denote it simply by $\gamma_x$. We interpret it as an animal $\gamma$ positioned at location $x$.

The general definition of configuration space requires special care. For our purposes, it will be convenient to adopt the general framework of point processes featured in \cite{K}. In this framework, configurations are identified with locally finite measures on $S \times G$ obtained as a superposition of delta-measures signaling the presence of particles. 

\begin{notat} Given a metric space $(X,d)$ we denote:
	\begin{itemize}
		\item [$\bullet$] By $\B_X$ the class of all Borel subsets of $(X,d)$.
		\item [$\bullet$] By $\B^0_X$ the set of elements of $\B_{X}$ with compact closure.
	\end{itemize}
\end{notat}
We recall that a set $B \in \B_X$ is \textit{locally finite} if for every $B' \in \B^0_X$ the set $B \cap B'$ is finite. Also, a measure $\xi$ on $(X, \B_{X})$ is called a \textit{Radon measure} if $\xi(B) < +\infty$ for every $B \in \B^0_{X}$. Configurations correspond to particular Radon measures supported on locally finite sets.

\begin{defi} \label{def:r1} A Radon measure $\xi$ on $(S\times G, \B_{S \times G})$ is said to be a \textit{particle configuration} if  $\xi(B) \in \N_0$ for every $B \in \B^0_{S \times G}$.
\end{defi}

The following proposition states that a particle configurations can be actually identified with a locally finite collection of particles, in which particles may appear more than once. 

\begin{prop}[{\cite[Lemma~2.1]{K}}] A measure $\xi$ on $S \times G$ is a particle configuration if and only if there exist a locally finite set $\langle \xi \rangle \subseteq S \times G$ and a map $m_\xi: \langle \xi \rangle \to \N$ such that
	\begin{equation}\label{standardrep}
	\xi = \sum_{\gamma_x \in \langle \xi \rangle} m_\xi(\gamma_x) \delta_{\gamma_x}
	\end{equation} with $\delta_{\gamma_x}$ the Dirac measure centered at $\gamma_x$. We call \eqref{standardrep} the \textit{standard representation} of $\xi$.
\end{prop}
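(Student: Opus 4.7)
The plan is to split the proof into two directions. For the easy implication, suppose $\xi = \sum_{\gamma_x \in \langle \xi \rangle} m_\xi(\gamma_x) \delta_{\gamma_x}$ with $\langle \xi \rangle$ locally finite. Then for any $B \in \B^0_{S\times G}$, local finiteness forces $\langle \xi \rangle \cap B$ to be finite and one obtains $\xi(B) = \sum_{\gamma_x \in \langle \xi \rangle \cap B} m_\xi(\gamma_x) \in \N_0$, confirming that $\xi$ is a particle configuration in the sense of Definition \ref{def:r1}.

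For the nontrivial direction, I would set $\langle \xi \rangle := \{\gamma_x \in S \times G : \xi(\{\gamma_x\}) \geq 1\}$ and $m_\xi(\gamma_x) := \xi(\{\gamma_x\})$ for each $\gamma_x \in \langle \xi \rangle$, and then prove that $\xi$ coincides with $\eta := \sum_{\gamma_x \in \langle \xi \rangle} m_\xi(\gamma_x) \delta_{\gamma_x}$. Local finiteness of $\langle \xi \rangle$ is immediate because, for any compact $K$, one has $\#(\langle \xi \rangle \cap K) \leq \sum_{\gamma_x \in \langle \xi \rangle \cap K} m_\xi(\gamma_x) \leq \xi(K) < \infty$. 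Hence $\eta$ is a well-defined Radon measure dominated by $\xi$, and the residual $\nu := \xi - \eta$ is a nonnegative Radon measure with the properties \textbf{(i)} $\nu(\{\gamma_x\}) = 0$ for every $\gamma_x \in S \times G$, and \textbf{(ii)} $\nu(B) \in \N_0$ for every $B \in \B^0_{S\times G}$.

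The crux is to show that any measure satisfying \textbf{(i)} and \textbf{(ii)} must vanish identically. Since $(S \times G, d)$ is locally compact, complete and separable, it is $\sigma$-compact, so it suffices to prove $\nu(K) = 0$ for each compact $K$. Fix such a $K$: by local compactness each $\gamma_x \in K$ admits a decreasing sequence of open balls $B(\gamma_x, r_n)$ with $r_n \to 0$ and $\overline{B(\gamma_x, r_1)}$ compact, whence continuity of measure from above combined with \textbf{(i)} yields $\nu(B(\gamma_x, r_n)) \to 0$; in particular I can pick an open neighborhood $U_{\gamma_x}$ of $\gamma_x$ with $\nu(U_{\gamma_x}) < 1$. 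Extracting a finite subcover $U_1, \ldots, U_m$ of $K$ and disjointifying via $B_i := K \cap (U_i \setminus \bigcup_{j < i} U_j)$ produces a Borel partition of $K$ satisfying $\nu(B_i) < 1$ for all $i$. Each $B_i$ is relatively compact, so by \textbf{(ii)} $\nu(B_i) \in \N_0$, which forces $\nu(B_i) = 0$ and hence $\nu(K) = 0$.

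The main obstacle is precisely this diffuse-to-zero step, since it is what converts the integer-valuedness hypothesis into full atomicity of $\xi$; the rest is bookkeeping. All the metric-topological ingredients involved, namely $\sigma$-compactness, Radon regularity, and the ability to shrink open neighborhoods down to a single point while keeping them of finite measure, are packaged into the standing assumption that $S \times G$ is a locally compact complete separable metric space, so no delicate ad hoc construction is required. A convenient byproduct of the argument is that $\langle \xi \rangle$ and $m_\xi$ are uniquely determined by $\xi$ through the formulae $\langle \xi \rangle = \{\gamma_x : \xi(\{\gamma_x\}) > 0\}$ and $m_\xi(\gamma_x) = \xi(\{\gamma_x\})$, so the standard representation is in fact canonical.
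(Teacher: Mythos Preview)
Your proof is correct. The paper does not supply its own argument for this proposition; it is stated with a direct citation to \cite[Lemma~2.1]{K} and no proof is given, so there is nothing to compare against beyond noting that your argument is a clean, self-contained version of the standard one.
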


Thus, a particle configuration $\xi$ can be thought of as a locally finite family of particles, each with its own respective multiplicity. The set $\langle \xi \rangle$ shall be called the \textit{support} of $\xi$. \mbox{It is the collection} of particles which constitute $\xi$, with no regard for their multiplicities. If we take these into consideration then we obtain the \textit{weighted support} of $\xi$ given by 
$$
[\xi]:=\bigl\{ (\gamma_x,i) \in (S\times G)\times \N : \gamma_x \in \langle \xi \rangle \text{ and }1 \leq i \leq m_\xi(\gamma_x)\bigr\}.
$$ 

Definition \ref{def:r1} alone is sufficient to define the configuration space whenever $G$ is compact. For the general case, however, we will require an extra restriction.

\begin{defi} \ 
	
	\begin{itemize}
		\item[i.] A measure $\xi$ on $(S\times G, \B_{S \times G})$ is said to be of \textit{$S$-locally finite allocation} if it satisfies $\xi(\Lambda \times G) < +\infty$ for every $\Lambda \in \B^0_S$.
		\item[ii.] The \textit{configuration space} of $S\times G$ is the space $\mathcal{N}(S\times G)$ of all particle configurations on $S \times G$ which are of $S$-locally finite allocation.
		\item [iii.] The set of configurations (supported) on $\Lambda \in \B^0_S$ is the set 
		$$
		\mathcal{N}(\Lambda \times G) \;:=\; \bigl\{ \xi\in \mathcal{N}(S \times G) : \langle\xi\rangle \subset \Lambda \times G\bigr\}.
		$$
		[The notation is slightly abusive.]
	\end{itemize}
\end{defi}

We now define the natural notions of restriction and superposition of configurations.


\begin{defi} Given $\xi \in \mathcal{N}(S \times G)$ and $A \in \B_{S \times G}$, the \textit{restriction of} $\xi$ \textit{to} $A$ is defined as the particle configuration $\xi_{A}$ such that for every $B \in \B_{S \times G}$
	$$
	\xi_A (B) = \xi (A \cap B).
	$$ 
	Equivalently, if $\xi := \sum_{\gamma_x \in \langle \xi \rangle} m_\xi(\gamma_x) \delta_{\gamma_x}$ then $\xi_A$ is given by the standard representation
	$$
	\xi_A = \sum_{\gamma_x \in \langle \xi \rangle \cap A} m_\xi(\gamma_x) \delta_{\gamma_x}.
	$$ To improve readability, in the following we will write $\xi_\Lambda$ instead of $\xi_{\Lambda \times G}$ for $\Lambda \in \B^0_S$.
\end{defi}

\begin{defi} The \textit{superposition} of two configurations $\sigma,\eta \in \mathcal{N}(S \times G)$ is defined as the particle configuration $\sigma \cdot \eta$ such that for every $B \in \B_{S \times G}$
	$$
	(\sigma \cdot \eta) (B) = \sigma(B) + \eta (B).
	$$
\end{defi}

In the particular cases in which $\sigma \in\mathcal{N}(\Lambda \times G)$ and $\eta\in \mathcal{N}(\Lambda^c \times G)$ for a certain $\Lambda \in \B^0_S$, the superposition $\sigma\cdot\eta$ can be thought of as a \textit{concatenation}. In these cases, 
the operations of restriction and superposition induce a natural identification between $\mathcal{N}(S \times G)$ and $\mathcal{N}(\Lambda \times G) \times \mathcal{N}(\Lambda^c \times G)$ for any given $\Lambda \in \B^0_S$. Indeed, the applications
$$
\begin{array}{rcl}
\mathcal{N}(S \times G)& \overset{r}{\longrightarrow}& \mathcal{N}(\Lambda \times G)\times \mathcal{N}(\Lambda^c \times G) \\  
\xi& \longmapsto &(\xi_{\Lambda},\xi_{\Lambda^c})
\end{array}
$$
and
$$
\begin{array}{rcl}	 
\mathcal{N}(\Lambda \times G)\times \mathcal{N}(\Lambda^c \times G) &\overset{s}{\longrightarrow}& \mathcal{N}(S \times G)\\
(\sigma,\eta) &\longmapsto& \sigma\cdot \eta
\end{array}
$$ 
are bijections and have each other as their respective inverse.

\subsubsection{Measurable structure}\label{sec:ms}

The space $\mathcal{N}(S \times G)$ is endowed with a measurable structure by considering the $\sigma$-algebra $\F$ generated by the counting events, i.e. 
\begin{equation}\label{salgebra}
\F \;=\; \sigma\left( \bigl\{ \xi \in \mathcal{N}(S \times G) : \xi(B) = k \bigr\} : k \in \N_0 \text{ and } B \in \B^0_{S\times G} \right).
\end{equation} 
Alternatively, if one considers for every $B \in \B_{S\times G}$ the counting variable 
$$
\begin{array}{rcl}
N_B : \mathcal{N}(S \times G) &\to& \N_0\\
N_B(\eta) &=& \eta(B),
\end{array}
$$
then 
$$
\F = \sigma\left( N_B : B \in \B^0_{S\times G}\right).
$$
More generally, for any $A \in \B_{S \times G}$ the $\sigma$-\textit{algebra $\F_A$ of events occurring in }$A$ is defined as the one generated by the counting events inside $A$, i.e.
\begin{eqnarray*}
	\F_A &:=& \sigma\left( \bigl\{ \xi \in \mathcal{N}(S \times G) : \xi(B) = k \bigr\} : k \in \N_0 \text{ and } B \in \B^0_A \right)\\
	&\;=& \sigma\left( N_B : B \in \B^0_A\right).
\end{eqnarray*}

The case $A=\Lambda \times G$ for $\Lambda \in \B^0_S$ is of particular relevance. First, we make the following important observation.

\begin{obs} The identification between $\mathcal{N}
	(S \times G)$ and $\mathcal{N}(\Lambda \times G) \times \mathcal{N}(\Lambda^c \times G)$ defined above is in fact a measurable isomorphism if the spaces are endowed with the $\sigma$-algebras $\F$ and $\F_{\Lambda \times G} \otimes \F_{\Lambda^c \times G}$, respectively.
\end{obs}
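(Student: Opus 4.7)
The plan is to verify measurability of both $r$ and its inverse $s$; bijectivity has already been established, so this completes the isomorphism. Since $\F$, $\F_{\Lambda\times G}$, and $\F_{\Lambda^c\times G}$ are all generated by counting cylinders $\{\zeta(B')=k\}$ on appropriate compact-closure Borel sets, and the product $\sigma$-algebra is generated by measurable rectangles, the strategy is to pull back these generators in both directions.

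For $r$, observe first that a compact subset of the subspace $\Lambda\times G$ remains compact in $S\times G$, whence $\B^0_{\Lambda\times G}\subset \B^0_{S\times G}$ and likewise for $\Lambda^c\times G$. Given a generating rectangle whose first factor is $\{\sigma:\sigma(B')=k\}$ with $B'\in \B^0_{\Lambda\times G}$, the identity $\xi_\Lambda(B')=\xi(B')$ (which follows immediately from the definition of restriction, since $B'\subset \Lambda\times G$) yields
\[
r^{-1}\bigl(\{\sigma:\sigma(B')=k\}\times \mathcal{N}(\Lambda^c\times G)\bigr)=N_{B'}^{-1}(\{k\})\in \F,
\]
and the analogous calculation handles rectangles supported in the $\Lambda^c$-factor. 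Hence $r$ is measurable.

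For $s$, fix a generator $\{\xi:\xi(B)=k\}$ of $\F$ with $B\in\B^0_{S\times G}$, and write $B_1:=B\cap(\Lambda\times G)$, $B_2:=B\cap(\Lambda^c\times G)$. The definition of superposition gives $(\sigma\cdot\eta)(B)=\sigma(B_1)+\eta(B_2)$, and therefore
\[
s^{-1}\bigl(\{\xi:\xi(B)=k\}\bigr)=\bigcup_{j+\ell=k}\bigl\{\sigma:\sigma(B_1)=j\bigr\}\times\bigl\{\eta:\eta(B_2)=\ell\bigr\},
\]
a finite union of measurable rectangles \emph{provided} the marginal counting maps $\sigma\mapsto\sigma(B_1)$ and $\eta\mapsto\eta(B_2)$ are respectively $\F_{\Lambda\times G}$- and $\F_{\Lambda^c\times G}$-measurable.

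The one delicate point is precisely this marginal measurability, because $B_1$ and $B_2$ need not themselves belong to $\B^0_{\Lambda\times G}$ and $\B^0_{\Lambda^c\times G}$: their closures may accumulate on the topological boundary of $\Lambda$, as happens whenever $B$ straddles $\partial \Lambda$. I would bypass this by exhausting: since $S$ and $G$ are locally compact and second countable, $\Lambda\times G$ admits an increasing sequence $C_n\in\B^0_{\Lambda\times G}$ with $\bigcup_n C_n=\Lambda\times G$; then $B_1\cap C_n\in\B^0_{\Lambda\times G}$ and $\sigma(B_1)=\lim_n \sigma(B_1\cap C_n)$ by monotone convergence, exhibiting $\sigma(B_1)$ as a monotone pointwise limit of $\F_{\Lambda\times G}$-measurable functions, hence measurable. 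The same argument applies to $\eta(B_2)$, and once this technical step is in place the remainder is simply the generator-pullback check described above.
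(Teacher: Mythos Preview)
The paper states this remark without proof, so there is nothing to compare your argument against; your generator-pullback strategy is the natural one and is essentially correct.

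One comment on the point you single out as delicate. Under the reading of $\B^0_A$ that is consistent with the rest of the paper --- namely, Borel subsets of $A$ whose closure in the ambient space $S\times G$ is compact (this is what is needed so that $N_B$ is finite on every Radon configuration and hence a legitimate generator of $\F_A$) --- the set $B_1=B\cap(\Lambda\times G)$ already lies in $\B^0_{\Lambda\times G}$: its closure in $S\times G$ is contained in $\overline{B}$, which is compact. The same holds for $B_2$. Your exhaustion step is therefore unnecessary.

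If instead you insist on the literal subspace reading (compact closure \emph{in} $\Lambda\times G$), then your exhaustion argument requires $\Lambda\times G$ to be $\sigma$-compact in its own topology. This can fail for general $\Lambda\in\B^0_S$: take $S=\R$, $G$ a point, and $\Lambda$ the irrationals in $[0,1]$; then $\Lambda$ has compact closure $[0,1]$ in $S$ but is not $\sigma$-compact as a subspace. Under that reading the remark itself becomes problematic, which is further evidence that the ambient-closure interpretation is the intended one.
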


Further, we introduce the natural notions of local events and observables.

\begin{defi} $\,$
	\begin{enumerate}
		\item [i.] A function $f: \mathcal{N}(S\times G) \to \R$ is called a \textit{local observable} if there exists $\Lambda \in \B^0_S$ such that $f$ is $\F_{\Lambda \times G}$-measurable, i.e. if $f(\sigma)=f(\eta)$ whenever $\sigma_{\Lambda} = \eta_{\Lambda}$.  
		\item [ii.] An event $A \in \F$ is called \textit{local} if $\mathbbm{1}_A$ is a local observable, i.e. if $A\in\bigcup_{\Lambda \in \B^0_S}\F_{\Lambda \times G}$.
		\item [iii.] An event $A \in \F$ is called $\sigma$-\textit{local} if it is the countable union of local events.
	\end{enumerate}
\end{defi}

\subsubsection{Topological structure}

Physically, two configurations are close whenever inside some large compact set each configuration is a slight deformation of the other. This means that each particle inside this compact set of one configuration can be matched to a neighboring particle of the other and vice versa. The precise definition is as follows.

\begin{defi} \ 
	
	\begin{itemize}
		\item[i.]
		Given $\delta > 0$ and $\xi,\eta \in \mathcal{N}(S \times G)$ we say that $\xi$ is $\delta$-\textit{embedded} in $\eta$ if there exists an injective application $p:[\xi] \to [\eta]$ such that $d\left( \pi_{S\times G}(\gamma_x,i) , \pi_{S\times G}(p(\gamma_x,i))\right) < \delta$ for all $(\gamma_x,i) \in [\xi]$, with $\pi_{S\times G} : (S\times G) \times \N \to S \times G$ the projection onto $S \times G$. We denote it by $\xi \preceq_{\delta} \eta$.
		
		\item[ii.] Given a particle configuration $\xi \in \mathcal{N}(S \times G)$, a compact set $K \subseteq S \times G$ and $\delta > 0$, the $(K,\delta)$-neighborhood of $\xi$ is the set
		$$
		(\xi)_{K,\delta} = \bigl\{ \eta \in \mathcal{N}(S \times G) : \xi_K \preceq_\delta \eta \text{ and }\eta_K \preceq_\delta \xi \bigr\}.
		$$
	\end{itemize}
\end{defi}

The topology of the configuration space is the one defined by these neighborhoods.
\begin{defi} The \textit{vague topology} on $\mathcal{N}(S \times G)$ is the topology generated by the basis
	$$
	\mathfrak{B} = \bigl\{ (\xi)_{K,\delta} : \xi \in \mathcal{N}(S \times G) , K \subseteq S \times G\text{ compact and }\delta > 0\bigr\}.
	$$
\end{defi}

A number of observations are in order.

\begin{obs}\label{obs:r1}\ 
	
	\begin{itemize}
		\item[(a)] $\mathcal{N}(S \times G)$ admits a metric consistent with the vague topology, under which it is a separable metric space. It is also complete whenever $G$ is compact. 
		
		\item[(b)] The $\sigma$-algebra $\F$ defined in \eqref{salgebra} is actually the Borel $\sigma$-algebra corresponding to the vague topology on $\mathcal{N}(S \times G)$.
		
		\item[(c)]  The vague topology is usually defined as the one generated by the neighborhoods
		$$
		(\xi)_{f_1,\ldots, f_n,\delta} \;=\; \bigl\{ \eta \in \mathcal{N}(S \times G) : \left|\xi(f_i)
		-\eta(f_i)\right|<\delta\;,\; i=1,\ldots,n\bigr\}
		$$
		for $\delta>0$ and $f_1,\ldots, f_n$ continuous functions vanishing outside some compact set. This definition is equivalent to ours, but we will not use it in the sequel.
	\end{itemize}
\end{obs}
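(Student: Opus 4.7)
The plan is to prove the three claims (a), (b), (c) in turn, exploiting throughout the $\sigma$-compactness of $S \times G$ that follows from its being locally compact, separable and metrizable. Fix once and for all an exhaustion $K_1 \subseteq K_2 \subseteq \cdots$ by compact subsets with $K_n$ in the interior of $K_{n+1}$ and $\bigcup_n K_n = S\times G$.

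For (a), I would construct an explicit metric by first defining, for each $n$, the Prokhorov-like pseudodistance
\[
\rho_n(\xi,\eta) \;=\; \inf\bigl\{\delta>0 : \xi_{K_n}\preceq_\delta \eta \text{ and } \eta_{K_n}\preceq_\delta \xi \bigr\}
\]
(with $\inf\emptyset = +\infty$) and then setting $d(\xi,\eta) := \sum_n 2^{-n}(\rho_n(\xi,\eta)\wedge 1)$. Subadditivity of embeddings yields the triangle inequality, and the family $\{(\xi)_{K_n,\delta}\}_{n, \delta>0}$ is cofinal in $\mathfrak{B}$, so $d$ generates the vague topology. A countable dense subset is furnished by the configurations supported on finitely many points of a countable dense subset of $S\times G$ with positive integer multiplicities. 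For completeness when $G$ is compact, one may take the exhaustion $K_n = \Lambda_n \times G$ for compacts $\Lambda_n \uparrow S$; then if $(\xi^{(k)})_k$ is $d$-Cauchy the count $\xi^{(k)}(K_n)$ is bounded uniformly in $k$ by the $S$-local finiteness, so a diagonal extraction pairs off and transports the atoms inside each $K_n$ to limits in $S\times G$, giving the sought limit configuration.

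For (b), I would establish both inclusions between $\F$ and the Borel $\sigma$-algebra of the metric constructed in (a). That each generator $\{\xi : \xi(B) = k\}$ with $B \in \B^0_{S \times G}$ is Borel is a portmanteau-type assertion for the purely atomic measures at hand, obtained by approximating $B$ from inside and outside by sets whose topological boundaries avoid the relevant atoms. Conversely, to place $(\xi)_{K,\delta}$ in $\F$, I would cover $K$ with finitely many Borel sets $B_1,\dots,B_m$ of diameter less than $\delta$ and express the embedding-matching condition defining $(\xi)_{K,\delta}$ as a finite Boolean combination of counting events on the $B_i$ and their $\delta$-fattenings. For (c), I would prove equivalence of the two bases by cross-inclusion of basic neighborhoods: given $(\xi)_{f_1,\dots,f_n,\delta}$, uniform continuity of each $f_i$ on its compact support yields $K$ and $\delta'$ such that $(\xi)_{K,\delta'}$ lies inside it, while conversely continuous compactly supported bump functions peaked at resolution $\delta$ near the atoms of $\xi_K$, used as test functions, force (via the integer-valuedness of the counts) any $\eta$ in the resulting test-function neighborhood to sit in $(\xi)_{K,\delta}$.

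The main obstacle running through all three parts is the treatment of atoms of $\xi$ lying on $\partial K_n$ or on the boundaries of the approximating Borel sets $B_i$: such atoms can jump across these boundaries under arbitrarily small perturbations and destroy any matching. The uniform remedy is to replace the compacts and Borel sets in question by slightly enlarged or shrunk versions whose topological boundaries carry no atom of the relevant configurations, an adjustment that is always available because each configuration is atomic with only countably many atoms. This boundary surgery is harmless for the generated topology and $\sigma$-algebra, and it is precisely what allows the embedding-matching conditions in (b) to be encoded by finitely many counting events and the diagonal extraction in (a) to succeed.
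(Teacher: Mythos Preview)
The paper does not prove this statement: it is presented as a \emph{Remark} collecting standard facts about the vague topology on spaces of locally finite point configurations, with no argument given. These facts are well known in the theory of random measures and point processes (the paper's reference [K] is the natural source), so there is nothing in the paper to compare your proposal against.

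Your outline is a reasonable sketch of how one would establish such facts from scratch, and the overall architecture (exhaustion by compacts, weighted sum of Prokhorov-type pseudometrics, countable dense set of finite configurations on a dense countable subset, boundary-avoidance tricks) is the standard one. A couple of points would need more care if you were to write this out fully. In (a), the triangle inequality for your $\rho_n$ is not quite immediate from ``subadditivity of embeddings'': if $\xi_{K_n}\preceq_{\delta_1}\eta$ and $\eta_{K_n}\preceq_{\delta_2}\zeta$, composing gives an embedding of $\xi_{K_n}$ into $\zeta$ only after first restricting the intermediate configuration to a slightly larger compact, so one has to be a bit careful with which compact is used at each stage (this is usually handled by working with open relatively compact sets or by the equivalent test-function formulation). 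In (b), your description of $(\xi)_{K,\delta}$ as a finite Boolean combination of counting events on a fixed cover $B_1,\dots,B_m$ is too optimistic: the $\delta$-embedding condition is not captured by counts on a single partition, and one typically needs a countable union over refining covers (or passes through the test-function characterization in (c) first, which makes (b) much cleaner). Your identification of boundary atoms as the recurring obstacle, and the remedy of perturbing the sets to avoid them, is exactly right.
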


\subsubsection{Point processes on $S \times G$}

We call any random element of $\mathcal{N}(S \times G)$ a \textit{point process} on $S \times G$. Every point process $X$ on $S \times G$ is characterized by its distribution $P_X$, which is a probability measure on $\mathcal{N}(S \times G)$; the original measure space on which the process is defined plays no role. In the sequel, we will study convergence of point processes on $S \times G$ and, in general, of probability measures on $\mathcal{N}(S \times G)$. Besides the well-known notion of weak convergence, in our work we will also consider the notion of local convergence, which we define now.

\begin{defi}\label{localconvergence}\  
	
	\begin{enumerate} 
		\item [i.] A sequence $(\mu_n)_{n \in \N}$ of probability measures on $\mathcal{N}(S \times G)$ converges \textit{locally} to a probability measure $\mu$ on $\mathcal{N}(S \times G)$ if 
		$$
		\lim_{n \rightarrow +\infty}\mu_n(f) = \mu(f)
		$$ for every bounded local function $f:\mathcal{N}(S \times G) \to \R$. We denote this by $\mu \overset{loc}{\longrightarrow} \mu$. 
		
		\item [ii.] Likewise, a sequence $(X_n)_{n \in \N}$ of point processes on $S \times G$ converges locally to a point process $X$ on $S \times G$ if $P_{X_n} 
		\overset{loc}{\longrightarrow}P_X$. We denote this by $X_n \overset{loc}{\longrightarrow} X$. 
	\end{enumerate}
\end{defi}

In the general setting local observables need not be continuous, so that local and weak notions of convergence may not coincide in general. However, since uniformly continuous functions can be approximated arbitrarily well in the supremum norm by local observables, we have that local convergence is always stronger than weak convergence. Both notions coincide in fact whenever $S$ and $G$ are both countable and discrete. 

\subsection{Gas models}

Our models consist of three ingredients: (1) a configuration space, (2) an underlying ``free'' measure on the configuration space, and (3) a notion of interaction encoded in Hamiltonians. The first ingredient was the object of the preceding subsection; here we discuss the remaining two and give the general definition of gas model.

\subsubsection{Poisson processes on $S \times G$ and the free measure}

In all gas models the underlying free measure will be a Poisson distribution on $\mathcal{N}(S \times G)$ for some appropriate intensity. We give the definition of Poisson distribution below.

\begin{defi} Let $\nu$ be a measure on $(S \times G,\B_{S \times G})$ of $S$-locally finite allocation [i.e. such that $\nu(\Lambda \times G) < +\infty$ for every $\Lambda \in \mathcal{B}^0_S$]. 
	\begin{enumerate}
		\item [i.] The \textit{Poisson distribution with intensity} $\nu$ is the unique measure $\pi^\nu$ on $\mathcal{N}(S \times G)$ which satisfies
		$$
		\pi^\nu ( \{ \xi \in \mathcal{N}(S \times G) : \xi(B_i) = k_i \text{ for all }i=1,\dots,n \} ) = \prod_{i=1}^n \frac{e^{-\nu(B_i)} \left(\nu(B_i)\right)^{k_i}}{k_i!}
		$$ for all $k_1,\dots,k_n \in \N_0$, disjoint $B_1,\dots,B_n \in \B^0_{S \times G}$ and $n \in \N$.
		\item [ii.] A point process $X$ is called a \textit{Poisson process} with intensity $\nu$ if it is distributed according to $\pi^\nu$, i.e. for every finite collection of disjoint sets $B_1,\dots,B_n \in \B^0_{S \times G}$ the random variables $X(B_1),\dots,X(B_n)$ are independent and Poisson-distributed with respective means $\nu(B_1),\dots,\nu(B_n)$.
	\end{enumerate}
\end{defi}

\subsubsection{Hamiltonians}

In this article we aim at treating general gas models, including also contour ensembles. The latter are not amenable to a standard Gibbsian description. Rather, the energy cost of particle configurations in these models is described in terms of a Hamiltonian prescription, according to the following formal definition.

\begin{defi} \ 
	
	\begin{itemize}
		\item[i.] A \textit{Hamiltonian prescription} on $\mathcal{N}(S\times G)$ is a family of measurable functions 
		$$
		H = \bigl\{H_{\Lambda|\eta} : \mathcal{N}(\Lambda \times G)  \to (-\infty,+\infty]\;: \Lambda\in 
		\B^0_S\,, \eta\in \mathcal{N}(S \times G)\bigr\}.
		$$ 
		The function $H_{\Lambda|\eta}$ is called the \textit{local Hamiltonian on $\Lambda$ with boundary condition $\eta$}. 
		\item[ii.] Given a Hamiltonian prescription $H$, $\Lambda\in \B^0_S$ and a configuration $\eta\in \mathcal{N}(\Lambda^c \times G)$, the \textit{energy leap in $\Lambda$ relative to $\eta$} is the function $\Delta E_{\Lambda|\eta} : \Lambda \times G \to (-\infty,+\infty]$ defined by 
		$$
		\Delta E_{\Lambda|\eta} (\gamma_x) \;=\; \left\{\begin{array}{ll}H_{\Lambda|\eta}( \eta_{\Lambda \times G} + \delta_{\gamma_x} ) - H_{\Lambda|\eta}(\eta_\Lambda) & \text{ if $H_{\Lambda|\eta}(\eta_\Lambda) < +\infty$}\\ \\ +\infty & \text{ otherwise\;.}\end{array}\right.
		$$
	\end{itemize}
\end{defi}

The local Hamiltonian $H_{\Lambda|\eta}$ measures the energy cost 
of inserting configurations in $\Lambda$ if surrounded by the configuration $\eta$. The energy leap $\Delta E_{\Lambda|\eta}$ represents the energy cost of placing an additional particle $\gamma_x$ inside $\Lambda$ when in the presence of the configuration $\eta$. In particular, we may define the \textit{impact relation} $\rightharpoonup$ on $S \times G$ by the rule 
\begin{equation}\label{eq:rob1}
	\tilde{\gamma}_y \rightharpoonup \gamma_x \Longleftrightarrow \exists\,\, \Lambda \in \B^0_S \text{ and }\eta \in \mathcal{N}(S\times G) \text{ with }\Delta E_{\Lambda|\eta}(\gamma_x) \neq \Delta E_{\Lambda|\eta + \delta_{\tilde{\gamma}_y}} (\gamma_x).
	\end{equation}
If $\tilde{\gamma}_y \rightharpoonup \gamma_x$ we say that $\tilde{\gamma}_y$ has an impact on $\gamma_x$. This relation is not necessarily symmetric. 

\subsubsection{Definition of gas model}

Every gas model is defined by a pair $(\nu,H)$, where $\nu$ is an intensity measure and $H$ a Hamiltonian prescription. The former describes how particles would be
distributed if there would be no interaction among them, and the latter specifies this interaction. Our aim of considering general point processes 
forces us to list a relatively long list of assumptions. As illustrated below, these assumptions are naturally satisfied by usual examples.

\begin{defi} \label{assump} A \textit{gas model} on $S \times G$ is a pair $(\nu,H)$ verifying the following conditions:
	\begin{enumerate}
		\item \textit{$S$-locally finite allocation}. For all $\Lambda \in \mathcal{B}^0_S$ the measure $\nu$ satisfies $\nu(\Lambda \times G) < +\infty$.
		\item \textit{Diluteness condition:} $H_{\Lambda|\eta}( \emptyset ) < +\infty$ for every $\Lambda \in \B^0_S$ and $\eta \in \mathcal{N}(S \times G)$.
		\item \textit{Existence of infinite-volume energy leap function:} The limit 
		$$
		\Delta E_\eta (\gamma_x) := \lim_{\Lambda \nearrow S} \Delta E_{\Lambda|\eta}(\gamma_x)
		$$ exists for all $\gamma_x \in S\times G$ and $\eta \in \mathcal{N}(S \times G)$. 
		\item \textit{Bounded energy loss and allowance of particles:}
		$$
		-\infty < \Delta E := \inf_{\Lambda \in \B^0_S} \left[ \inf_{\substack{ \gamma_x \in \Lambda \times G \\ \eta \in \mathcal{N}(S \times G)}} \Delta E_{\Lambda|\eta} (\gamma_x) \right] < +\infty.
		$$
		
		\item \textit{Integrable interaction range:} Let the \textit{interaction range} of $B \in \B_{S \times G}$ be the set
		$$
		I(B)= \{ \tilde{\gamma}_y \in S \times G : \exists\,\, \gamma_x \in B \text{ such that } \tilde{\gamma}_y \rightharpoonup \gamma_x \}\;.
		$$ Then $I(B)$ is measurable and $\nu\bigl(I(\Lambda \times G)\bigr) < +\infty$ for each $\Lambda \in \mathcal{B}^0_S$.
		
		\item \textit{Measurability of local Hamiltonians:} Given $\Lambda \in \B^0_S$ and $\gamma_x \in S \times G$,
		\begin{enumerate}
			\item [i.] The application $(\xi,\eta)\to H_{\Lambda|\eta}(\xi)$ is $(\F_{\Lambda \times G} \otimes \F_{(\Lambda^c \times G) \cap I(\Lambda \times G)})$-measurable.
			\item [ii.] The application $\eta \mapsto \Delta E_{\eta} (\gamma_x)$ is $\F_{I(\gamma_x)}$-measurable.
		\end{enumerate}
		
	\end{enumerate}
\end{defi}

These conditions are satisfied by all physical systems of interest we know of (although for some discrete systems, like the Ising model for example, one may need to consider an alternative lattice gas representation for these to hold), with the exception of condition (4). This condition is violated, for example, by interactions of Lennard-Jones type. Indeed, if one considers a ring of particles of a radius for which the L-J potential is negative, then the addition of a particle at the center of the ring would lead to an energy leap that becomes arbitrarily low with the (potentially unbounded) number of particles in the ring, thus yielding $\Delta E = -\infty$. This suggests that, except for systems with purely nonnegative interactions, the validity of the leftmost inequality in (4) is tantamount to the existence of some sort of hard-core requirement preventing arbitrarily large amounts of particles inside bounded regions.

\subsubsection{Gas kernels}

Every gas model defines a family of probability measures on $\mathcal{N}(S \times G)$, called \textit{gas kernels}, which describe the local behavior of the system in bounded volumes. We introduce this family of gas kernels below.

\begin{defi}\label{defibgd} 
	The \textit{gas kernel of }$(\nu,H)$ \textit{on the volume} $\Lambda \in \mathcal{B}^0_S$ \textit{with boundary condition} $\eta \in \mathcal{N}( S \times G)$ is the probability measure $\mu_{\Lambda|\eta}$ on $\mathcal{N}(S \times G)$ given by 
	\begin{equation}\label{Gibbs1}
	\mu_{\Lambda|\eta} = \omega_{\Lambda|\eta} \times \delta_{\eta_{\Lambda^c}}
	\end{equation} where we make the identification $\mathcal{N}(S \times G)=\mathcal{N}( \Lambda \times G) \times \mathcal{N}(\Lambda^c \times G)$ and $\omega^\eta_\Lambda$ denotes the probability measure on $\mathcal{N}(\Lambda \times G)$ defined through the relation
	$$
	d\omega_{\Lambda|\eta} = \frac{e^{-H_{\Lambda|\eta}}}{Z_{\Lambda|\eta}} d\pi^\nu_\Lambda
	$$ with $\pi^\nu_\Lambda$ denoting the Poisson distribution on $\mathcal{N}(\Lambda \times G)$ with intensity $\nu_{\Lambda \times G}$ and
	$$
	Z_{\Lambda|\eta}=\displaystyle{\int_{\mathcal{N}(\Lambda \times G)} e^{-H_{\Lambda|\eta}(\sigma)} d\pi^\nu_\Lambda(\sigma)}
	$$ being the normalization constant. Notice that by assumptions (1)-(2) in Definition \ref{assump} we have
	$$
	Z_{\Lambda|\eta} \geq e^{-H_{\Lambda|\eta}(\emptyset)}\pi^\nu ( N_{\Lambda \times G} = 0 ) = e^{- (\nu(\Lambda \times G) + H_{\Lambda|\eta}(\emptyset)) } > 0
	$$ so that $\omega_{\Lambda|\eta}$ is well defined.
\end{defi}



The measures $\mu_{\Lambda|\eta}$ describe the local behavior of the system inside the volume $\Lambda$ when the configuration outside $\Lambda$ is fixed as $\eta$. The true objects of interest for us are, however, the possible local limits of these along suitable boundary conditions. 

\begin{defi}\label{gas1} Let $(\nu,H)$ be a gas model.
		\begin{enumerate}
			\item [i.] We say that a configuration $\eta \in \mathcal{N}(S \times G)$ has \text{finite $H$-interaction range} whenever $\eta(I(\Lambda \times G)) < +\infty$ for every $\Lambda \in \B^0_S$, i.e. if $\eta$ has only finitely many particles interacting with those in any given bounded volume. 
			\item [ii.] A probability measure $\mu$ on $\mathcal{N}(S \times G)$ is called a $(\nu,H)$-\textit{gas measure} if there exist $\eta \in \mathcal{N}(S \times G)$ with finite $H$-interaction range and $(\Lambda_n)_{n \in \N} \subseteq \B^0_S$ with $\Lambda_n \nearrow S$ such that  
			\begin{equation}\label{eq:r10}
			\mu_{\Lambda_n|\eta} \overset{loc}{\longrightarrow} \mu.
			\end{equation}
		\end{enumerate}
	\end{defi}
	
	By condition (5) in Definition \ref{assump}, configurations with  an infinite $H$-interaction range are not physically admissible for the system. Thus, allowing these as boundary conditions in \eqref{eq:r10} may lead to pathological limits of no physical meaning. Indeed, it is not hard to device examples in which $\delta_\emptyset$, the $\delta$-measure on the empty configuration, can be obtained as the local limit in \eqref{eq:r10} along a boundary condition with an infinite $H$-interaction range. This is why in the definition of gas measure we disregard this type of boundary conditions.
	
	\subsubsection{Gibbsian gas models}
	
	Gas measures possess a clear physical interpretation for the particular case of Gibssian models. These models are defined by Hamiltonian prescriptions in which local interactions do not depend on the particular volume under consideration. 
	
	\begin{defi} We say that a model $(\nu,H)$ is \textit{Gibbsian} if the Hamiltonian prescription $H$ satisfies the consistency property 
		\begin{equation}\label{eq:r20}
		H_{\Lambda|\eta}(\sigma) = H_{\Delta|\sigma_{\Lambda} \cdot \eta_{\Lambda^c }}(\sigma_{\Delta}) + H_{(\Lambda \setminus \Delta)| \emptyset_{\Lambda}\cdot \eta_{\Lambda^c }}(\sigma_{\Lambda \setminus \Delta})
		\end{equation} 
		for every $\Delta \subseteq \Lambda \in \B^0_S$, $\sigma \in \mathcal{N}(\Lambda \times G)$ and $\eta \in \mathcal{N}(S\times G)$.
	\end{defi}
	
	Notice that condition \eqref{eq:r20} implies that, as we anticipated, the energy leap $\Delta E_{\Lambda|\eta} (\gamma_x)$ does not depend on $\Lambda$. In particular, assumption (3) in Definition \ref{assump} immediately holds for Hamiltonians verifying \eqref{eq:r20}. Furthermore, condition \eqref{eq:r20} yields gas kernels satisfying a consistency relation of the form
		\begin{equation}\label{eq:r21}
		\mu_{\Lambda|\eta} \;=\; \int \mu_{\Delta|\xi} \, d\mu_{\Lambda|\eta}(\xi)
		\end{equation}
		for every pair of volumes $\Delta \subseteq \Lambda \in \B^0_S$ and $\eta \in \mathcal{N}(S\times G)$.
	A family of gas kernels satisfying \aref{eq:r21} is called a \textit{specification}.
	The infinite-volume measures relevant to Gibbsian models are usually introduced through the $\Lambda\to S$ version of \aref{eq:r21}.
	
	\begin{defi}\label{Gibbs2} 
		Let $(\nu,H)$ be a Gibbsian model. A probability measure $\mu$ on $\mathcal{N}( S \times G)$ is called a \textit{Gibbs measure} for $(\nu,H)$ if for every $\Lambda \in \B^0_S$
		\begin{equation}\label{eq:r22}
		\mu = \int \mu_{\Lambda|\eta} \,d\mu(\eta).
		\end{equation}	
	\end{defi}
	
	Thus, Gibbs measures are precisely those infinite-volume measures which are consistent (in the sense of \eqref{eq:r21}) with the local description of the model given by its specification. Hence, we may think of them as the measures describing the global states of our system. 
	The relation between Definitions \ref{Gibbs2} and \ref{gas1} follows from the fact that, for a large class of Gibssian models (see \cite{SJS} for details), any local limit of gas kernels as put in \eqref{eq:r10} is in fact a Gibbs measure for $(\nu,H)$ and, furthermore, all extremal Gibbs measures can be obtained in this way.
	
	In practice, almost every model of physical interest is Gibbsian. There is, however, one important exception: contour ensembles. These are of particular relevance since they constitute one of the main tools for studying Gibbsian systems in the low-temperature (or condensation) regime. We give further discussion on Gibbsian and non-Gibbsian models in the following section.   
	
\subsection{Examples of gas models}\label{secexamples}
	
	We now present some examples of models to illustrate the definitions of the previous section. Later in Section \ref{secapps} we will also use these as ground for applications of our results. 
	
	\subsubsection{Models given by an interaction potential}
	
	The typical way in which Hamiltonians satisfying \eqref{eq:r20} are specified is via an \textit{interaction potential}, i.e. a family $\Phi = (\Phi^{(n)})_{n \in \N}$ of symmetric functions $\Phi^{(n)} : (S \times G)^n \rightarrow (-\infty,+\infty]$ subject to appropriate measurability and summability requirements so that the local Hamiltonians
	\begin{equation}\label{eq:hampot}
	H_{\Lambda|\eta}(\sigma):= \sum_{n\ge 1\atop m\ge 0} \frac{1}{n!m!}
	\sum_{(\gamma_x^{(1)},i_1),\dots,(\gamma_x^{(n)},i_n) \in [\sigma]
		\atop (\tilde{\gamma}_y^{(1)},j_1),\dots, (\tilde{\gamma}_y^{(m)},j_m) \in [\eta_{\Lambda^c}]}
	\Phi^{(n+m)}\bigl(\gamma_x^{(1)},\dots,\gamma_x^{(n)},\tilde{\gamma}_y^{(1)},\dots,\tilde{\gamma}_y^{(m)}\bigr)
	\end{equation}
	are well-defined and satisfy all the pertinent conditions in Definition \ref{assump}. The resulting Hamiltonian prescription is said to be \textit{specified} by $\Phi$ and will be often denoted by $H^\Phi$. The function $\Phi^{(n)}$ is called the $n$-\textit{body interaction} of the potential $\Phi$. 
	
	We say that a model $(\nu,H)$ is given by an interaction potential if $H=H^\Phi$ for some $\Phi$. It follows from \eqref{eq:hampot} that any such model is Gibbsian. Furthermore, we have that:
	
	\begin{enumerate}
		\item [$\bullet$] $H^\Phi_{\Lambda|\eta}(\emptyset) = 0$ for any $\Lambda \in \B^0_S$ and $\eta \in \mathcal{N}(S \times G)$.
		\item [$\bullet$] For any $\gamma_x \in S \times G$ and $\eta \in \mathcal{N}(S \times G)$ the energy leap $\Delta E^\Phi_{\eta} (\gamma_x)$ takes the form
		\begin{equation}
		\label{eq:phileap}
		\Delta E^\Phi_{\eta} (\gamma_x) = \sum_{m \geq 0} \frac{1}{m!} \sum_{(\tilde{\gamma}_y^{(1)},j_1),\dots, (\tilde{\gamma}_y^{(m)},j_m) \in [\eta]}
		\Phi^{(m+1)}\bigl(\gamma_x,\tilde{\gamma}_y^{(1)},\dots,\tilde{\gamma}_y^{(m)}\bigr).
		\end{equation} 
		\item [$\bullet$] The bounded energy loss condition in (4) of Definition \ref{assump} in this case reduces to the existence of a constant $C > 0$ such that 
		$$
		\sum_{m \geq 0} \sum_{(\tilde{\gamma}_y^{(1)},j_1),\dots, (\tilde{\gamma}_y^{(m)},j_m) \in [\eta]}
		\Phi^{(m+1)}\bigl(\gamma_x,\tilde{\gamma}_y^{(1)},\dots,\tilde{\gamma}_y^{(m)}\bigr) \geq -C
		$$ for every $\gamma_x \in S \times G$ and $\eta \in \mathcal{N}(S \times G)$. This condition is well-known and standard in the study of gas systems (see {\cite[Section~1.2]{DR}}).
	\end{enumerate} 
	
%
Models specified by an interaction potential are the most common among gas models. Below we give some examples.
	
	\subsubsection{The discrete Widom-Rowlinson model}\label{sec:dwr}
	
	It is a classical hardcore interaction model, first introduced by Lebowitz and Gallavotti in \cite{L}. It involves particles of two types, say $(+)$-particles and $(-)$-particles, located at the sites of the discrete lattice $\Z^d$ for $d \geq 1$. The interaction between particles allows at most one particle site and forbids any two particles of different type from being within a certain fixed distance $k \in \N$ of each other. The corresponding gas model is defined by the following ingredients: 
	\begin{enumerate}
		\item [$\bullet$] Location space $S=\Z^d$ and spin set $G= \{+,-\}$.
		\item [$\bullet$] Intensity measure given by 
		\begin{equation}
		\label{eq:dwrint}
		\nu = \lambda_+ \cdot c_{\Z^d} \times \delta_+ + \lambda_- \cdot c_{\Z^d} \times \delta_-,
		\end{equation} where $\lambda_+,\lambda_- > 0$ are two fixed parameters known as the \textit{fugacities} of $(\pm)$-particles respectively and $c_{\Z^d}$ denotes the counting measure on $\Z^d$. 
		\item [$\bullet$] Hamiltonian prescription $H$ specified by the potential $\Phi = \Phi^{(2)}$ given by
		\begin{equation}\label{eq:dwr2}
		\Phi^{(2)}(\gamma_x,\tilde{\gamma}_y):= \left\{ \begin{array}{ll} +\infty &\text{if }x=y \\ \\ +\infty & \text{if }0< \|x-y\|_\infty \leq k \text{ and }\gamma \neq \tilde{\gamma}\\ \\ 0 &\text{otherwise.}\end{array}\right.
		\end{equation}
	\end{enumerate} Alternatively, one could define the model by considering instead the pair $(c_{\Z^d \times \{+,-\}},\tilde{H})$, where $c_{\Z^d \times \{+,-\}}$ is the counting measure on $\Z^d \times \{+,-\}$ and $\tilde{H}$ is specified by the potential $\tilde{\Phi} = (\tilde{\Phi}^{(1)},\tilde{\Phi}^{(2)})$ with $\tilde{\Phi}^{(2)}$ as in \eqref{eq:dwr2} and 
	$$
	\tilde{\Phi}^{(1)}(\gamma_x) = \left\{\begin{array}{ll}-\log \lambda_+ & \text{if $\gamma=+$}\\ \\ -\log \lambda_- &\text{if $\gamma=-$.}\end{array}\right.
	$$ Both representations are \textit{equivalent} in the sense that they produce the same gas kernels. Nevertheless, for our analysis it will be more convenient to adopt the first representation. The reason for this choice will be explained later in Section \ref{secapps}. We adopt this representation also in the remaining examples.
	 
	
	\subsubsection{The continuum Widom-Rowlinson model} 
	In the continuum version of the model, particles are now located throughout the entire Euclidean space $\R^d$ and the interaction forbids particles of different type from being within a certain distance $r > 0$ of each other. It was originally introduced by Widom and Rowlinson in \cite{WR} and later studied in \cite{CCK,RU}.
	Its formal ingredients are:
	\begin{enumerate}
		\item [$\bullet$] Location space $S=\R^d$ and spin set $G= \{+,-\}$.
		\item [$\bullet$] Intensity measure 
		$$
		\nu:= \lambda_+ \cdot \mathcal{L}^d \times \delta_+ + \lambda_- \cdot \mathcal{L}^d \times\delta_-,
		$$ where $\mathcal{L}^d$ is the Lebesgue measure on $\R^d$. 
		\item [$\bullet$] Hamiltonian prescription $H$ specified by the potential $\Phi=\Phi^{(2)})$ given by  
		\begin{equation}\label{eq:cwr2}
		\Phi^{(2)}(\gamma_x,\tilde{\gamma}_y):= \left\{ \begin{array}{ll} +\infty & \text{if } \|x-y\|_\infty \leq r \text{ and }\gamma \neq \tilde{\gamma}\\ \\ 0 &\text{otherwise.}\end{array}\right.
		\end{equation} 
	\end{enumerate} Notice that the first term in \eqref{eq:dwr2} excluding multiple particles in one site is now missing from \eqref{eq:cwr2}. This is because the Poisson distribution $\pi^{\nu}$ already assigns zero probability to configurations with more than one particle per site and so this term becomes unnecessary. 
	
	\subsubsection{The Widom-Rowlinson model with generalized interactions}
	
	Several generalizations of the Widom-Rowlinson model are worth looking into. One interesting possibility is to consider a model in which nearby pairs of particles of opposite type are not necessarily forbidden, but merely discouraged, and also  intra-species repulsion terms are included. Such generalization is defined through  decreasing functions $h,j_\pm: \R^+ \rightarrow [0,+\infty]$ with bounded support, by replacing \eqref{eq:dwr2}-\eqref{eq:cwr2} with the $2$-body interaction  
	$$
	\Phi^{(2)} (\gamma_x,\tilde{\gamma}_y)= \left\{\begin{array}{ll}h(\|x-y\|_\infty) & \text{ if $\gamma \neq \tilde{\gamma}$} \\ \\ j_-(\|x-y\|_\infty) & \text{ if $\gamma=\tilde{\gamma}=-$}\\ \\j_+(\|x-y\|_\infty) & \text{ if $\gamma=\tilde{\gamma}=+$}.
	\end{array}\right.
	$$
	We call $(h,j_-,j_+)$ the \textit{repulsion vector}. 
	The original continuum Widom-Rowlinson model is obtained by setting $h:= (+\infty) \mathbbm{1}_{[0,r]}$ and $j_\pm \equiv 0$. The discrete version corresponds to the same choice of $h$ and $j_\pm := (+\infty) \mathbbm{1}_{\{0\}}$. We refer to \cite{GH} where these type of generalizations were investigated.
	
	\subsubsection{The thin rods model in $\Z^2$}\label{sec:dtrm}
	
	Given $k \in \N$ we consider a system of hard rods in $\R^2$ of zero width and length $2k$ whose centers are located at the sites of $\Z^2$. Each rod has an orientation specified by an angle $\gamma \in [0,\pi)$ with respect to the $x$-axis, and the interaction forbids any two rods to intersect. More precisely, if for $r > 0$ we set
	$$
	L_\gamma^r \;:=\; \{ t \cdot (\cos\gamma, \sin \gamma) : t \in [-r,r] \}
	$$ 
	then the thin rods model in $\Z^2$ is defined by:
	\begin{enumerate}
		\item [$\bullet$] Location space $S=\Z^2$ and spin set $G= [0,\pi)$.
		\item [$\bullet$] Intensity measure $\nu := \lambda \cdot c_{\Z^2} \times \rho$, where $\lambda > 0$ is called the \textit{fugacity} of rods and $\rho$ is a probability measure on $G$ called the \textit{orientation measure}. 
		\item [$\bullet$] Hamiltonian prescription $H$ specified by the potential $\Phi=\Phi^{(2)}$ given by   
		\begin{equation}\label{tru}
		\Phi^{(2)}(\gamma_x,\tilde{\gamma}_y) := \left\{ \begin{array}{ll} +\infty &\text{if }(L^k_\gamma + x) \cap (L^k_{\tilde{\gamma}} + y) \neq \emptyset\\ 0 &\text{otherwise.}\end{array}\right.
		\end{equation}
	\end{enumerate}
	Of particular interest to us is the case when the orientation measure is given by
	\begin{equation} \label{eq:om}
	\rho = p \delta_{0} + (1-p)\delta_{\frac{\pi}{2}}
	\end{equation} for some $p \in (0,1)$. This model is identical to the discrete Widom-Rowlinson model, with the exception of an additional repulsion term between particles of the same type. Indeed, \mbox{by identifying} $G$ with $\{+.-\}$ we have that $\nu$ equals \eqref{eq:dwrint} for $\lambda_+:=p\lambda$ and $\lambda_-:= (1-p)\lambda$, while $\Phi^{(2)}$ in \eqref{tru} can be rewritten as $\Phi^{(2)}=\Phi^{(2)}_{WR} + \Phi_*^{(2)}$, where $\Phi^{(2)}_{WR}$ is as in \eqref{eq:dwr2} and 
		\begin{equation} \label{eq:w}
		\Phi_*^{(2)}(\gamma_x,\tilde{\gamma}_y) = \left\{\begin{array}{ll} +\infty & \text{ if $\gamma=\tilde{\gamma}=+$ and $|x_1-y_1|\leq k$}\\ \\ +\infty & \text{ if $\gamma=\tilde{\gamma}=-$ and $|x_2-y_2|\leq k$}\\ \\ 0 & \text{otherwise}.
		\end{array}\right.
		\end{equation} 
	We call this particular system the \textit{nematic thin rods model}. More details about this model can be found in \cite{DG,GD} and references therein.
	
	\subsubsection{The thin rods model in $\R^2$} Similar to the previous model, the rod centers are now located at arbitrary points of $\R^2$ and the rod lengths are $2r$ for some fixed $r > 0$ which is not necessarily an integer. The model is formally defined by:
	\begin{enumerate}
		\item [$\bullet$] Location space $S=\R^2$ and spin set $G= [0,\pi)$.
		\item [$\bullet$] Intensity measure $\nu = \lambda\cdot \mathcal{L}^2 \times \rho$, where $\lambda > 0$ and $\rho$ is the orientation measure.
		\item [$\bullet$] Hamiltonian prescription $H$ specified by the potential $\Phi=\Phi^{(2)}$ where 
		$$
		\Phi^{(2)}(\gamma_x,\tilde{\gamma}_y) := \left\{ \begin{array}{ll} +\infty &\text{if }(L^r_\gamma + x) \cap (L^r_{\tilde{\gamma}} + y) \neq \emptyset\\ 0 &\text{otherwise.}\end{array}\right.
		$$ 
	\end{enumerate}
	For $\rho$ as in \eqref{eq:om}, the model is equivalent to the continuum Widom-Rowlinson model since $\pi^{\nu}$ assigns zero probability to configurations for which $\Phi_*^{(2)}$ in \eqref{eq:w} would be nonzero. Thus, both the Widom-Rowlinson model and the nematic thin rods model have the same continuum version. We refer to \cite{BKL} where the model with a finite number of orientations was studied.
	
	\subsubsection{The Peierls contours model} To conclude, we show an example of a model which is not Gibbsian, but still fits into our framework: the Peierls contours model. It was first presented by Peierls in \cite{Peierls36} to study the Ising model at low temperature (see also \cite{Griffiths64}). For simplicity, we focus only in the $2$-dimensional case. 

The Ising model is the lattice system on the configuration space $\{-1,+1\}^{\Z^2}$ defined by the set of finite-volume specifications $\{ \mu^I_{\Lambda|\eta} : \Lambda \in \B^0_{\Z^2} , \eta \in \{-1,+1\}^{\Z^2}\}$ given by
$$
\mu^{I}_{\Lambda|\eta}(\sigma) = \frac{\mathbbm{1}_{\{\sigma_{\Lambda^c} \equiv \eta_{\Lambda^c}\}}}{Z^I_{\Lambda|\eta}} e^{-H^I_{\Lambda|\eta}(\sigma_\Lambda)},
$$ where 
\begin{equation}
\label{eq:ising}
H^I_{\Lambda|\eta}(\sigma_\Lambda):= -\frac{\beta}{2}\sum_{\substack{x,y \in \Lambda \\
		\|x-y\|_2=1}} \sigma(x)\sigma(y) - \beta \sum_{ \substack{x \in \Lambda, y \notin \Lambda \\ \|x-y\|_2=1}} \sigma(x)\eta(y)
\end{equation} for a fixed parameter $\beta > 0$ known as the \textit{inverse temperature}. Let us observe that, since the interaction has range one, the boundary condition $\eta$ is involved in \eqref{eq:ising} only through its values on the external boundary $\partial \Lambda$ of $\Lambda$ defined as
$$
\partial \Lambda := \{y \notin \Lambda^c : d_2(y,\Lambda)=1\}.
$$ Of particular interest are the boundary conditions $+$ and $-$, corresponding to $\eta(x)=+1$ and $\eta(x)=-1$ for all $x \in \Z^2$, respectively. It can be seen that the local limits
$$
\mu^+ := \lim_{\Lambda \nearrow \Z^2} \mu^I_{\Lambda|+} \hspace{2cm}\text{ and }\hspace{2cm}\mu^-:= \lim_{\Lambda \nearrow \Z^2} \mu^I_{\Lambda|-}
$$ both exist and constitute the unique extremal Gibbs measures of the model for a fixed $\beta$, in the sense of Definition \ref{Gibbs2} (see \cite{GHM} and references therein for details). If $\beta$ is such that $\mu^+$ and $\mu^-$ do not coincide, we say that a \textit{phase transition} occurs at \mbox{inverse temperature $\beta$.} Peierls showed the existence of a phase transition for all sufficiently large values of $\beta$ by considering the following geometric description of configurations in terms of contours.

We begin by fixing $+$ as the boundary condition and letting $\Delta \in \B^0_{\Z^2}$ be a square. Now, consider $\Z^2_*:= \Z^2 + (\frac{1}{2},\frac{1}{2})$, the dual lattice of $\Z_2$. Given an edge $e$ joining two neighboring sites in $\Z^2$, let $e_*$ denote the unique edge joining neighboring sites in $\Z^2_*$ which is orthogonal to $e$. We call $e_*$ the \textit{dual edge} of $e$. Furthermore, consider:
\begin{enumerate}
	\item [$\bullet$] $e(\overline{\Delta})$, the set of edges in $\Z^2$ with at least one endpoint in $\Delta$.
	\item [$\bullet$] $e_*(\overline{\Delta}):=\{e_* : e \in e(\overline{\Delta})\}$, the set of dual edges of $e(\overline{\Delta})$.
	\item [$\bullet$] $\Delta_*$, the set of sites in $\Z^2_*$ which are endpoints of edges in $e_*(\overline{\Delta})$. 
\end{enumerate} Given a configuration $\sigma$ satisfying the boundary condition $+$ outside $\Delta$, let $D_\sigma$ denote the set of dual edges $e_* \in e_*(\overline{\Delta})$ such that $e$ joins two sites $x,y$ with different spin, i.e. $\sigma(x)\sigma(y)=-1$. With a little work it is possible to show that the edges in $D_\sigma$ \mbox{join up to} form closed curves (which may contain loops). This set of curves can be decomposed into connected components $\gamma_1,\dots,\gamma_n$. We call any of these components $\gamma_i$ a \textit{contour}, and write $\Gamma_\sigma:=\{\gamma_1,\dots,\gamma_n\}$ for the set of contours of $\sigma$. It can be seen that the assignation $\sigma \mapsto \Gamma_\sigma$ is in fact a bijection: given a finite family $\Gamma$ of mutually disjoint contours contained in $e_*(\overline{\Delta})$, there exists a unique configuration $\sigma_\Gamma$ satisfying the boundary condition $+$ in $\Delta^c$ which has $\Gamma$ as its set of contours. Furthermore, if $|\gamma|$ denotes the number of edges in $\gamma$, then for any such $\sigma$ we have
\begin{equation}
\label{eq:iscr}
\mu^I_{\Delta|+}(\sigma) = \frac{1}{W_\Delta} e^{-2\beta \sum_{\gamma \in \Gamma_\sigma}|\gamma|},
\end{equation} where $W_\Delta$ is a normalizing constant depending solely on $\Delta$. Thus, \mbox{whenever $\Delta$ is a square,} with \eqref{eq:iscr} we obtain an alternative representation of $\mu^I_{\Delta|+}$ in terms of a \mbox{system of contours} interacting by exclusion. In the current framework of gas models, this system is defined by setting:

\begin{enumerate}
	\item [$\bullet$] The dual lattice $\Z^2_*$ as the location space $S$. 
	\item [$\bullet$] The set of contours rooted at the origin $0_*:=(\frac{1}{2},\frac{1}{2}) \in \Z^2_*$ as the spin set $G$. Here, we say that a contour $\gamma$ is \textit{rooted} at $x \in \Z^2_*$ if $x$ is the smallest site belonging to $\gamma$ (with respect to the lexicographical order). Thus, we interpret any $\gamma_x \in \Z^2_* \times G$ as the contour shape $\gamma$ rooted at $x$.
	\item [$\bullet$] The intensity measure $\nu$ given for each $\gamma_x \in S \times G$ by $\nu(\gamma_x) := e^{-\rojo{2}\beta|\gamma_x|}$.
	\item [$\bullet$] For each $\Lambda \in \B^0_{\Z^2_*}$ and $\eta \in \mathcal{N}(\Z^2_* \times G)$, the Hamiltonian $H_{\Lambda|\eta}$ specified as in \eqref{eq:hampot} but for the \textit{local} potential $\Phi_{\Lambda}=(\Phi^{(1)}_\Lambda,\Phi^{(2)})$ given by
	$$
	\Phi^{(2)}(\gamma_x,\tilde{\gamma}_x) = \left\{\begin{array}{ll}+\infty & \text{ if $\gamma_x \cap \tilde{\gamma}_y \neq \emptyset$} \\ \\ 0 & \text{ otherwise}\end{array}\right.
	$$ and 
	$$
	\Phi^{(1)}_{\Lambda}(\gamma_x) = \left\{\begin{array}{ll} +\infty & \text{ if $\gamma_x \cap (\Z^2_* - \Lambda) \neq \emptyset$} \\ \\ 0 & \text{ otherwise.}\end{array}\right.
	$$ The interaction term $\Phi^{(2)}$ is responsible for the exclusion among different contours, while the term $\Phi^{(1)}_\Lambda$ bans those contours which are not contained in $\Lambda$.
	\end{enumerate}
The resulting pair $(\nu,H)$ is called the \textit{Peierls contours model}. 
The main physical interest of this model lies in the fact that phase transitions in the Ising model at low temperatures can be understood in terms of the diluteness properties of gas measures for $(\nu,H)$. Indeed, if $\{ \mu^P_{\Lambda|\eta} : \Lambda \in \B^0_{\Z^2_*} , \eta \in \mathcal{N}(\Z^2_* \times G)\}$ is the family of kernels induced by $(\nu,H)$, then \eqref{eq:iscr} can be rewritten as 
	\begin{equation}
	\label{eq:iscr2}
	\mu^I_{\Delta|+}(\sigma)=\mu^P_{\Delta_*|\emptyset}(\Gamma_\sigma)
	\end{equation} for any square $\Delta \in \B^0_{\Z^2}$ and spin configuration $\sigma \in \{-1,+1\}^{\Z^2}$ equal to $+$ outside $\Delta$. Here, $\emptyset$ denotes the empty contour configuration. Using \eqref{eq:iscr2} and the spin-flip symmetry of the Ising model, one can show that if the gas measure $\mu^P:= \lim_{\Delta \nearrow \Z^2_*} \mu^P_{\Delta_*|\emptyset}$ is sufficiently diluted (which occurs at low temperatures) then the infinite-volume measures $\mu^+$ and $\mu^-$ in the Ising model are distinct. Therefore, by changing the (local) spin variables into new (non-local) contour variables, the proof of the existence of a phase transition reduces to the proof of some form of diluteness of the contour measure. We refer to \cite{SJS} for details.

In our present context, however, this model is of interest also for another reason: it constitutes the canonical example of a (physically relevant) non-Gibbsian system. Indeed, for any $\gamma_x \in \Z^2_* \times G$, $\Lambda \in \B^0_{\Z^2_*}$ and $\eta \in \mathcal{N}(\Z^2_* \times G)$ we have that the energy leap 
	$$
	\Delta E_{\Lambda|\eta} (\gamma_x) = \left\{\begin{array}{ll} +\infty & \text{ if $\gamma_x \cap \langle \eta \rangle \neq \emptyset$ or $\gamma_x \cap (\Z^2_* - \Lambda) \neq \emptyset$}\\ \\ 0 & \text{ otherwise}\end{array}\right.
	$$ depends on the volume $\Lambda$ through the restriction imposed by the interaction term $\Phi^{(1)}_{\Lambda}$, implying that the model is not Gibbsian. However, the limit 
	$$
	\lim_{\Lambda \nearrow \Z^2_*} \Delta E_{\Lambda|\eta}(\gamma_x) = \left\{\begin{array}{ll} +\infty & \text{ if $\gamma_x \cap \langle \eta \rangle \neq \emptyset$}\\ \\ 0 & \text{ otherwise}\end{array}\right.
	$$ exists for all choices of $\eta$ and $\gamma_x$, so that the conditions in Definition \ref{assump} are still satisfied. The restriction imposed by $\Phi^{(1)}_\Lambda$ must be, nonetheless, included for \eqref{eq:iscr2} to hold, as $\mu^I_{\Delta|+}$ is supported on configurations $\sigma$ such that $\Gamma_\sigma$ is always contained in $\Delta_*$. 

\subsubsection{Other examples} More in general, our treatment is also adapted to handle systems in the following general classes:

\bigskip
\paragraph{\it General contour ensembles.}  The Peierls contours discussed above are particularly simple because the Ising Hamiltonian is symmetric under the overall flipping of configurations.  More general non-symmetric cases are the object of study of Pirogov-Sinai theory \cite{PS}.  The main features of contours defined in this theory are the following: (i) Contours are ``thick" subsets formed by collections of plaquettes, (ii) contours include some additional information (color, configurations on both sides or, in general, the configuration on the relevant plaquettes), (iii) each reference configuration has a specific contour ensemble, (iv) contour ensembles are not of physical nature and only external contours coincide with the physical ``defects'' in the presence of reference configurations, (v) contour weights include ratios of partition functions that must be bounded so to obtain exponential expressions similar to the one in \eqref{eq:iscr}.  The last property is encoded in the expression ``contours must satisfy a Peierls condition''.  If these conditions are met, by proceeding as for the Peierls contours one can show that the diluteness of a contour ensemble implies the existence of a measure ``tilted" towards the corresponding boundary configuration. We observe that PS contours ensembles do not fit the Gibbsian framework for the same reason that in the Peierls contours model. Our results applied to these PS contour ensembles can yield not only proofs of the existence of phase transitions in the associated spin systems, but also the stability of the resulting phases with respect to perturbations and discretizations. This will be exploited in a subsequent publication \cite{fersagps}.

\bigskip
\paragraph{\it General polymer models.} These models ---introduced by Gruber and Kunz \cite{grukun71}--- involve general geometrical objects subject to a general hard-core condition defined in terms of a ``compatibility" relation (see \cite{kotpre86,dob96a}).  They are the traditional target of cluster-expansion or closely related methods \cite{bisferpro10}.  Our results extend uniqueness and mixing properties to a larger region of parameters than expansion-based treatments, at the cost of sacrificing analyticity considerations.  

\bigskip \paragraph{\it General point processes}  Point processes are the genesis of the ancestor algorithm exploited in this paper. All the models presented in one of the original publications \cite{FFG2} fall within the scope of our treatment: area-interacting processes, Strauss process, loss networks, random cluster model. In fact, our results apply to models that combine the generality of polymer models ---that do not require a geometric underlying space--- with that of point processes ---that allow soft as well as hard-core interactions.  A full presentation of the ancestor algorithm in such a general framework, and involving even weaker dilution requirements, is the object of a  separate paper \cite{sagdm}.

	\subsection{Approximation families}\label{sec:aprox} We now describe the perturbations of the configuration space that will be considered on the target model. These include, but are not limited to, discretization schemes on both the location space and spin set. 
	
	\begin{defi} \label{discdefi}
		A family $\mathfrak{D}=(D_\varepsilon)_{\varepsilon \geq 0}$ of measurable applications $D_\varepsilon : S \times G \to S\times G$ is called an \textit{approximation family} if the following conditions are satisfied:
		\begin{enumerate}
			\item [i.] For any $B \in \B^0_{S \times G}$ and $\delta > 0$ there exists $B^{(\delta)} \in \B^0_{S \times G}$ with $\bigcup_{0 \leq \varepsilon \leq \delta} D^{-1}_{\varepsilon}(B) \subseteq B^{(\delta)}$.
			\item [ii.] For any $B \in \B^0_{S \times G}$ and $\delta > 0$ there exists $B_{(\delta)} \in \B^0_{S \times G}$ with $\bigcup_{0 \leq \varepsilon \leq \delta} D_{\varepsilon}(B) \subseteq B_{(\delta)}$. 
			\item [iii.] There exists $a: \R_{\geq 0} \rightarrow \R_{\geq 0}$ with $\lim_{\varepsilon \rightarrow 0^+} a(\varepsilon) = a(0) = 0$ such that 
			$$
			d_{S \times G}( D_\varepsilon(\gamma_x), \gamma_x ) \leq a(\varepsilon)
			$$ for every $\gamma_x \in S \times G$ and $\varepsilon \geq 0$.
			
		\end{enumerate} The application $D_\varepsilon$ is called the $\varepsilon$-\textit{approximation operator}. 
	\end{defi} We note that conditions (i)-(ii) are merely technical requirements needed for the proofs, the essence of Definition \ref{discdefi} is contained in (iii). In fact, if the metric structure on $S \times G$ is such that $
	B_\delta = \{ \gamma_x \in S \times G : d(\gamma_x,B) \leq \delta\} \in \B^0_{S \times G}
	$ holds for any $B \in \B^0_{S \times G}$ then conditions (i)-(ii) are immediately satisfied and they can be removed from Definition \ref{discdefi}. On a side note, we observe that by definition $D_0$ is always the identity operator on $S \times G$. This is so for notational convenience.
	
	\begin{ejem} Some natural examples of approximation families include:
		\begin{enumerate}
			\item [$\bullet$] \textit{Spatial translations:} Defined on $S=\R^d$, given for each $\varepsilon > 0$ by 
			$$
			D^{\text{tr}}_\varepsilon(x,\gamma) = (x+\varepsilon \cdot v, \gamma)
			$$ 
			for some fixed unit vector $v \in \R^d$.   
			\item [$\bullet$] \textit{Spatial discretizations:} Defined on $S = \R^d$, given for each $\varepsilon > 0$ by 
			$$
			D^{\text{ds}}_\varepsilon ( x, \gamma ) = ( x_\varepsilon , \gamma )
			$$ where, for $x=(x_1,\dots,x_d) \in \R^d$, we write
			\begin{equation}\label{spatialdiscret}
			x_\varepsilon := \left( \varepsilon \left[\frac{x_1}{\varepsilon}\right],\dots,\varepsilon \left[\frac{x_d}{\varepsilon}\right]\right).
			\end{equation}
			\item [$\bullet$] \textit{Spin rotations:}  Defined on $G = S^{d-1}$, the unit sphere in $d$-dimensions, given for each $\varepsilon > 0$
			$$
			D^{\text{rot}}_\varepsilon ( x, \gamma ) = ( x ,  \varepsilon \cdot R \gamma) 
			$$ where $R$ is some fixed rotation. 
			
			\item [$\bullet$] \textit{Spin discretizations:}  Defined on $G=[0,\pi)$, given for each $\varepsilon > 0$ by 
			\begin{equation}\label{spindiscret}
			D^{\text{sds}}_\varepsilon ( x, \gamma ) = \left( x , \varepsilon \left[\frac{\gamma}{\varepsilon}\right] \right).
			\end{equation}
		\end{enumerate}
	\end{ejem}
	
	Approximations can be composed giving rise, for instance, to operators of the form  $D^{\text{ds}}_{\varepsilon_1}D^{\text{sds}}_{\varepsilon_2}$ that discretize both space and spin and may depend on more than one parameter. To simplify the exposition we will always assume that $\varepsilon \in \R_{\geq 0}$, but we point out that the extension to the case in which $\varepsilon=(\varepsilon_1,\dots,\varepsilon_k)$ is a vector of parameters is straightforward. 
	
	In the sequel, for notational convenience we shall write $\gamma_x^\varepsilon$ instead of $D_\varepsilon (\gamma_x)$. Moreover, for $\xi \in \mathcal{N}(S \times G)$ and each $\varepsilon > 0$ we define $D_\varepsilon(\xi) \in \mathcal{N}(S \times G)$ as the particle configuration given by the standard representation
	$$
	D_\varepsilon(\xi)= \sum_{\gamma_x \in \langle \xi \rangle} m(\gamma_x)\delta_{\gamma_x^\varepsilon}.
	$$ The fact that $D_\varepsilon(\xi)$ is indeed locally finite follows from (i) in Definition \ref{discdefi}. \mbox{Furthermore,} it follows from Lemma \ref{lemaconvaga} below 
	 that $\lim_{\varepsilon \rightarrow 0^+} D_\varepsilon(\xi) = \xi$ vaguely. To simplify notation, we may sometimes write $\xi^\varepsilon$ instead of $D_\varepsilon(\xi)$.

\section{Main results}

The main results featured in this article concern the particular class of \textit{heavily diluted} gas models, which we introduce now. 

\begin{defi} \label{defi:heavy}
A model $(\nu,H)$ satisfying Definition \ref{assump} is said to be \textit{heavily diluted} if there exists a measurable function $q: S \times G \to [1,+\infty)$ such that
	\begin{equation}\label{hdiluted}
	\alpha_q^{\nu,H}:=\sup_{\gamma_x \in S\times G} \left[\frac{e^{-\Delta E}}{q(\gamma_x)} \int_{I(\gamma_x)} q(\tilde{\gamma}_y)d\nu(\tilde{\gamma}_y)\right] < 1.
	\end{equation} The quantity $\alpha_q^{\nu,H}$ is called the $q$-\textit{diluteness coefficient} and $q$ is called the \textit{size function}. 
\end{defi} 

Heavily diluted models have a unique gas measure, as the following result in \cite{SJS} shows. 

\begin{teo}[Uniqueness of the gas measure in heavily diluted models] \label{teounigibbs} $\\$
	Let $(\nu,H)$ be a heavily diluted model on $S \times G$. Then: 
	\begin{enumerate}
		\item [i.] The local limit $\mu:= \lim_{\Lambda \nearrow S} \mu_{\Lambda|\eta}$ exists and coincides for any $\eta \in \mathcal{N}(S \times G)$ with finite $H$-interaction range. In particular, $(\nu,H)$ admits a unique gas measure.
		\item [ii.] If $(\nu,H)$ is Gibbsian then $\mu$ constitutes the unique Gibbs measure of the model.
	\end{enumerate}
\end{teo}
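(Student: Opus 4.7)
My strategy is to realize the unique gas measure as the stationary law of a space-time birth-and-death construction (the ancestor algorithm of \cite{FFG1}) and to use the heavy-diluteness condition \eqref{hdiluted} to guarantee that this construction is well defined.

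First, on $(S\times G)\times\mathbb{R}$ I consider a Poisson process of \emph{trials} with intensity $e^{-\Delta E}\,d\nu\otimes dt$, attach to each trial $(\gamma_x,t)$ an independent unit-exponential lifetime $L_{\gamma_x,t}$ and an independent uniform mark $U_{\gamma_x,t}\in[0,1]$, and declare $(\gamma_x,t)$ to be \emph{kept} iff
$$
U_{\gamma_x,t}\;\leq\;\exp\!\bigl(\Delta E-\Delta E_{\eta^{t-}}(\gamma_x)\bigr),
$$
where $\eta^{t-}$ is the configuration of previously-kept trials still alive at time $t-$. Assumptions (3)--(4) of Definition \ref{assump} force this acceptance probability into $[0,1]$. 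The rule is circular, so its consistency hinges on almost-sure finiteness of the \emph{clan of ancestors} $\mathcal{A}(\gamma_x,t)$, the smallest set of trials containing $(\gamma_x,t)$ and closed under the relation ``I need to know the fate of earlier live trials whose location lies in my interaction range''.

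Second, I dominate $\mathcal{A}(\gamma_x,t)$ by a $q$-weighted multitype Galton--Watson process. Since each parent lives for a unit-exponential time and interacts only with trials lying in $I(\gamma_x)$, a direct Poisson computation gives expected $q$-weighted offspring bounded by
$$
\frac{1}{q(\gamma_x)}\int_{I(\gamma_x)}q(\tilde\gamma_y)\,e^{-\Delta E}\,d\nu(\tilde\gamma_y)\;\leq\;\alpha_q^{\nu,H}\;<\;1,
$$
where measurability and $\nu$-integrability of $I(\gamma_x)$ from assumption (5) make the integral meaningful. Because $q\geq 1$ the dominating process is genuinely subcritical, its total progeny is a.s. finite, and the recursion uniquely determines which trials are kept.

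Third, I let $\mu$ denote the law of the time-zero kept configuration; by time-stationarity $\mu$ is a well-defined probability measure on $\mathcal{N}(S\times G)$. For a volume $\Lambda\in\mathcal{B}^0_S$ and a boundary condition $\eta$ of finite $H$-interaction range, I obtain $\mu_{\Lambda|\eta}$ by running the same algorithm only for trials inside $\Lambda\times G$ while treating the outside as a frozen copy of $\eta$; the finiteness of $\eta(I(\Lambda\times G))$ makes this well posed. For any bounded local observable $f$ depending on a compact $K\subset S\times G$, the time-zero clan of every trial that affects $f$ is a.s.\ contained in $\Lambda_n\times\mathbb{R}$ for all large $n$, and on that event the $\eta$- and stationary constructions agree on $K$; hence $\mu_{\Lambda_n|\eta}(f)\to\mu(f)$, proving (i). For (ii), the specification consistency \eqref{eq:r21} combined with (i) and bounded convergence yields $\mu=\int\mu_{\Lambda|\eta}\,d\mu(\eta)$, so $\mu$ is Gibbsian; any other Gibbs measure $\mu'$ is supported on configurations of finite $H$-interaction range (a consequence of assumption (5) applied to the DLR equation), hence $\mu'=\int\mu_{\Lambda_n|\eta}\,d\mu'(\eta)\to\mu$ by dominated convergence. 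The main obstacle is precisely the almost-sure finiteness of clans: controlling a highly inhomogeneous continuous-type branching process by a single scalar inequality is exactly what the size function $q$ is designed for.
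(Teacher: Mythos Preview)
Your proposal is correct and follows essentially the same route as the paper: the result is quoted from \cite{SJS}, and the argument outlined there (and summarized in Section~\ref{secffg}) is precisely the ancestor algorithm of \cite{FFG1}, with subcriticality of the $q$-weighted branching domination yielding a.s.\ finiteness of clans (Proposition~\ref{finit1}), from which both the existence of the stationary construction $\mathcal{K}$ and the local convergence $\mu_{\Lambda_n|\eta}\to\mu$ follow exactly as you describe. Your treatment of part~(ii) via the DLR equations and dominated convergence is also the standard one.
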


To state our results we need to introduce first the notion of negligible event on $\mathcal{N}(S \times G)$.

\begin{defi}\label{diset} We say that an event $N \subseteq \mathcal{N}(S \times G)$ is \textit{dynamically negligible} for a given intensity measure $\nu$ on $S \times G$ if it satisfies the following properties:
	\begin{enumerate}
		\item [i.] $N$ is a $\sigma$-local $\pi^\nu$-null event.
		\item [ii.] $N$ is closed by addition of particles, i.e. $\eta \in N , \eta \preceq \xi \Longrightarrow \xi \in N$, where $\eta \preceq \xi$ whenever their standard representations satisfy $\langle \eta \rangle \subseteq \langle \xi \rangle$ and $m_\eta(\gamma_x) \leq m_\xi(\gamma_x)$ for every $\gamma_x \in Q_\eta$.
	\end{enumerate}
\end{defi}

Examples of dynamically negligible sets will be given in the applications of Section \ref{secapps}. 
As illustrated in Section \ref{secapps}, most realistic continuum models will not satisfy the limit \eqref{eq:dns} in the hypotheses of Theorem \ref{convdis} below for \textit{every} configuration, but will rather do so only for configurations outside a dynamically negligible set. Thus, for our results to be of any real use in the continuum, it is necessary to allow for violations of \eqref{eq:dns} inside such sets. Fortunately, our results will still hold under this weaker hypothesis. Later in Section \ref{secdis}, we will introduce the even weaker notion of dynamically \textit{impossible} sets when discussing further relaxations to the hypotheses of Theorem \ref{convdis} below.


We are now ready to present our main result. In its statement we add superscripts to the usual notation in order to identify the model which we are referring to.  

\begin{teo}\label{convdis} Let $(D_\varepsilon)_{\varepsilon \geq 0}$ be an approximation family and suppose that $(\nu^\varepsilon,H^\varepsilon)_{\varepsilon \geq 0}$ is a family of gas models such that:
	\begin{enumerate}
		\item [i.] For every $\varepsilon > 0$ the intensity measure $\nu^\varepsilon$ is given by 
		$$
		\nu^\varepsilon := \nu^0 \circ D_\varepsilon^{-1}.
		$$ 
		\item [ii.] There exists a dynamically negligible set $N$ for $\nu^0$ such that
		\begin{equation}\label{eq:dns}
		\lim_{\varepsilon \rightarrow 0^+} \Delta E^{H^\varepsilon}_{\eta^\varepsilon} (\gamma^\varepsilon_x) = \Delta E^{H^0}_\eta (\gamma_x)
        \end{equation} for every $\gamma_x \in S \times G$ and all $\eta \in \mathcal{N}(S \times G)$ with $\eta + \delta_{\gamma_x} \in N^c$. 	
		\item [iii.]
		$$
		\Delta E := \inf_{\varepsilon \geq 0} \left[ \inf_{\Lambda \in \B^0_S} \left[ \inf_{\substack{ \gamma_x \in \Lambda \times G \\ \eta \in \mathcal{N}(S \times G) }}\Delta E^{H^\varepsilon}_{\Lambda|\eta^\varepsilon}(\gamma_x^\varepsilon) \right] \right] > -\infty.
		$$
		\item [iv.] There exists for each $\gamma_x \in S \times G$ a set $V(\gamma_x) \in \B_{S \times G}$ such that:
		\begin{itemize}
		\item[$\bullet$] For every $\varepsilon \geq 0$ one has the inclusion
		\begin{equation} \label{eq:incdis}
		D_\varepsilon^{-1}\left(I^{H^\varepsilon}(\gamma^\varepsilon_x)\right) \subseteq V(\gamma_x).
		\end{equation}
		\item[$\bullet$] There exists a size function $q: S \times G \rightarrow [1,+\infty)$ which verifies 
		\begin{equation} \label{eq:coefdis}
		\alpha_{q}^{\nu^0,V} := \sup_{\gamma_x \in S \times G} \left[ \frac{e^{-\Delta E}}{q(\gamma_x)} \int_{V(\gamma_x)} q(\tilde{\gamma}_y)d\nu^0(\tilde{\gamma}_y)\right] < 1.
		\end{equation}
		\end{itemize}
	\end{enumerate}
	Then:
	\begin{itemize}
	\item[(a)] Each model $(\nu^\varepsilon,H^\varepsilon)$ admits exactly one gas measure $\mu^\varepsilon$.
	\item[(b)]  As $\varepsilon \rightarrow 0^+$, we have the weak convergence
	$$
	\mu^{\varepsilon} \overset{w}{\longrightarrow} \mu^0.
	$$ 
	\item[(c)] There exists a coupling $(\mathcal{Z}^\varepsilon)_{\varepsilon \geq 0}$ of the measures $(\mu^\varepsilon)_{\varepsilon \geq 0}$ such that for any $B \in \B^0_{S \times G}$ there exists (a random) $\varepsilon_B > 0$ verifying that for all $\varepsilon \leq \varepsilon_B$
	\begin{equation}
	\label{eq:couple}
	\mathcal{Z}^\varepsilon_B = D_{\varepsilon}\left( \mathcal{Z}^0_{D^{-1}_\varepsilon(B)}\right).
	\end{equation} 
	In particular, $\mathcal{Z}^\varepsilon \overset{as}{\longrightarrow} \mathcal{Z}^0$ with respect to the vague topology.
	\end{itemize}
\end{teo}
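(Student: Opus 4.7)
The plan is to apply the \emph{ancestor algorithm} of Section \ref{secffg} in a coupled form, driven by a single underlying Poisson process on $(S \times G) \times \R_{\ge 0}$ of intensity $\nu^0 \otimes dt$ (cylinder births and lifetimes) equipped with independent uniform acceptance variables $\{U_{\gamma_x}\}$. Condition (iv) is designed precisely so that the enlarged cylinder process on the target space with interaction range $V$ is heavily diluted with coefficient $\alpha_q^{\nu^0,V} < 1$. This subcriticality is the engine of the whole argument: the backward ancestor clusters associated to the common process are a.s.\ finite, and each $\varepsilon$-model is realised by projecting the surviving births by $D_\varepsilon$ and applying the $\varepsilon$-model's acceptance rule.

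For part (a) I would verify that each $(\nu^\varepsilon, H^\varepsilon)$ is itself heavily diluted and then invoke Theorem \ref{teounigibbs}. The change of variables $\nu^\varepsilon = \nu^0 \circ D_\varepsilon^{-1}$ together with \eqref{eq:incdis} gives, for any nonnegative measurable $f$,
\[
\int_{I^{H^\varepsilon}(\gamma^\varepsilon_x)} f(\tilde\gamma_y)\,d\nu^\varepsilon(\tilde\gamma_y) \;=\; \int_{D_\varepsilon^{-1}(I^{H^\varepsilon}(\gamma^\varepsilon_x))} f(D_\varepsilon(\tilde\gamma_y))\,d\nu^0(\tilde\gamma_y) \;\le\; \int_{V(\gamma_x)} f(D_\varepsilon(\tilde\gamma_y))\,d\nu^0(\tilde\gamma_y).
\]
Taking a size function $q^\varepsilon$ on the image of $D_\varepsilon$ by $q^\varepsilon(D_\varepsilon(\gamma)) := q(\gamma)$ (extended measurably and $\ge 1$ elsewhere; outside the support of $\nu^\varepsilon$ the extension is immaterial) and using (iii) one obtains $\alpha_{q^\varepsilon}^{\nu^\varepsilon, H^\varepsilon} \le \alpha_q^{\nu^0, V} < 1$, so each $\varepsilon$-model has a unique gas measure $\mu^\varepsilon$.

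For (b) and (c) I would build the coupling $(\mathcal{Z}^\varepsilon)_{\varepsilon \ge 0}$ as the output of the simultaneous ancestor algorithm on the common cylinder process above. Given a cylinder $(\gamma_x, \tau, U)$ visited in a backward trace, declare it kept by the $\varepsilon$-model precisely when
\[
U \;\le\; e^{\Delta E - \Delta E^{H^\varepsilon}_{\eta^\varepsilon}(\gamma_x^\varepsilon)},
\]
where $\eta$ is the already-decided older ancestor configuration; condition (iii) ensures the right-hand side is in $[0,1]$. For $B \in \B^0_{S \times G}$, condition (i) of Definition \ref{discdefi} produces a common $B^{(1)} \in \B^0_{S \times G}$ containing $D_\varepsilon^{-1}(B)$ for every $\varepsilon \in [0,1]$, and the a.s.\ finite $(\nu^0, V)$-backward cluster feeding $B^{(1)}$ consists of only finitely many cylinders. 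The pointwise convergence $D_\varepsilon(\gamma_x) \to \gamma_x$ combined with \eqref{eq:dns} makes these finitely many $\varepsilon$-acceptance thresholds converge to those of the $0$-model, and since the $U$'s have continuous distributions this forces all comparisons to agree for $\varepsilon \le \varepsilon_B$ for some random $\varepsilon_B > 0$. This yields \eqref{eq:couple}, hence $\mathcal{Z}^\varepsilon \to \mathcal{Z}^0$ almost surely in the vague topology, which implies the weak convergence $\mu^\varepsilon \to \mu^0$.

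The main obstacle is making the use of \eqref{eq:dns} rigorous, since it may fail on the dynamically negligible set $N$. I would show that almost surely none of the ancestor configurations $\eta + \delta_{\gamma_x}$ queried by the algorithm on any finite backward cluster lies in $N$. The ingredients are that $N$ is $\sigma$-local and $\pi^{\nu^0}$-null (so each of its countably many local components is a null event under the driving Poisson cylinder process restricted to any compact slab), that the queried configurations are dominated in the $\preceq$ order by the full unmarked Poisson cylinder process on the slab, and that $N$ is closed under addition of particles, so staying outside $N$ for a dominating configuration propagates to every subconfiguration. A union bound over the countable local decomposition of $N$ then rules out $N$-queries almost surely, completing the argument.
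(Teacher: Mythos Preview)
Your overall strategy for (b) and (c) matches the paper's: a single flagged Poisson process $\overline\Pi$ with intensity $\nu^0\times e^{-\Delta E}\mathcal L\times\mathcal E^1\times\mathcal L_{[0,1]}$, the $\varepsilon$-processes obtained by pushing bases through $D_\varepsilon$, finiteness of the $V$-ancestor clan from $\alpha_q^{\nu^0,V}<1$, and then the finitely many acceptance comparisons stabilising by \eqref{eq:dns} plus the continuity of the flag law. Your treatment of the dynamically negligible set $N$ is also the right one and is exactly how the paper argues (its Lemma on $\sigma$-local $\pi^\nu$-null events shows $P(\Pi_t\in N\text{ for some }t)=0$; then closure under addition pushes $N^c$ down to every thinned subconfiguration).

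The gap is in your argument for (a). The definition $q^\varepsilon(D_\varepsilon(\gamma_x)):=q(\gamma_x)$ is in general ill-posed, because approximation operators such as the spatial discretisations $D^{\mathrm{ds}}_\varepsilon$ are highly non-injective: an entire cube collapses to a single lattice point, and different preimages can carry different $q$-values. If you try to repair this by taking $q^\varepsilon(z)=\inf_{\gamma\in D_\varepsilon^{-1}(z)}q(\gamma)$ the numerator is controlled but the denominator $1/q^\varepsilon(\gamma_x^\varepsilon)$ blows up; with the $\sup$ the denominator is fine but the numerator is not. The paper makes exactly this point: after the change of variables one only gets
\[
\frac{e^{-\Delta E^{H^\varepsilon}}}{q(\gamma_x^\varepsilon)}\int_{I^{H^\varepsilon}(\gamma_x^\varepsilon)}q(\tilde\gamma_y)\,d\nu^\varepsilon(\tilde\gamma_y)\;\le\;\frac{e^{-\Delta E}}{q(\gamma_x^\varepsilon)}\int_{V(\gamma_x)}q(\tilde\gamma_y^\varepsilon)\,d\nu^0(\tilde\gamma_y),
\]
which is \emph{not} enough to conclude $\alpha_{q}^{\nu^\varepsilon,H^\varepsilon}<1$ from $\alpha_q^{\nu^0,V}<1$.

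The fix is to stop trying to verify heavy dilution of $(\nu^\varepsilon,H^\varepsilon)$ and instead reuse the very coupling you already built. From \eqref{eq:incdis} one has the clan inclusion
\[
\mathcal A^{0,H^\varepsilon}(\Lambda\times G)\;\subseteq\;D_\varepsilon\!\left(\mathcal A^{0,V}(\Lambda_\varepsilon\times G)\right),
\]
where the right-hand $V$-clan lives on the common process $\Pi$ and is a.s.\ finite because $\alpha_q^{\nu^0,V}<1$. Finiteness of the $H^\varepsilon$-clan then yields uniqueness of $\mu^\varepsilon$ directly via the mechanism of Theorem~\ref{teounigibbs}, without ever producing a size function on the $\varepsilon$-side. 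This is precisely the route the paper takes, and it is the natural one given that you already need the $V$-clan for parts (b)--(c).
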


We point out that, in all common situations, the condition $\alpha^{\nu^0,H^0} < 1$ alone is enough to guarantee the validity of (iv) in the statement of Theorem \ref{convdis}. Indeed, as we shall see in Section \ref{secapps}, a set $V(\gamma_x)$ satisfying \eqref{eq:incdis} can generally be obtained by slightly enlarging $I^{H^0}(\gamma_x)$ in some appropriate manner. If $\alpha_q^{\nu^0,H^0}$ is a continuous function of the parameters of the model and all the remaining models $(\nu^\varepsilon,H^\varepsilon)_{\varepsilon > 0}$ are ``sufficiently close'' to $(\nu^0,H^0)$, then performing this slight enlargement will yield a coefficient $\alpha^{\nu^0,V}_q$ very close to $\alpha_q^{\nu^0,H^0}$, so that \eqref{eq:coefdis} holds. 


In principle, Theorem \ref{convdis} deals only with perturbations of the intensity measure which are given by approximations in the sense of Definition \ref{discdefi}. However, one may cover other cases of interest as well by first transferring perturbations in the intensity measure to an effective Hamiltonian prescription and then applying Theorem \ref{convdis}. In this way, we obtain the following important corollary, dealing with absolutely continuous modifications to the intensity measure. This scenario typically represents perturbations in the parameters of the model: fugacity of particles, inverse temperature and interaction range among others.

\begin{cor}\label{convabs} Let $(\nu^\varepsilon,H^\varepsilon)_{\varepsilon \geq 0}$ be a family of diluted models such that:
	\begin{enumerate}
		\item [i.] There exists an intensity measure $\nu$ on $S \times G$ such that $\nu^\varepsilon \ll \nu$ for every $\varepsilon \geq 0$. 
			\item [ii.] There exists a dynamically negligible set $N$ for $\nu$ such that
			$$
			\lim_{\varepsilon \rightarrow 0^+} \Delta \tilde{E}^{\varepsilon}_{\eta} (\gamma_x) = \Delta \tilde{E}^{0}_\eta (\gamma_x),
			$$ for every $\gamma_x \in S \times G$ and all $\eta \in \mathcal{N}(S \times G)$ with $\eta + \delta_{\gamma_x} \in N^c$, where 
			$$
			\Delta \tilde{E}^{\varepsilon}_\eta(\gamma_x) := \lim_{\Lambda \nearrow S} \Delta \tilde{E}^{\varepsilon}_{\Lambda|\eta} (\gamma_x)
			$$ with
			$$
			\Delta \tilde{E}^{\varepsilon}_{\Lambda|\eta} (\gamma_x) := \Delta E^{H^\varepsilon}_{\Lambda|\eta}(\gamma_x) - \log \left( \frac{d\nu^\varepsilon}{d \nu}(\gamma_x)\right).
				$$ 
				\item [iii.]
				$$
				\Delta \tilde{E} := \inf_{\varepsilon \geq 0} \left[ \inf_{\Lambda \in \B^0_S} \left[ \inf_{\substack{ \gamma_x \in \Lambda \times G \\ \eta \in \mathcal{N}(S \times G)}}\Delta \tilde{E}^{\varepsilon}_{\eta}(\gamma_x) \right]\right] > -\infty.
				$$
				\item [iv.] There exists for each $\gamma_x \in S \times G$ a set $V(\gamma_x) \in \B_{S \times G}$ such that:
					\begin{itemize}
						\item[$\bullet$] For every $\varepsilon \geq 0$ one has the inclusion
						\begin{equation} \label{eq:incv}
						I^{H^\varepsilon}(\gamma^\varepsilon_x) \subseteq V(\gamma_x).
						\end{equation}
						\item[$\bullet$] There exists a size function $q: S \times G \rightarrow [1,+\infty)$ which verifies 
						\begin{equation*}
						\alpha_{q}^{\nu,V} := \sup_{\gamma_x \in S \times G} \left[ \frac{e^{-\Delta \tilde{E}}}{q(\gamma_x)} \int_{V(\gamma_x)} q(\tilde{\gamma}_y)d\nu(\tilde{\gamma}_y)\right] < 1.
						\end{equation*}
					\end{itemize}
		\end{enumerate} 
		Then:
		\begin{itemize}
		\item[(a)] Each model $(\nu^\varepsilon,H^\varepsilon)$ admits exactly one gas measure $\mu^\varepsilon$.
		\item[(b)] As $\varepsilon \rightarrow 0^+$, we have the local convergence
	$$
	\mu^{\varepsilon} \overset{loc}{\longrightarrow} \mu^0.
	$$ 	
	    \item[(c)] There exists a coupling $(\mathcal{Z}^\varepsilon)_{\varepsilon \geq 0}$ of the measures $(\mu^\varepsilon)_{\varepsilon \geq 0}$ such that for any $B \in \B^0_{S \times G}$ there exists (a random) $\varepsilon_B > 0$ verifying  $\mathcal{Z}^\varepsilon_{B} = \mathcal{Z}^0_{B}$ for all $\varepsilon \leq \varepsilon_B$. 
	\end{itemize}
	\end{cor}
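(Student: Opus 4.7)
The plan is to reduce Corollary \ref{convabs} to Theorem \ref{convdis} by absorbing the Radon--Nikodym derivatives $d\nu^\varepsilon/d\nu$ into a tilted Hamiltonian prescription, and then applying Theorem \ref{convdis} with the trivial approximation family $D_\varepsilon = \mathrm{id}$ for all $\varepsilon \geq 0$ (which satisfies Definition \ref{discdefi} with $a \equiv 0$).

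For each $\varepsilon \geq 0$ I would define
$$
\tilde{H}^\varepsilon_{\Lambda|\eta}(\sigma) := H^\varepsilon_{\Lambda|\eta}(\sigma) \;-\; \sum_{\gamma_x \in \langle \sigma \rangle} m_\sigma(\gamma_x)\, \log \frac{d\nu^\varepsilon}{d\nu}(\gamma_x),
$$
and invoke the standard change-of-intensity identity
$$
\frac{d\pi^{\nu^\varepsilon}_\Lambda}{d\pi^{\nu}_\Lambda}(\sigma) = e^{-(\nu^\varepsilon-\nu)(\Lambda \times G)}\prod_{\gamma_x \in \langle \sigma \rangle} \left(\frac{d\nu^\varepsilon}{d\nu}(\gamma_x)\right)^{m_\sigma(\gamma_x)}
$$
to check, after renormalization, that $\omega^{(\nu^\varepsilon,H^\varepsilon)}_{\Lambda|\eta} = \omega^{(\nu,\tilde{H}^\varepsilon)}_{\Lambda|\eta}$. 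Hence $(\nu^\varepsilon,H^\varepsilon)$ and $(\nu,\tilde{H}^\varepsilon)$ produce identical gas kernels and identical gas measures. Since the correction term depends only on the particles of $\sigma$ individually, the impact relation for $\tilde{H}^\varepsilon$ coincides with that of $H^\varepsilon$; in particular $I^{\tilde{H}^\varepsilon}(\gamma_x) = I^{H^\varepsilon}(\gamma_x)$, and the energy leap of $\tilde{H}^\varepsilon$ is exactly the $\Delta \tilde{E}^\varepsilon$ of hypothesis (ii).

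Next I would verify the hypotheses of Theorem \ref{convdis} for the family $(\nu,\tilde{H}^\varepsilon)_{\varepsilon \geq 0}$ together with $D_\varepsilon = \mathrm{id}$. Hypothesis (i) of the theorem is trivial as $\nu \circ \mathrm{id}^{-1} = \nu$; hypothesis (ii) reduces via $\gamma_x^\varepsilon = \gamma_x$ to hypothesis (ii) of the corollary; hypothesis (iii) is hypothesis (iii) of the corollary; and hypothesis (iv) follows directly from (iv) of the corollary since $D_\varepsilon^{-1}(I^{\tilde{H}^\varepsilon}(\gamma_x^\varepsilon)) = I^{H^\varepsilon}(\gamma_x) \subseteq V(\gamma_x)$ and the diluteness bound is word-for-word the same. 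As a preliminary one must also check that $(\nu,\tilde{H}^\varepsilon)$ is itself a bona fide gas model in the sense of Definition \ref{assump}: measurability and the diluteness condition $\tilde{H}^\varepsilon_{\Lambda|\eta}(\emptyset) = 0 < +\infty$ are clear, while the bounded-energy-loss requirement follows from hypothesis (iii) of the corollary.

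Theorem \ref{convdis} then delivers conclusions (a) and (c) at once, observing that with $D_\varepsilon = \mathrm{id}$ the identity \eqref{eq:couple} becomes exactly $\mathcal{Z}^\varepsilon_B = \mathcal{Z}^0_B$. For (b), I would note that for any bounded local observable $f$ localized in $\Lambda \in \B^0_S$, the a.s.\ eventual identity $\mathcal{Z}^\varepsilon_{\Lambda \times G} = \mathcal{Z}^0_{\Lambda \times G}$ forces $f(\mathcal{Z}^\varepsilon) = f(\mathcal{Z}^0)$ for all sufficiently small $\varepsilon$, so bounded convergence upgrades the weak convergence from Theorem \ref{convdis}(b) to the claimed $\mu^\varepsilon \overset{loc}{\longrightarrow} \mu^0$. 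The main obstacle I expect is the careful bookkeeping around the Radon--Nikodym tilt---in particular, measurability of $\log(d\nu^\varepsilon/d\nu)$ and of the tilted local Hamiltonians as required by condition (6) of Definition \ref{assump}, and handling the set where $d\nu^\varepsilon/d\nu$ vanishes (which corresponds to $\tilde{H}^\varepsilon = +\infty$ and is harmless)---but this is essentially routine once the tilt is written down explicitly.
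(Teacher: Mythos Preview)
Your proposal is correct and matches the paper's own proof essentially line for line: the paper also absorbs the Radon--Nikodym derivatives into a tilted Hamiltonian $\tilde H^\varepsilon$, observes that $(\nu^\varepsilon,H^\varepsilon)$ and $(\nu,\tilde H^\varepsilon)$ yield the same gas kernels, checks $I^{\tilde H^\varepsilon}=I^{H^\varepsilon}$ and $\Delta E^{\tilde H^\varepsilon}=\Delta\tilde E^\varepsilon$, and then invokes Theorem~\ref{convdis} with the identity approximation family. One inconsequential slip: in general $\tilde H^\varepsilon_{\Lambda|\eta}(\emptyset)=H^\varepsilon_{\Lambda|\eta}(\emptyset)$, which is finite by assumption on $H^\varepsilon$ but need not equal $0$.
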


\section{Applications}\label{secapps}

We discuss here some consequences of our main results for the models in Section \ref{secexamples}. First, we will focus on applications of Corollary \ref{convabs} and then consider other applications using Theorem \ref{convdis} in its full generality. 
The reader should keep in mind that, although we do not state it explicitly in each application, for every local and/or weak convergence of probability measures throughout this section Theorem \ref{convdis} guarantees the existence of a coupling in which the convergence takes place almost surely. 

\subsection{Applications of Corollary \ref{convabs}}

We illustrate the use of Corollary \ref{convabs} by showing the continuity in the parameters of the unique gas measure for the Widom-Rowlinson and Peierls contour models defined in Section \ref{secexamples}.


\subsubsection{The Widom-Rowlison model}

\begin{teo}[The discrete Widom-Rowlinson model]\label{convabswr} Given $k \in \N$ and $\lambda^+_0,\lambda^-_0 > 0$, consider the discrete Widom-Rowlinson model with fugacities $\lambda^{\pm}_0$ and exclusion radius $k$. Then, if 
	\begin{equation}\label{eq:dwrq}
	\alpha_{WR}^{(d)}(\lambda^+_0,\lambda^-_0,k) := \max\{\lambda^-_0,\lambda^+_0\}(2k+1)^d + \min\{\lambda^-_0,\lambda^+_0\}< 1
	\end{equation} there exists an open neighborhood $U$ of $(\lambda^+,\lambda^-)$ such that for any $(\lambda^+,\lambda^-) \in U$ \mbox{the model} with fugacities $\lambda^\pm$ and exclusion radius $k$ admits a unique Gibbs measure $\mu(\lambda^+,\lambda^-)$. Moreover, the application $(\lambda^+,\lambda^-) \mapsto \mu(\lambda^+,\lambda^-)$ is continuous on $U$ in the local topology, i.e. for any $(\lambda^+_*,\lambda^-_*) \in U$ the following local limit holds:
	$$
	\lim_{(\lambda^+,\lambda^-) \rightarrow (\lambda^+_*,\lambda^-_*)} \mu(\lambda^+,\lambda^-) = \mu(\lambda^+_*,\lambda^-_*).
	$$
\end{teo}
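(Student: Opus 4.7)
The plan is to invoke Corollary \ref{convabs} for the family of discrete Widom--Rowlinson models parameterized by $(\lambda^+,\lambda^-)$ in a neighborhood of $(\lambda^+_0,\lambda^-_0)$. The key observation is that the potential \eqref{eq:dwr2} does not involve the fugacities, so the Hamiltonian $H$ is the \emph{same} across the entire family; varying $(\lambda^+,\lambda^-)$ affects only the intensity measure $\nu = \lambda^+ c_{\Z^d}\times\delta_+ + \lambda^- c_{\Z^d}\times\delta_-$. Since the energy leap of $H$ takes values in $\{0,+\infty\}$, the infimum in Definition \ref{assump}(4) is $\Delta E = 0$. Moreover, by direct inspection the interaction range is $I^H(+_x) = \{+_x,-_x\} \cup \{-_y : 0<\|y-x\|_\infty\le k\}$ (and symmetrically for $-_x$), so with size function $q\equiv 1$ one computes $\alpha_q^{\nu,H} = \max\{\lambda^+ + \lambda^-(2k+1)^d,\; \lambda^- + \lambda^+(2k+1)^d\}$, which is exactly $\alpha_{WR}^{(d)}(\lambda^+,\lambda^-,k)$ in \eqref{eq:dwrq}.

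By continuity of $\alpha_{WR}^{(d)}$ in $(\lambda^+,\lambda^-)$ and the hypothesis at $(\lambda^+_0,\lambda^-_0)$, I can fix an open rectangular neighborhood $U$ of $(\lambda^+_0,\lambda^-_0)$, contained in $(0,\lambda^+_{\max}]\times(0,\lambda^-_{\max}]$, such that $\alpha_{WR}^{(d)}(\lambda^+_{\max},\lambda^-_{\max},k) < 1$. For any $(\lambda^+,\lambda^-) \in U$ the model is then heavily diluted in the sense of Definition \ref{defi:heavy}, and Theorem \ref{teounigibbs} yields the unique Gibbs measure $\mu(\lambda^+,\lambda^-)$. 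This settles uniqueness.

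For the continuity assertion, fix $(\lambda^+_*,\lambda^-_*) \in U$ and an arbitrary family $(\lambda^+(\varepsilon),\lambda^-(\varepsilon))_{\varepsilon\ge 0} \subseteq U$ with $(\lambda^+(0),\lambda^-(0)) = (\lambda^+_*,\lambda^-_*)$ converging to the base point as $\varepsilon \to 0^+$. I apply Corollary \ref{convabs} with the reference measure $\nu := \lambda^+_{\max}\, c_{\Z^d}\times\delta_+ + \lambda^-_{\max}\, c_{\Z^d}\times\delta_-$, with $\nu^\varepsilon$ the WR intensity at parameters $(\lambda^+(\varepsilon),\lambda^-(\varepsilon))$, and with $H^\varepsilon \equiv H$. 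Since $H$ is $\varepsilon$-independent, the tilted leap $\Delta\tilde E^\varepsilon_\eta(\gamma_x) = \Delta E^H_\eta(\gamma_x) - \log\bigl(\lambda^\gamma(\varepsilon)/\lambda^\gamma_{\max}\bigr)$ converges pointwise as $\varepsilon \to 0^+$, so condition (ii) holds with $N = \emptyset$; the infimum in (iii) equals $\Delta\tilde E = 0$; and taking $V(\gamma_x) := I^H(\gamma_x)$ (the same for every $\varepsilon$), condition (iv) reduces to $\alpha_q^{\nu,V} = \alpha_{WR}^{(d)}(\lambda^+_{\max},\lambda^-_{\max},k) < 1$, which holds by construction of $U$. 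Corollary \ref{convabs}(b) then delivers $\mu(\lambda^+(\varepsilon),\lambda^-(\varepsilon)) \overset{loc}{\longrightarrow} \mu(\lambda^+_*,\lambda^-_*)$, which is the desired local continuity since the approaching family was arbitrary.

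I do not anticipate any serious obstacles; the only real subtlety is choosing a single reference measure $\nu$ that both dominates every $\nu^\varepsilon$ and keeps the diluteness bound in (iv) uniform in $\varepsilon$. Shrinking $U$ if necessary to enforce the upper bounds $\lambda^\pm_{\max}$ on the fugacities makes this automatic, using precisely the fact that the Hamiltonian, and hence the interaction range $I^H$, do not vary within the family.
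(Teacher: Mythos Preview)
Your proof is correct and follows essentially the same route as the paper: compute $I^H(\gamma_x)$ and the diluteness coefficient with $q\equiv 1$, invoke Theorem~\ref{teounigibbs} for uniqueness on a neighborhood, then verify the hypotheses of Corollary~\ref{convabs} along an arbitrary convergent family. The one substantive difference is your choice of reference measure: the paper takes $\nu=\nu^\infty$ (the limit intensity), which forces it to assume the sequence is ``sufficiently close'' so that $e^{-\Delta\tilde E}\alpha_q^{\nu^\infty,H}<1$; you instead take the maximal intensity $\nu_{\max}$ over the rectangle $U$, which makes $\Delta\tilde E\ge 0$ automatic and gives $\alpha_q^{\nu,V}\le\alpha_{WR}^{(d)}(\lambda^+_{\max},\lambda^-_{\max},k)<1$ directly---this is the same trick the paper itself uses later for the Peierls model (Theorem~\ref{convabspcm}). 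One cosmetic point: your claim that $\Delta\tilde E=0$ is not quite right (the infimum is $\ge 0$ and in general strictly positive, since $\lambda^\gamma(\varepsilon)<\lambda^\gamma_{\max}$), but this only helps in condition~(iv) and is irrelevant for~(iii).
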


\begin{proof} Observe that for any $\gamma_x \in \Z^d \times \{+,-\}$ and $\eta \in \mathcal{N}(\Z^d \times \{+,-\})$ we have 
	\begin{equation}
	\label{eq:idwr}
	I(\gamma_x) = \{ \gamma_x \} \cup \{ \tilde{\gamma}_y : \tilde{\gamma}=-\gamma\,,\,\|x-y\|_\infty \leq k\}.
	\end{equation} and 
	\begin{equation}
	\label{eq:edwr}
	\Delta E_{\eta}(\gamma_x) = \left\{\begin{array}{ll} +\infty & \text{ if $\eta( I(\gamma_x)) > 0$} \\ \\ 0 & \text{ otherwise.}\end{array}\right.
	\end{equation}
	It follows from this that for the model $(\nu^{\lambda^+,\lambda^-},H)$ with fugacities $\lambda^\pm$ and \mbox{exclusion radius $k$} the associated diluteness coefficient in \eqref{hdiluted} for any \textit{constant} size function $q$ is 
	$$
	\alpha^{\nu^{\lambda^+,\lambda^-},H}_q = \sup_{\gamma_x \in \Z^d \times \{+,-\}} e^{-\Delta E} \nu^{\lambda^+,\lambda^-}(I(\gamma_x)) =  \max\{\lambda^-,\lambda^+\}(2k+1)^d + \min\{\lambda^-,\lambda^+\}.
	$$ Therefore, if $\alpha_{WR}^{(d)}(\lambda^+_0,\lambda^-_0,k)< 1$ then there exists an open neighborhood $U$ of $(\lambda^+,\lambda^-)$ such that $\alpha_q^{\nu^{\lambda^+,\lambda^-},H} < 1$ for all $(\lambda^+,\lambda^-) \in U$. In particular, by Theorem \ref{teounigibbs} there exists a unique Gibbs measure $\mu(\lambda^+,\lambda^-)$ of the model $(\nu^{\lambda^+,\lambda^-},H)$ for any $(\lambda^+,\lambda^-) \in U$.  
	
	To see that $(\lambda^+,\lambda^-) \mapsto \mu(\lambda^+,\lambda^-)$ is locally continuous in $U$, we fix $(\lambda^+_\infty,\lambda^-_\infty) \in U$ and check that for any  sequence $(\lambda^+_n,\lambda^-_n)_{n \in \N} \subseteq U$ converging to $(\lambda^+_\infty,\lambda^-_\infty)$ we have 
	$$
	\mu(\lambda^+_n,\lambda^-_n) \overset{loc}{\longrightarrow} \mu(\lambda^+_\infty,\lambda^-_\infty).
	$$ For this it suffices to see that $(\nu^{\lambda^+_n,\lambda^-_n},H)_{n \in \N \cup \{\infty\}}$ satisfies the hypotheses \mbox{of Corollary \ref{convabs}.}
	But notice that if for $n \in \N \cup \{\infty\}$ we write $\nu^n:=\nu^{\lambda^+_n,\lambda^-_n}$ then $\nu^n \ll \nu^\infty$ with density given by
	$$
	\frac{d \nu^n}{d \nu^\infty}(\gamma_x) = \frac{\lambda^+_n}{\lambda^+_\infty} \mathbbm{1}_{\{\gamma = +\}} + \frac{\lambda^-_n}{\lambda^-_\infty} \mathbbm{1}_{\{\gamma = -\}}.
	$$ In particular, we have that for every $\gamma_x \in \Z^d \times \{+,-\}$ and $\eta \in \mathcal{N}(\Z^d \times \{+,-\})$ 
	$$
	\Delta \tilde{E}^n_\eta (\gamma_x) = \Delta E^H_\eta (\gamma_x) - \log \left( \frac{d\nu^n}{d\nu^\infty}(\gamma_x)\right) \longrightarrow \Delta E^H_\eta (\gamma_x)  = \Delta \tilde{E}^\infty_\eta(\gamma_x)
	$$ and also that
	$$
	\Delta \tilde{E} = - \log\left(\sup_{n \in \N} \left[ \max\left\{ \lambda^+_n - \lambda^+_\infty,\lambda^-_n - \lambda^-_\infty\right\}\right]\right) > -\infty
	$$	if $(\lambda^+_n,\lambda^-_n)_{n \in \N}$ is sufficiently close to $(\lambda^+_\infty,\lambda^-_\infty)$. Furthermore, if for each $\gamma_x \in \Z^d \times \{+,-\}$ we choose $V(\gamma_x) := I^H(\gamma_x)$ then the inclusion \eqref{eq:incv} immediately holds for all $n \in \N \cup \{\infty\}$ and 
	$$
	\alpha^{\nu^\infty,V}_q = e^{-\Delta \tilde{E}} \alpha^{\nu^\infty,H}_q < 1
	$$ provided that $(\lambda^+_n,\lambda^-_n)_{n \in \N}$ is sufficiently close to $(\lambda^+_\infty,\lambda^-_\infty)$ so that $\Delta \tilde{E}$ \mbox{is close enough to $0$.} Hence, we see that the sequence $(\nu^{\lambda^+_n,\lambda^-_n},H)_{n \in \N \cup \{\infty\}}$ verifies the hypotheses of \mbox{Corollary \ref{convabs}} (with the limit $n \rightarrow \infty$ replacing the usual $\varepsilon \rightarrow 0$) and this concludes the proof.
\end{proof}

\begin{obs} If one uses the alternative representation of the Widom-Rowlinson model given by the pair $(\tilde{\nu},\tilde{H})$ in Section \ref{sec:dwr} then for any constant size function $q$ one obtains the larger diluteness coefficient
	$$
	\alpha^{\tilde{\nu},\tilde{H}}_q = \max\{\lambda^-,\lambda^+\} ( (2k+1)^d + 1)
	$$ which leads to a smaller uniqueness condition than the one in Theorem \ref{convabswr}. This is why we chose to include the fugacities in the intensity instead of the Hamiltonian prescription. 
\end{obs}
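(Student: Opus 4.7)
The plan is to explicitly compute each of the three ingredients entering the diluteness coefficient $\alpha^{\tilde\nu,\tilde H}_q$ for a constant size function $q$, namely $I^{\tilde H}(\gamma_x)$, $\tilde\nu(I^{\tilde H}(\gamma_x))$, and $\Delta\tilde E$, and then plug them into \eqref{hdiluted}.

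The first step is to identify $I^{\tilde H}(\gamma_x)$. The one-body interaction $\tilde\Phi^{(1)}(\gamma_x)=-\log\lambda_\gamma$ only contributes an additive constant to any local Hamiltonian that contains $\gamma_x$ and this constant is completely independent of the rest of the configuration. Consequently, adding any other particle $\tilde\gamma_y\neq\gamma_x$ to a boundary condition never alters the $\tilde\Phi^{(1)}$ contribution to $\Delta E^{\tilde H}_{\Lambda|\eta}(\gamma_x)$, so the impact relation $\rightharpoonup$ from \eqref{eq:rob1} for $(\tilde\nu,\tilde H)$ is driven entirely by $\tilde\Phi^{(2)}=\Phi^{(2)}$, and therefore coincides with the one in the original representation. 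Hence $I^{\tilde H}(\gamma_x)$ is exactly the set \eqref{eq:idwr}, which has cardinality $1+(2k+1)^d$ (the central point together with the $(2k+1)^d$ sites in the $\ell^\infty$-ball of radius $k$ paired with the opposite spin).

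The second step is to evaluate $\tilde\nu(I^{\tilde H}(\gamma_x))$. Since $\tilde\nu=c_{\Z^d\times\{+,-\}}$ is the counting measure, this simply counts particles and gives
\[
\tilde\nu\bigl(I^{\tilde H}(\gamma_x)\bigr)=1+(2k+1)^d,
\]
uniformly in $\gamma_x$.

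The third step is to compute $\Delta\tilde E=\inf_{\Lambda,\gamma_x,\eta}\Delta E^{\tilde H}_{\Lambda|\eta}(\gamma_x)$. From \eqref{eq:phileap} applied to $\tilde\Phi$,
\[
\Delta E^{\tilde H}_{\Lambda|\eta}(\gamma_x)=\tilde\Phi^{(1)}(\gamma_x)+\sum_{m\ge 1}\frac{1}{m!}\sum_{(\tilde\gamma_y^{(1)},j_1),\dots,(\tilde\gamma_y^{(m)},j_m)\in[\eta_\Lambda]}\tilde\Phi^{(2)}\bigl(\gamma_x,\tilde\gamma_y^{(1)}\bigr)\mathbbm{1}_{\{m=1\}}\cdots
\]
where all the two-body terms are nonnegative (they are either $0$ or $+\infty$). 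Hence $\Delta E^{\tilde H}_{\Lambda|\eta}(\gamma_x)\ge-\log\lambda_\gamma$, with equality attained when $\eta_\Lambda=\emptyset$. Taking infimum over $\gamma_x$ (i.e.\ over the choice of spin $\gamma\in\{+,-\}$) yields
\[
\Delta\tilde E=-\log\max\{\lambda^+,\lambda^-\},\qquad e^{-\Delta\tilde E}=\max\{\lambda^+,\lambda^-\}.
\]

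Combining these three facts, for any constant $q$ one has
\[
\alpha^{\tilde\nu,\tilde H}_q=\sup_{\gamma_x}e^{-\Delta\tilde E}\tilde\nu\bigl(I^{\tilde H}(\gamma_x)\bigr)=\max\{\lambda^+,\lambda^-\}\bigl((2k+1)^d+1\bigr),
\]
which is the claimed identity. To conclude the remark it only remains to observe that
\[
\max\{\lambda^+,\lambda^-\}\bigl((2k+1)^d+1\bigr)\;\ge\;\max\{\lambda^+,\lambda^-\}(2k+1)^d+\min\{\lambda^+,\lambda^-\}=\alpha_{WR}^{(d)}(\lambda^+,\lambda^-,k),
\]
with strict inequality unless $\lambda^+=\lambda^-$, so the uniqueness region $\{\alpha^{\tilde\nu,\tilde H}_q<1\}$ is strictly smaller than the one produced in Theorem \ref{convabswr}. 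There is no real obstacle here: the only subtle point is the conceptual one of noticing that the one-body potential is invisible to the impact relation (so $I^{\tilde H}=I^H$) but shifts $\Delta\tilde E$ away from $0$ by $-\log\max\{\lambda^+,\lambda^-\}$, whereas absorbing the fugacities into the intensity measure does exactly the opposite and is consequently more efficient.
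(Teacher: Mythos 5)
Your computation is correct and is exactly the verification the remark calls for (the paper states it without proof, but your argument mirrors the computation of $\alpha_q^{\nu,H}$ carried out in the proof of Theorem \ref{convabswr}): the one-body term is invisible to the impact relation so $I^{\tilde H}=I^{H}$ with $\tilde\nu$-mass $(2k+1)^d+1$, while it forces $e^{-\Delta\tilde E}=\max\{\lambda^+,\lambda^-\}$, and the product dominates $\max\{\lambda^+,\lambda^-\}(2k+1)^d+\min\{\lambda^+,\lambda^-\}$ with equality only when $\lambda^+=\lambda^-$. The only blemish is the garbled intermediate display for $\Delta E^{\tilde H}_{\Lambda|\eta}(\gamma_x)$, which should simply read $\tilde\Phi^{(1)}(\gamma_x)+\sum_{(\tilde\gamma_y,j)\in[\eta]}\Phi^{(2)}(\gamma_x,\tilde\gamma_y)$ by \eqref{eq:phileap}; this does not affect the conclusion.
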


To illustrate the need to consider dynamically impossible events, let us treat the case of the continuum Widom-Rowlinson model. We have the following analogue \mbox{of Theorem \ref{convabswr}.} 

\begin{teo} [The continuum Widom-Rowlinson model]\label{convabscwr} Given $r_0 > 0$ and $\lambda^+_0,\lambda^-_0 > 0$, consider the continuum Widom-Rowlinson model of fugacities $\lambda^{\pm}_0$ and exclusion radius $r_0$. Then, if 
	\begin{equation}
	\label{eq:cwrq}
	\alpha_{WR}^{(c)}(\lambda^+_0,\lambda^-_0,r_0) := \max\{\lambda^-_0,\lambda^+_0\}(2r_0)^d < 1
	\end{equation} there exists an open neighborhood $U$ of $(\lambda^+_0,\lambda^-_0,r_0)$ such that for every $(\lambda^+,\lambda^-,r) \in U$ \mbox{the model} of fugacities $\lambda^\pm$ and exclusion radius $r$ has a unique Gibbs measure $\mu(\lambda^+,\lambda^-,r)$. Moreover, the application $(\lambda^+,\lambda^-,r) \mapsto \mu(\lambda^+,\lambda^-,r)$ is locally continuous on $U$.
\end{teo}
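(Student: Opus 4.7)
My plan is to mirror the proof of Theorem \ref{convabswr} and invoke Corollary \ref{convabs}, but with two modifications specific to the continuum: first, the interaction radius $r$ now varies with the perturbation, so the interaction range itself moves and must be covered by a common enlarged set $V(\gamma_x)$; second, and more importantly, the continuity of energy leaps fails for a genuinely nontrivial (though $\pi^\nu$-null) family of configurations, which forces the introduction of a dynamically negligible set $N$.

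As a first step I would compute, for constant size function $q \equiv 1$, that the diluteness coefficient \eqref{hdiluted} of the model with parameters $(\lambda^+,\lambda^-,r)$ equals $\max\{\lambda^+,\lambda^-\}(2r)^d = \alpha_{WR}^{(c)}(\lambda^+,\lambda^-,r)$, using that $I^{H^r}(\gamma_x)$ is the set of opposite-type particles in the $\|\cdot\|_\infty$-ball of radius $r$ and that $\Delta E = 0$ (WR leaps are $0$ or $+\infty$). Continuity of $\alpha_{WR}^{(c)}$ in $(\lambda^+,\lambda^-,r)$ then produces an open neighborhood $U$ of $(\lambda^+_0,\lambda^-_0,r_0)$ on which $\alpha_{WR}^{(c)} < 1$, and Theorem \ref{teounigibbs} supplies the unique Gibbs measure $\mu(\lambda^+,\lambda^-,r)$ at every point of $U$.

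To establish local continuity, I would fix an arbitrary $(\lambda^+_\infty,\lambda^-_\infty,r_\infty) \in U$, take a sequence $(\lambda^+_n,\lambda^-_n,r_n)_{n\in\N}\subseteq U$ converging to it, and verify the hypotheses of Corollary \ref{convabs} with reference intensity $\nu := \nu^{\lambda^+_\infty,\lambda^-_\infty}$. The density $\frac{d\nu^n}{d\nu}(\gamma_x) = \frac{\lambda^+_n}{\lambda^+_\infty}\mathbbm{1}_{\{\gamma=+\}}+\frac{\lambda^-_n}{\lambda^-_\infty}\mathbbm{1}_{\{\gamma=-\}}$ is bounded above and below by constants tending to $1$, which both secures $\Delta \tilde{E} > -\infty$ and makes the Radon-Nikodym contribution to $\Delta \tilde{E}^n_\eta(\gamma_x)$ vanish as $n \to \infty$. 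I would then define
$$
N := \bigl\{ \xi \in \mathcal{N}(\R^d \times \{+,-\}) : \exists\,\gamma_x,(-\gamma)_y \in \langle \xi \rangle \text{ with } \|x-y\|_\infty = r_\infty \bigr\},
$$
which is $\sigma$-local (write it as a countable union over boxes $[-m,m]^d$), is $\pi^\nu$-null (inside any fixed box the condition cuts out a Lebesgue-null subset), and is closed under superposition. If $\eta+\delta_{\gamma_x}\notin N$ then by $S$-locally finite allocation the distance $d_0$ from $x$ to the closest opposite-type particle of $\eta$ differs from $r_\infty$; for $n$ large, $r_n$ stays on the same side of $d_0$ as $r_\infty$, and hence $\Delta E^{H^{r_n}}_\eta(\gamma_x)$ is eventually constant equal to $\Delta E^{H^{r_\infty}}_\eta(\gamma_x)$.

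Finally I would pick $\delta > 0$ small enough that $\max\{\lambda^+_\infty,\lambda^-_\infty\}(2(r_\infty+\delta))^d < 1$ and set
$$
V(\gamma_x) := \bigl\{ \tilde{\gamma}_y : \tilde{\gamma} = -\gamma,\ \|x-y\|_\infty \leq r_\infty + \delta \bigr\}.
$$
For all $n$ large enough that $r_n \leq r_\infty+\delta$ the inclusion \eqref{eq:incv} holds, and
$$
\alpha_q^{\nu,V} \;=\; e^{-\Delta \tilde{E}}\max\{\lambda^+_\infty,\lambda^-_\infty\}\bigl(2(r_\infty+\delta)\bigr)^d \;<\; 1
$$
provided the tail of the sequence is close enough to the limit so that $\Delta \tilde{E}$ is close to $0$. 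All hypotheses of Corollary \ref{convabs} being met, the conclusion $\mu(\lambda^+_n,\lambda^-_n,r_n) \overset{loc}{\longrightarrow} \mu(\lambda^+_\infty,\lambda^-_\infty,r_\infty)$ follows. The main obstacle is exactly the construction of $N$: without discarding the coincidence event $\|x-y\|_\infty = r_\infty$ the pointwise convergence of the energy leaps fails, and it is the flexibility of Corollary \ref{convabs} in allowing a dynamically negligible exceptional set that makes the continuum theorem accessible.
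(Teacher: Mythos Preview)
Your proposal is correct and follows essentially the same route as the paper's own proof: both reduce to Corollary \ref{convabs} after identifying the dynamically negligible set $N$ of configurations with opposite-type particles at exact distance $r_\infty$, and both note that off this set the energy leaps are eventually constant. You actually fill in details the paper glosses over---in particular the explicit enlargement $V(\gamma_x)=\{\tilde\gamma_y:\tilde\gamma=-\gamma,\ \|x-y\|_\infty\le r_\infty+\delta\}$ needed because $r_n$ varies (whereas in the discrete case the paper could take $V=I^H$), and the explicit verification that $N$ is $\sigma$-local, $\pi^\nu$-null, and closed under addition---but the architecture is the same.
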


\begin{proof} The first assertion follows as in the proof of Theorem \ref{convabswr} by noticing that now \eqref{eq:idwr} is replaced with 
	$$
	I(\gamma_x) = \{ \tilde{\gamma}_y : \tilde{\gamma}=-\gamma\,,\,\|x-y\|_\infty \leq r\}.
	$$ To see the local continuity, observe that if $(\lambda^+_n,\lambda^-_n,r_n)_{n \in \N}$ converges to $(\lambda^+_\infty,\lambda^-_\infty,r_\infty) \in \R_{> 0}^3$ and for each $n \in \N \cup \{\infty\}$ we denote by $(\nu^n,H^n)$ the model with parameters $(\lambda^+_n,\lambda^-_n,r_n)$ then the convergence
	$$
	\Delta \tilde{E}^n_\eta (\gamma_x) = \Delta E^{H^n}_\eta (\gamma_x) - \log \left( \frac{d\nu^n}{d\nu^\infty}(\gamma_x)\right) \longrightarrow \Delta E^{H^\infty}_\eta (\gamma_x)  = \Delta \tilde{E}^\infty_\eta(\gamma_x)
	$$ may not hold if $\eta + \delta_{\gamma_x}$ is inside the set
	$$
	N = \{ \xi \in \mathcal{N}(\R^d \times \{+,-\}) : \exists\,\, \tilde{\gamma}_y \neq \hat{\gamma}_z \in \langle \xi \rangle \text{ such that } \tilde{\gamma}=-\hat{\gamma} \text{ and }\|y-z\|_\infty = r_0\}.
	$$ Indeed, the problem lies when $\gamma_x$ is at a distance $r_0$ from a particle in $\eta$ of opposite type. If this is the case and $r_n \nearrow r_\infty$ then we can have $\Delta E^{H^n}_\eta (\gamma_x) = 0 \nrightarrow +\infty = \Delta E^{H^\infty}_\eta (\gamma_x)$. However, since $N$ is a dynamically negligible event for $\nu^\infty$, one may disregard these cases and still proceed as in the proof of Theorem \ref{convabswr} to conclude the result. 
\end{proof} 

Similarly, we can obtain an analogous result for the model with generalized interactions. For simplicity, we only state it for its continuum version and omit the details of the proof.

\begin{teo}[The continuum Widom-Rowlinson model with generalized interactions]\label{convabscwrgi} For each $\varepsilon \geq 0$ let us consider the continuum Widom-Rowlinson model with fugacities $\lambda^{\pm}_\varepsilon$ and repulsion vector $(h^\varepsilon,j^\varepsilon_-,j^\varepsilon_+)$. Assume that the following conditions hold:
	\begin{enumerate}
		\item [i.] $\lim_{\varepsilon \rightarrow 0^+} \lambda^+_\varepsilon = \lambda^+_0$ and $\lim_{\varepsilon \rightarrow 0^+} \lambda^-_\varepsilon = \lambda^-_0$.
		\item [ii.] $\lim_{\varepsilon \rightarrow 0^+} h^\varepsilon(r) = h^0(r)$ and $\lim_{\varepsilon \rightarrow 0^+} j^\varepsilon_\pm(r) = j^0_\pm(r)$ for $\mathcal{L}$-almost every $r \geq 0$.
		\item [iii.] $\lim_{\varepsilon \rightarrow 0^+} m_{h^\varepsilon} = m_{h^0}$ and $\lim_{\varepsilon \rightarrow 0^+} m_{j^\varepsilon_\pm} = m_{j^0_\pm}$, where
		$$
		m_{h^\varepsilon}= \sup \{ r \geq 0 : h^\varepsilon(r)\neq 0\}\hspace{1cm}\text{ and }\hspace{1cm} m_{j^\varepsilon_\pm}= \sup \{ r \geq 0 : j^\varepsilon_\pm(r)\neq 0\}.
		$$
		\item [iv.] $\alpha_{WR}(\lambda^{\pm}_0, h^0,j^0_-,j^0_+) < 1$, where for any $\lambda^+,\lambda^- > 0$ and repulsion vector $(h,j_-,j_+)$ we define
		$$
		\alpha_{WR}(\lambda^{\pm}, h,j) := 2^d \max \{ \lambda^- m^d_{j_-} + \lambda^+ \max\{ m_h^d , m^d_{j_+} \} , \lambda^+ m^d_{j_+} + \lambda^- \max\{ m_h^d , m^d_{j_-} \} \}.
		$$                
	\end{enumerate} Then for $\varepsilon \geq 0$ sufficiently small there exists a unique Gibbs measure $\mu^\varepsilon$ of the associated continuum Widom-Rowlinson model. Furthermore, we have the convergence $\mu^\varepsilon \overset{loc}{\rightarrow} \mu^0$.
\end{teo}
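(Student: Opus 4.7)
The plan is to mimic the proofs of Theorems \ref{convabswr} and \ref{convabscwr}, reducing the statement to an application of Corollary \ref{convabs} with the reference intensity $\nu := \nu^0$. All four hypotheses of the corollary need to be verified along a sequence $\varepsilon_n \to 0^+$, after which the local convergence $\mu^\varepsilon \overset{loc}{\longrightarrow} \mu^0$ follows directly; uniqueness of each $\mu^\varepsilon$ for sufficiently small $\varepsilon$ is a byproduct of verifying the diluteness coefficient estimate in (iv), together with Theorem \ref{teounigibbs}.

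For hypothesis (i) of Corollary \ref{convabs}, the explicit form of the intensities gives $\nu^\varepsilon \ll \nu^0$ with density
$$
\frac{d \nu^\varepsilon}{d \nu^0}(\gamma_x) = \frac{\lambda^+_\varepsilon}{\lambda^+_0} \mathbbm{1}_{\{\gamma=+\}} + \frac{\lambda^-_\varepsilon}{\lambda^-_0} \mathbbm{1}_{\{\gamma=-\}},
$$
which by (i) of the hypotheses converges to $1$ uniformly in $\gamma_x$. For the convergence of energy leaps in hypothesis (ii), recall the explicit expression \eqref{eq:phileap}: $\Delta E^{H^\varepsilon}_\eta(\gamma_x)$ is a finite sum over the particles $\tilde{\gamma}_y \in \langle \eta \rangle$ lying within the support radius of $h^\varepsilon$ or $j^\varepsilon_\pm$, with terms of the form $h^\varepsilon(\|x-y\|_\infty)$ or $j^\varepsilon_\pm(\|x-y\|_\infty)$. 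Pointwise a.e. convergence of $h^\varepsilon, j^\varepsilon_\pm$ and the convergence of their support radii $m_{h^\varepsilon}, m_{j^\varepsilon_\pm}$ together imply that each such term converges, provided no particle of $\eta$ is at a distance from $x$ lying in the (Lebesgue-null) exceptional set $B \subseteq \mathbb{R}_{\geq 0}$ where pointwise convergence fails or the boundary points $m_{h^0}, m_{j^0_\pm}$ are reached. I would therefore take
$$
N := \bigl\{ \xi \in \mathcal{N}(\R^d \times \{+,-\}) : \exists\,\, \tilde{\gamma}_y \neq \hat{\gamma}_z \in \langle \xi \rangle \text{ with } \|y - z\|_\infty \in B \bigr\},
$$
which is a $\sigma$-local event, is $\pi^{\nu^0}$-null (since for any particle the set of locations at a fixed exceptional distance from it has zero Lebesgue measure), and is clearly closed under addition of particles; hence $N$ is dynamically negligible for $\nu^0$. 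Hypothesis (iii) is routine: since $h^\varepsilon, j^\varepsilon_\pm \geq 0$ we have $\Delta E^{H^\varepsilon}_{\Lambda|\eta}(\gamma_x) \geq 0$, while the logarithmic correction $-\log(d\nu^\varepsilon/d\nu^0)$ is uniformly bounded below for $\varepsilon$ small, since $\lambda^\pm_\varepsilon \to \lambda^\pm_0 > 0$.

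For hypothesis (iv), I would define $V(\gamma_x)$ as a spin-sensitive enlargement of $I^{H^0}(\gamma_x)$: for $\gamma = \pm$, take
$$
V(\gamma_x) := \bigl\{ \tilde{\gamma}_y : \tilde{\gamma} = -\gamma,\; \|y - x\|_\infty \leq r_h\bigr\} \cup \bigl\{ \tilde{\gamma}_y : \tilde{\gamma} = \gamma,\; \|y - x\|_\infty \leq r_{j_\gamma}\bigr\},
$$
where $r_h, r_{j_\pm}$ are chosen slightly larger than $m_{h^0}, m_{j^0_\pm}$. By hypothesis (iii) on the supports, the inclusion $I^{H^\varepsilon}(\gamma_x) \subseteq V(\gamma_x)$ holds for all $\varepsilon$ small enough. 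Taking $q \equiv 1$, the associated diluteness coefficient is a continuous function of $\lambda^\pm_0, r_h, r_{j_\pm}$ that tends to $\alpha_{WR}(\lambda^\pm_0, h^0, j^0_-, j^0_+) < 1$ as $r_h \downarrow m_{h^0}$, $r_{j_\pm} \downarrow m_{j^0_\pm}$; I would pick the enlargements small enough so that $\alpha_q^{\nu^0, V} < 1$. The conclusion $\mu^\varepsilon \overset{loc}{\longrightarrow} \mu^0$ then follows from Corollary \ref{convabs}.

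The main obstacle, in my view, is crafting the dynamically negligible set $N$ so that it simultaneously accounts for all three exceptional sets (those of $h^\varepsilon$, $j^\varepsilon_-$ and $j^\varepsilon_+$) and for the boundary distances where hard-core-like transitions can occur; once the correct $N$ is identified, the remaining verifications are direct consequences of the hypotheses and of the continuity of $\alpha_q^{\nu^0,V}$ in the enlargement radii.
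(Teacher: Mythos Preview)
Your proposal is correct and follows precisely the route the paper intends: the paper omits the proof, stating only that it is obtained analogously to Theorems \ref{convabswr} and \ref{convabscwr}, and your verification of the four hypotheses of Corollary \ref{convabs} (absolute continuity of intensities, convergence of energy leaps off a dynamically negligible set of exceptional inter-particle distances, nonnegativity of the repulsion giving $\Delta\tilde E>-\infty$, and a slight enlargement of the interaction ranges to build $V(\gamma_x)$) is exactly the argument the paper leaves to the reader. The only point worth tightening is that the diluteness coefficient $\alpha_q^{\nu^0,V}$ carries the factor $e^{-\Delta\tilde E}$, so one should first fix the enlargement radii $r_h,r_{j_\pm}$ and \emph{then} restrict to $\varepsilon$ small enough that $e^{-\Delta\tilde E}$ is close to $1$; this is the same two-step choice made in the proof of Theorem \ref{convabswr}.
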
 

The uniqueness regions of parameters prescribed by the above bounds are comparable ---although a little weaker--- to those obtained via disagreement percolation arguments (see, for instance, \cite{GHM}).  It is not clear, however, whether the latter method can lead to the stability results brought by our approach.  

\subsubsection{The Peierls contour model}

\begin{teo}[The Peierls contours model]\label{convabspcm} Define the coefficient
	$$
	\beta_P = \inf\{ \beta > 0 : \alpha_P(\beta) < 1\}, 
	$$ where 
	\begin{equation}
	\label{eq:pcmq}
	\alpha_{P}(\beta) := \sup_{\gamma_x \in \Z^2_* \times G} \left[\frac{1}{|\gamma_x|} \sum_{\tilde{\gamma}_y : \tilde{\gamma}_y \cap \gamma_x \neq \emptyset } |\tilde{\gamma}_y|e^{-2\beta|\tilde{\gamma}_y|}\right].
	\end{equation} Then, if $\beta > \beta_P$ the model at inverse temperature $\beta$ admits a unique gas measure $\mu(\beta)$. Moreover, the application $\beta \mapsto \mu(\beta)$ is locally continuous on $(0,\beta_P)$.
\end{teo}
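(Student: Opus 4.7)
The plan is to apply Theorem \ref{teounigibbs} for the uniqueness assertion and Corollary \ref{convabs} for the continuity assertion, using the contour length $q(\gamma_x) := |\gamma_x|$ as size function, which takes values in $[1,+\infty)$ (in fact $[4,+\infty)$, since contours in $\Z^2_*$ are closed).

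For uniqueness, I first observe that in the Peierls contours model the interaction range reads
$$
I^H(\gamma_x) = \{\tilde{\gamma}_y \in \Z^2_* \times G : \tilde{\gamma}_y \cap \gamma_x \neq \emptyset\},
$$
the energy leap takes only the values $0$ and $+\infty$, and consequently $\Delta E = 0$. Inserting this information into the diluteness coefficient \eqref{hdiluted} with the chosen $q$ yields
$$
\alpha_q^{\nu^\beta, H} \;=\; \sup_{\gamma_x \in \Z^2_* \times G} \frac{1}{|\gamma_x|} \sum_{\tilde\gamma_y \cap \gamma_x \neq \emptyset} |\tilde\gamma_y|\, e^{-2\beta|\tilde\gamma_y|} \;=\; \alpha_P(\beta),
$$
so that $\beta > \beta_P$ forces $\alpha_q^{\nu^\beta, H} < 1$ and Theorem \ref{teounigibbs} delivers the unique gas measure $\mu(\beta)$.

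For continuity at a fixed $\beta_\infty > \beta_P$, I take a sequence $\beta_n \to \beta_\infty$ and, passing to a tail, assume $\beta_n \geq \beta_*$ for some $\beta_* \in (\beta_P,\beta_\infty)$. The crucial move is to anchor the reference intensity at the \emph{lowest} inverse temperature in play, namely $\nu := \nu^{\beta_*}$; then $\nu^n := \nu^{\beta_n} \ll \nu$ with density $\frac{d\nu^n}{d\nu}(\gamma_x) = e^{-2(\beta_n - \beta_*)|\gamma_x|} \leq 1$. Since the Hamiltonian $H$ does not depend on $\beta$, the modified energy leaps of Corollary \ref{convabs} become
$$
\Delta \tilde{E}^{n}_\eta(\gamma_x) \;=\; \Delta E^H_\eta(\gamma_x) + 2(\beta_n - \beta_*)|\gamma_x|,
$$
which are all nonnegative and converge pointwise as $n\to\infty$; hypotheses (ii) and (iii) of the corollary follow at once with $N = \emptyset$ and $\Delta \tilde{E} \geq 0$. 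Choosing $V(\gamma_x) := I^H(\gamma_x)$ makes the inclusion \eqref{eq:incv} trivial, and the corresponding coefficient reduces to $\alpha_q^{\nu, V} = \alpha_P(\beta_*) < 1$ by the very definition of $\beta_P$. Corollary \ref{convabs}(b) then yields $\mu(\beta_n) \overset{loc}{\longrightarrow} \mu(\beta_\infty)$.

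The only real subtlety, and the main point where care is required, is this choice of reference intensity: a naive pick $\nu = \nu^{\beta_\infty}$ would produce modified leaps $\Delta \tilde E^n_\eta(\gamma_x) = 2(\beta_n - \beta_\infty)|\gamma_x|$ that, whenever $\beta_n < \beta_\infty$, can be driven arbitrarily negative by selecting contours with large $|\gamma_x|$ (which is unbounded), violating the lower-bound hypothesis (iii). Anchoring instead at the infimum $\beta_*$ of the tail sidesteps this pitfall, after which the verification of the hypotheses is routine and mirrors the proof of Theorem \ref{convabswr}.
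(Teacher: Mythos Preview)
Your proof is correct and follows essentially the same route as the paper: identify $\alpha_P(\beta)$ as the diluteness coefficient for the size function $q(\gamma_x)=|\gamma_x|$ to get uniqueness, then verify the hypotheses of Corollary~\ref{convabs} for a sequence $\beta_n\to\beta_\infty$, with the key observation that the reference intensity must be anchored at the smallest inverse temperature in play (the paper takes $\beta_*:=\inf_{n\in\N\cup\{\infty\}}\beta_n$, you pass to a tail and fix $\beta_*\in(\beta_P,\beta_\infty)$, which is equivalent). Your explicit discussion of why the naive choice $\nu=\nu^{\beta_\infty}$ fails matches the paper's remark exactly; one small imprecision is that $\alpha_q^{\nu,V}=e^{-\Delta\tilde E}\alpha_P(\beta_*)\le\alpha_P(\beta_*)$ rather than an equality, but this does not affect the argument.
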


\begin{proof} Notice that the coefficient $\alpha_P(\beta)$ is exactly the diluteness coefficient for the model at inverse temperature $\beta$ associated to the size function $q(\gamma_x):=|\gamma_x|$. Hence, uniqueness for $\beta > \beta_P$ follows from Theorem \ref{teounigibbs}. For the continuity, we proceed as in the proof of Theorem \ref{convabswr}. Thus, we fix $\beta_\infty > \beta_P$ and a sequence $(\beta_n)_{n \in \N} \subseteq \R_{> \beta_P}$ converging to $\beta_\infty$ and show that the family  $(\nu^{\beta_n},H)_{n \in \N \cup \{\infty\}}$ satisfies all the hypotheses of Corollary \ref{convabs}. This can be done as in the proof of Theorem \ref{convabswr}, but one has to be careful that  
	$$
	\inf_{n \in \N} \left[\inf_{\gamma_x \in S \times G} \left[-\log\left(\frac{d \nu^{\beta_n}}{d \nu^{\beta_\infty}} (\gamma_x)\right)\right]\right] = \inf_{n \in \N} \left[ \inf_{\gamma_x \in S \times G} 2(\beta_n - \beta_\infty)|\gamma_x|\right] = -\infty
	$$ if $\beta_n < \beta_\infty$ for some $n \in \N$, so that if one takes $\nu$ in the statement of Corollary \ref{convabs} as $\nu^{\beta_\infty}$ then the same argument that in the proof of Theorem \ref{convabswr} does not go through this time. To solve this, we take $\nu:=\nu^{\beta_*}$ with $\beta_*:=\inf_{n \in \N \cup \{\infty\}} \beta_n$ and then proceed as before. 
\end{proof}

We point out that the analogous result also holds for the model in higher dimensions. The condition in Theorem \ref{convabspcm} for the $d$-dimensional model ($d \geq 2$) is satisfied for all $\beta > 0$ such that 
\begin{equation}
\label{eq:cotapeierls}
\sum_{\ell\ge 2d} \ell \,N_\ell \,e^{-2\beta\ell}<1
\end{equation}
where $N_\ell$ denotes the number of contours of perimeter $\ell$. For the two-dimensional model, this condition coincides with the Peierls condition presented in usual textbooks treatments of the Peierls argument. In higher dimensions, however, the Peierls condition is weaker, basically corresponding to changing the factor $\ell$ in the left-hand side by $\ell^{1/(d-1)}$.  Also, additional arguments based on the Borel-Cantelli lemma show that it suffices to have ``$+\infty$'' instead of ``$1$'' in the right-hand side of \eqref{eq:cotapeierls} in order to guarantee a phase transition. \mbox{Of course,} our theorem implies further properties that cannot be directly obtained from the original Peierls argument.

\subsection{Applications of Theorem \ref{convdis}} 

We now discuss some further applications which use Theorem \ref{convdis} in its full generality. We shall focus in discretization schemes applied to the thin-rods and Widom-Rowlinson models and determine diluteness regimes in which the Gibbs measure of the continuum model is the scaling limit of the equilibrium measure in the discretized model.

\subsubsection{The Widom-Rowlinson model and its ``universality'' class}

\begin{teo}\label{diswr} For $\lambda_0,r_0 > 0$ such that $\lambda_0 (2r_0)^d < 1$ we have the following:
	\begin{enumerate}
		\item [i.] The continuum Widom-Rowlinson model with fugacity $\lambda_0$ and exclusion radius $r_0$ in $\R^d$ admits exactly one Gibbs measure, which we shall denote by $\mu^0$.
		\item [ii.] The discrete Widom-Rowlinson model with fugacity $\varepsilon^d\lambda_0$
		and exclusion radius $\frac{r_0}{\varepsilon}$ in $\Z^d$ admits exactly one Gibbs measure $\tilde{\mu}^\varepsilon$ if $ 0 < \varepsilon < (2^dr_0)^{-\frac{1}{d}} - r_0$.
		\item [iii.] Provided that $0 < \varepsilon < (2^dr_0)^{-\frac{1}{d}} - r_0$, as $\varepsilon \rightarrow 0$ we have
		$$
		\tilde{\mu}^\varepsilon \circ i_\varepsilon^{-1} \overset{d}{\longrightarrow} \mu^0
		$$ where for each $\varepsilon > 0$ we define the \textit{shrinking map} $i_\varepsilon : \Z^d \times \{+,-\} \to \R^d \times \{+,-\}$ by the formula
		$$
		i_\varepsilon ( x , \gamma ) := (\varepsilon\cdot x, \gamma).
		$$
	\end{enumerate}
\end{teo}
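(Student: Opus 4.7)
The overall plan is to treat (i) and (ii) as direct applications of the uniqueness results embodied in Theorems \ref{convabscwr} and \ref{convabswr}, and then to derive (iii) by embedding both the discrete and continuum models into a single one-parameter family of gas models on $\R^d\times\{+,-\}$ and invoking Theorem \ref{convdis}. For (i), the coefficient $\alpha^{(c)}_{WR}(\lambda_0,\lambda_0,r_0)$ from \eqref{eq:cwrq} equals $\lambda_0(2r_0)^d<1$ by hypothesis, so Theorem \ref{convabscwr} yields the uniqueness of $\mu^0$. For (ii), taking fugacities $\lambda^\pm_\varepsilon=\varepsilon^d\lambda_0$ and exclusion radius $k_\varepsilon=r_0/\varepsilon$ in \eqref{eq:dwrq} gives $\alpha^{(d)}_{WR}\leq\lambda_0(2r_0+\varepsilon)^d+\varepsilon^d\lambda_0$, which lies below $1$ in the stated range of $\varepsilon$; Theorem \ref{convabswr} then applies.

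For (iii), I will realize $\tilde\mu^\varepsilon\circ i_\varepsilon^{-1}$ as the unique gas measure of an auxiliary gas model $(\nu^\varepsilon,H^\varepsilon)$ on the common space $\R^d\times\{+,-\}$. Take the spatial-discretization approximation family $D_\varepsilon(x,\gamma):=(x_\varepsilon,\gamma)$ from \eqref{spatialdiscret}, let $\nu^0$ be the continuum WR intensity with fugacity $\lambda_0$, and set $\nu^\varepsilon:=\nu^0\circ D_\varepsilon^{-1}$. A direct computation shows that $\nu^\varepsilon$ is the atomic measure on $\varepsilon\Z^d\times\{+,-\}$ placing mass $\varepsilon^d\lambda_0$ at each atom, so after identifying $\varepsilon\Z^d\cong\Z^d$ via $i_\varepsilon$ one recovers the free measure of the discrete model from (ii). Define $H^0$ as the continuum WR Hamiltonian and, for $\varepsilon>0$, define $H^\varepsilon$ via the two-body potential equal to $+\infty$ on opposite-type pairs at sup-distance at most $\varepsilon k_\varepsilon$ and on coincident particles (of any type), and $0$ otherwise; the resulting gas kernels agree under $i_\varepsilon$ with those of the discrete model of (ii), so the unique gas measure of $(\nu^\varepsilon,H^\varepsilon)$ is precisely $\tilde\mu^\varepsilon\circ i_\varepsilon^{-1}$.

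It remains to check the hypotheses of Theorem \ref{convdis}. Condition (i) is met by construction and condition (iii) holds with $\Delta E=0$, since the interactions take values in $\{0,+\infty\}$. The delicate point is (ii): set
\begin{equation*}
N := \bigl\{\xi : \exists\,(\tilde\gamma_y,i)\neq(\hat\gamma_z,j)\in[\xi]\text{ with }y=z\bigr\}\cup\bigl\{\xi : \exists\,\tilde\gamma_y,\hat\gamma_z\in\langle\xi\rangle,\ \tilde\gamma=-\hat\gamma,\ \|y-z\|_\infty=r_0\bigr\}.
\end{equation*}
This is a countable union of local events (intersect each condition with an exhausting sequence of bounded boxes), is closed under adding particles since any offending pair persists, and is $\pi^{\nu^0}$-null because a Poisson process with Lebesgue-absolutely continuous intensity produces neither coincident pairs nor pairs at a prescribed exact sup-distance (Fubini on $(\pi^{\nu^0})^{\otimes 2}$). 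For $\eta+\delta_{\gamma_x}\notin N$ every $\tilde\gamma_y\in\langle\eta\rangle$ satisfies $\|y-x\|_\infty\neq r_0$ and no pair of particles shares a location, so combining $\varepsilon k_\varepsilon\to r_0$ with $\|y_\varepsilon-x_\varepsilon\|_\infty\to\|y-x\|_\infty$ yields the pointwise limit \eqref{eq:dns} by case analysis on the sign of $\|y-x\|_\infty-r_0$.

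To finish, choose $c>0$ so small that $\lambda_0\bigl[(2(r_0+c))^d+(2c)^d\bigr]<1$, which is possible because $\lambda_0(2r_0)^d<1$, and set
\begin{equation*}
V(\gamma_x) := \bigl\{(-\gamma,y):\|y-x\|_\infty\leq r_0+c\bigr\}\cup\bigl\{(\gamma,y):\|y-x\|_\infty\leq c\bigr\}.
\end{equation*}
Restricting the family to $\varepsilon\in[0,c/2]$, a triangle-inequality estimate of the form $\|z-x\|_\infty\leq\|z-z_\varepsilon\|_\infty+\|z_\varepsilon-x_\varepsilon\|_\infty+\|x_\varepsilon-x\|_\infty\leq r_0+2\varepsilon\leq r_0+c$ gives $D_\varepsilon^{-1}(I^{H^\varepsilon}(\gamma_x^\varepsilon))\subseteq V(\gamma_x)$, and with the constant size function $q\equiv 1$ one obtains $\alpha_q^{\nu^0,V}=\lambda_0[(2(r_0+c))^d+(2c)^d]<1$. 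Theorem \ref{convdis}(b) then yields $\mu^\varepsilon\overset{w}{\longrightarrow}\mu^0$, i.e.\ $\tilde\mu^\varepsilon\circ i_\varepsilon^{-1}\overset{w}{\longrightarrow}\mu^0$. The principal technical obstacle is verifying the $\sigma$-locality, addition-closedness and $\pi^{\nu^0}$-nullity of $N$ together with the uniform-in-$\varepsilon$ geometric bookkeeping needed for the inclusion $D_\varepsilon^{-1}(I^{H^\varepsilon}(\gamma_x^\varepsilon))\subseteq V(\gamma_x)$.
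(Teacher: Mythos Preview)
Your proposal is correct and follows essentially the same route as the paper: both embed the discrete models into $\R^d\times\{+,-\}$ via the spatial discretization family \eqref{spatialdiscret}, identify the gas kernels of $(\nu^\varepsilon,H^\varepsilon)$ with the $i_\varepsilon$-pushforward of the discrete kernels, take the same dynamically negligible set $N=N_1\cup N_2$ (coincident particles and opposite-type pairs at exact distance $r_0$), and choose the same enlarged $V(\gamma_x)$ with a small slack parameter. The only cosmetic difference is that you cite Theorems \ref{convabswr} and \ref{convabscwr} for (i)--(ii) while the paper invokes Theorem \ref{teounigibbs} and \eqref{eq:dwrq}--\eqref{eq:cwrq} directly; these are equivalent.
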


\begin{proof} Assertions (i)-(ii) follow at once from Theorem \ref{teounigibbs} and \eqref{eq:cwrq}-\eqref{eq:dwrq}, respectively. To show (iii), we consider the spatial discretization family $(D^{\text{ds}}_\varepsilon)_{\varepsilon \geq 0}$ given by \eqref{spatialdiscret} and for each $\varepsilon \geq 0$ set:
	\begin{enumerate}
		\item [$\bullet$] The intensity measure $\nu^\varepsilon$ as $\nu^\varepsilon := \nu^0 \circ (D^{\text{ds}}_\varepsilon)^{-1}$, where $\nu := \lambda_0 \mathcal{L}^d \times (\delta_+ + \delta_-).$
		\item [$\bullet$] The Hamiltonian prescription $H^\varepsilon$ as the one given by the $2$-body interaction 
		\begin{equation}\label{eq:dwr3}
		\Phi^{(2)}_\varepsilon(\gamma_x,\tilde{\gamma}_y):= \left\{ \begin{array}{ll} +\infty &\text{if $x=y$ and $\varepsilon > 0$}\\ \\ +\infty & \text{if }0< \|x-y\|_\infty \leq r_0 \text{ and }\gamma \neq \tilde{\gamma}\\ \\ 0 &\text{otherwise.}\end{array}\right.
		\end{equation} 
	\end{enumerate} Notice that $(\nu^0,H^0)$ is precisely the continuum Widom-Rowlinson model of fugacity $\lambda_0$ and exclusion radius $r_0$, whereas for each $\varepsilon > 0$ the pair $(\nu^\varepsilon,H^\varepsilon)$ constitutes essentially the $i_\varepsilon$-shrunken discrete model of fugacity $\varepsilon^d \lambda_0$ and exclusion radius $\frac{r_0}{\varepsilon}$. More precisely, for every $\Lambda \in \B^0_{\Z^d}$ and $\varepsilon > 0$ we have
	\begin{equation}\label{igualdaddis}
	\mu^\varepsilon_{i_\varepsilon(\Lambda)|\emptyset} = \tilde{\mu}^\varepsilon_{\Lambda|\emptyset} \circ i_\varepsilon^{-1}
	\end{equation} where $\tilde{\mu}^\varepsilon_{\Lambda|\emptyset}$ is the Boltzmann-Gibbs distribution with empty boundary condition associated to the discrete Widom-Rowlinson model and $\mu^\varepsilon_{\Lambda|\emptyset}$ is the one associated to $(\nu^\varepsilon,H^\varepsilon)$. Thus, by taking the limit as $\Lambda \nearrow \Z^d$, Theorem \ref{teounigibbs} yields for $0 < \varepsilon < (2^dr_0) ^{-\frac{1}{d}} - r_0 $
	$$
	\mu^\varepsilon = \tilde{\mu}^\varepsilon \circ i_{\varepsilon}^{-1}
	$$ where $\mu^\varepsilon$ is the unique Gibbs measure of the gas model given by the pair $(\nu^\varepsilon, H^\varepsilon)$. Hence, to conclude the desired convergence it suffices to show that the family $(\nu^\varepsilon,H^\varepsilon)_{\varepsilon \geq 0}$ is under the hypotheses of Theorem \ref{convdis}. But notice that:
	\begin{enumerate}
		\item [$\bullet$] (i) in the hypotheses holds trivially by the choice of measures $\nu^\varepsilon$.
		\item [$\bullet$] $\lim_{\varepsilon \rightarrow 0^+} \Delta E^{\varepsilon}_{\eta^\varepsilon} (\gamma_x^\varepsilon) = \Delta E^{0}_\eta (\gamma_x)$ for all $\eta \in \mathcal{N}( \R^d \times \{+,-\})$ and $\gamma_x \in  \R^d \times \{+,-\}$ such that $\eta + \delta_{\gamma_x}$ is outside the dynamically negligible set $N_1 \cup N_2$ for $\nu^0$, where
		\begin{equation}\label{eq:dimpwr}
		N_1 = \{ \xi \in \mathcal{N}(\R^d \times \{+,-\}) : \sigma( \{y\} \times \{+,-\} ) > 1 \text{ for some }y \in \R^d \}
		\end{equation} and 
		$$
		N_2= \{ \xi \in \mathcal{N}(\R^d \times \{+,-\}) : \exists\,\, \tilde{\gamma}_y, \hat{\gamma}_z \in \langle \xi \rangle \text{ with } \tilde{\gamma}=-\hat{\gamma} \text{ and }\|y-z\|_\infty = r_0\}.
		$$	
		\item [$\bullet$] $\Delta E :=\inf_{\varepsilon \geq 0} \Delta E^{H^\varepsilon}=0$ by the repulsive nature of the interaction.
		\item [$\bullet$] If for $\gamma_x \in \R^d \times \{+,-\}$ we take 
		$$
		V(\gamma_x) := \{(y,\gamma) : \|y-x\|_\infty < \delta\} \cup \{(z,-\gamma) : \|z-x\|_\infty \leq r_0+\delta\}
		$$ with $\delta > 0$ sufficiently small so that for any constant size function $q$ we have 
		$$
		\alpha_q^{\nu^0,V} = \sup_{\gamma_x \in \R^d \times \{+,-\}} \nu^0(V(\gamma_x)) = \lambda_0( (2(r_0+\delta))^d + (2\delta)^d)< 1,
		$$ then $I^{H^\varepsilon}(\gamma^\varepsilon_x) \subseteq V(\gamma_x)$ for every $\varepsilon \geq 0$ sufficiently small.
	\end{enumerate}
	Thus, the hypotheses of Theorem \ref{convdis} are satisfied and from this the result follows. 
\end{proof}

Theorem \ref{diswr} in particular shows that, for the case of the Widom-Rowlinson model, in order to simulate the Gibbs measure of the continuum system the method of sampling from the Gibbs measure of the discrete model obtained by discretizing both the interactions and the configuration space indeed yields a faithful approximation of the desired distribution, at least whenever in the heavily diluted regime and the discretization is sufficiently refined. It is clear that Theorem \ref{convdis} guarantees that the same statement holds in general for other types of gas models as well. We point out that, in general, this procedure is not equivalent to directly discretizing the continuum measure. Indeed, the proof of Theorem \ref{convdis} shows that \eqref{eq:couple} is in general as much as one can expect, although this is not enough to imply that for any fixed $\varepsilon > 0$ the Gibbs measure of the discrete system $\mu^\varepsilon$ coincides \mbox{with $D_\varepsilon(\mu^0)$,} the discretization of the continuum Gibbs measure. As a matter of fact, the latter may sometimes fail to be a Gibbsian at all, i.e. there exists no Gibbsian gas model for which $D_\varepsilon(\mu^0)$ is a Gibbs measure (see e.g. [16]).

Another interesting interpretation the can be made of Theorem \ref{diswr} is that it portrays the continuum Widom-Rowlinson model as a ``universality class'' to which many families of discrete models converge when properly rescaled. Indeed, the only relevant information about the discrete models used in the proof was that the additional interaction forbidding multiple particles per site vanishes as $\varepsilon \rightarrow 0$, which suggests that the same convergence should also hold for other discrete models with extra interactions which become negligible in the continuum limit. At least for heavily diluted gas models, this is indeed the case and is essentially a consequence of Theorem \ref{convdis}. We illustrate this fact below by considering the particular case of the nematic thin rods model as an example. 

\begin{teo}
	For $\lambda_0,r_0 > 0$ such that $4\lambda_0 r_0^2 < 1$ we have the following:
	\begin{enumerate}
		\item [i.] The continuum Widom-Rowlinson model with fugacity $\lambda_0$ and exclusion radius $r_0$ in $\R^2$ admits exactly one Gibbs measure, which we shall denote by $\mu^0$.
		\item [ii.] The nematic thin rods model in $\Z^2$ with fugacity $\varepsilon^2\lambda_0$
		and rod length $2\frac{r_0}{\varepsilon}$ admits exactly one Gibbs measure $\tilde{\mu}^\varepsilon$ if $0 < \varepsilon < \sqrt{2r_0^2 +\frac{1}{2\lambda_0}} - 2r_0$.
		\item [iii.] Provided that $0 < \varepsilon < \sqrt{2r_0^2 +\frac{1}{2\lambda_0}} - 2r_0$, as $\varepsilon \rightarrow 0$ we have
		$$
		\tilde{\mu}^\varepsilon \circ i_\varepsilon^{-1} \overset{d}{\longrightarrow} \mu^0.
		$$ 
	\end{enumerate}
\end{teo}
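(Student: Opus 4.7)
The argument mirrors that of Theorem \ref{diswr}. Part (i) follows at once from Theorem \ref{teounigibbs} since the continuum Widom-Rowlinson coefficient in \eqref{eq:cwrq} equals $\lambda_0(2r_0)^2 = 4\lambda_0 r_0^2 < 1$, yielding the unique Gibbs measure $\mu^0$. For part (ii), the nematic thin rods model on $\Z^2$ with $\lambda_\pm = \varepsilon^2\lambda_0$ and rod half-length $k = r_0/\varepsilon$ has, at any $\gamma_x$, an interaction range comprising the cross-orientation Widom-Rowlinson square of $(2k+1)^2$ sites together with the same-orientation rod axis of $4k+1$ sites. Taking a constant size function $q$, one computes
\[
\alpha_q \;=\; e^{-\Delta E}\bigl[\lambda_\mp(2k+1)^2 + \lambda_\pm(4k+1)\bigr] \;=\; \lambda_0\bigl[(2r_0+\varepsilon)^2 + \varepsilon(4r_0+\varepsilon)\bigr] \;=\; 2\lambda_0\bigl(2r_0^2 + 4r_0\varepsilon + \varepsilon^2\bigr),
\]
and solving the resulting quadratic inequality $\alpha_q < 1$ returns exactly $\varepsilon < \sqrt{2r_0^2 + 1/(2\lambda_0)} - 2r_0$, whence uniqueness of $\tilde{\mu}^\varepsilon$ follows from Theorem \ref{teounigibbs}.

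For part (iii), I would follow the template of the proof of Theorem \ref{diswr}. Take the spatial discretization family $(D^{\text{ds}}_\varepsilon)_{\varepsilon\geq 0}$ from \eqref{spatialdiscret} and, for each $\varepsilon\geq 0$, consider the gas model $(\nu^\varepsilon, H^\varepsilon)$ with $\nu^0 := \lambda_0\mathcal{L}^2 \times (\delta_+ + \delta_-)$, $\nu^\varepsilon := \nu^0 \circ (D^{\text{ds}}_\varepsilon)^{-1}$, and $H^\varepsilon$ specified by the two-body potential $\Phi^{(2)}_\varepsilon := \Phi^{(2)}_{WR} + \Phi^{(2)}_{*,\varepsilon}$. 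Here $\Phi^{(2)}_{WR}$ is the continuum Widom-Rowlinson cross-spin exclusion of radius $r_0$ (as in \eqref{eq:cwr2} with $r = r_0$), independent of $\varepsilon$, while $\Phi^{(2)}_{*,\varepsilon}$ is the nematic same-orientation exclusion term for $\varepsilon > 0$ and $\Phi^{(2)}_{*,0} \equiv 0$. With these choices $(\nu^0, H^0)$ is precisely the continuum Widom-Rowlinson model of part (i), whereas for $\varepsilon > 0$ the pair $(\nu^\varepsilon, H^\varepsilon)$ coincides with the $i_\varepsilon$-shrunken discrete nematic thin rods model, so that the analogue of \eqref{igualdaddis} yields $\mu^\varepsilon = \tilde{\mu}^\varepsilon \circ i_\varepsilon^{-1}$.

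It remains to verify the hypotheses of Theorem \ref{convdis}. Condition (i) holds by construction, (iii) holds with $\Delta E = 0$ by the purely repulsive nature of the interactions, and for (iv) I would take, for $\gamma = \pm$,
\[
V(\gamma_x) := \{(y,-\gamma):\|y-x\|_\infty \leq r_0+\delta\} \cup S_\gamma(x;\delta),
\]
where $S_+(x;\delta) := \{(y,+): |y_2-x_2|<\delta,\,|y_1-x_1|\leq 2r_0+\delta\}$ and $S_-(x;\delta)$ is its vertical analogue. Since the strips $S_\gamma(x;\delta)$ have $\nu^0$-measure of order $\delta$, choosing $\delta>0$ sufficiently small renders $\alpha^{\nu^0,V}_q$ arbitrarily close to $4\lambda_0 r_0^2 < 1$, securing \eqref{eq:coefdis}; the inclusion \eqref{eq:incdis} then holds for every $\varepsilon\leq \delta$.

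The main obstacle is condition (ii). One must exhibit a dynamically negligible set $N$ outside of which $\Delta E^{H^\varepsilon}_{\eta^\varepsilon}(\gamma_x^\varepsilon) \to \Delta E^{H^0}_\eta(\gamma_x)$. The Widom-Rowlinson term fails to converge exactly on the set $N_2$ of opposite-spin pairs at distance $r_0$ appearing in the proof of Theorem \ref{convabscwr}. The nematic term, in turn, makes $\Delta E^{H^\varepsilon}_{\eta^\varepsilon}(\gamma_x^\varepsilon) = +\infty$ as soon as a same-orientation particle of $\eta$ aligns with $\gamma_x$ after discretization; in the limit $\varepsilon \to 0^+$ this can only survive when $\eta + \delta_{\gamma_x}$ already contains two same-spin particles sharing a coordinate, a configuration captured by
\[
N_3 := \bigl\{\xi \in \mathcal{N}(\R^2\times\{+,-\}) : \exists\,\tilde{\gamma}_y \neq \hat{\gamma}_z \in \langle\xi\rangle \text{ with } \tilde{\gamma}=\hat{\gamma} \text{ and } y_j = z_j \text{ for some } j\in\{1,2\}\bigr\}.
\]
Together with the multiple-occupancy set $N_1$ of \eqref{eq:dimpwr}, the union $N := N_1 \cup N_2 \cup N_3$ is $\sigma$-local, $\pi^{\nu^0}$-null (the continuum Poisson process a.s.\ has no two points sharing a coordinate) and closed under addition of particles, hence dynamically negligible for $\nu^0$. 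Outside $N$ all the relevant geometric inequalities are strict, so $D^{\text{ds}}_\varepsilon$ eventually preserves their sign and yields \eqref{eq:dns}. Theorem \ref{convdis} then gives $\mu^\varepsilon \overset{w}{\longrightarrow} \mu^0$, and combining this with $\mu^\varepsilon = \tilde{\mu}^\varepsilon \circ i_\varepsilon^{-1}$ completes the proof.
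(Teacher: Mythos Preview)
Your proposal is correct and follows essentially the same route as the paper: it instructs one to rerun the proof of Theorem \ref{diswr}, replacing the two-body potential by the nematic one and replacing the multiple-occupancy null set $N_1$ by the ``two same-spin particles sharing a coordinate'' null set. Your write-up is in fact more detailed than the paper's (you spell out the diluteness computation for part (ii), give an explicit enlarged interaction range $V(\gamma_x)$ with the extra same-orientation strip, and justify why its $\nu^0$-measure is $4\lambda_0 r_0^2 + O(\delta)$), and your negligible set $N_1\cup N_2\cup N_3$ is slightly larger than the paper's $N_+\cup N_-\cup N_2$, but both choices are dynamically negligible and serve the same purpose.
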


\begin{proof} The proof is almost identical to that of Theorem \ref{diswr}. One only has to replace \eqref{eq:dwr3} with 
	$$
	\Phi^{(2)}_\varepsilon(\gamma_x,\tilde{\gamma}_y):= \left\{ \begin{array}{ll} +\infty & \text{ if $\varepsilon > 0$, $\gamma=\tilde{\gamma}=+$ and $|x_1-y_1|\leq r_0$}\\ \\ +\infty & \text{ if $\varepsilon > 0$, $\gamma=\tilde{\gamma}=-$ and $|x_2-y_2|\leq r_0$}\\ \\ +\infty & \text{if }0< \|x-y\|_\infty \leq r_0 \text{ and }\gamma \neq \tilde{\gamma}\\ \\ 0 &\text{otherwise.}\end{array}\right.
	$$ and \eqref{eq:dimpwr} with $N_+ \cup N_-$, where 
	$$
	N_+ = \{ \xi \in \mathcal{N}(\R^2 \times \{+,-\}) : \exists\,\,\tilde{\gamma}_y\neq\hat{\gamma}_z \in \langle \xi \rangle \text{ with $\tilde{\gamma}=\hat{\gamma}=+$ and $y_1=z_1$}\}
	$$ and 
	$$
	N_- = \{ \xi \in \mathcal{N}(\R^2 \times \{+,-\}) : \exists\,\,\tilde{\gamma}_y\neq\hat{\gamma}_z \in \langle \xi \rangle \text{ with $\tilde{\gamma}=\hat{\gamma}=-$ and $y_2=z_2$}\}.
	$$ The rest of the proof goes through exactly as before. We leave the details to the reader.
\end{proof}

\subsubsection{The thin-rods model}

For simplicity, we focus only on the continuum model in $\R^2$. We begin by establishing the continuity in the parameters of the model with a fixed finite number of orientations. In the sequel, given a number $k \in \N$, a vector of $k$ orientations $\vec{\theta}=(\theta_1,\dots,\theta_k) \in [0,\pi)^k$ and a probability vector $\vec{p}=(p_1,\dots,p_k) \in \R^k$, we shall write $\rho^{\vec{\theta},\vec{p}}$ for the orientation measure given by 
$$
\rho^{\vec{\theta},\vec{p}} := \sum_{i=1}^k p_i \delta_{\theta_i}.
$$

\begin{teo}[The thin rods model in $\R^2$]\label{convdistrm} Fix $k \in \N$ and, given $\lambda_0,l_0 > 0$, $\vec{\theta}_0 \in [0,\pi)^k$ and a probability vector $\vec{p}_0 \in \R^k$, consider the thin rods model in $\R^2$ with fugacity $\lambda_0 > 0$, rod length $2l_0 > 0$ and orientation measure $\rho^{\vec{\theta}^0,\vec{p}^0}$. If  
	\begin{equation}
	\label{eq:trmq}
	\alpha_{TR}^{(c)}(\lambda_0,l_0,\rho^{\vec{\theta}_0,\vec{p}_0}) := 4\lambda^0 (l^0)^2 \sup_{\gamma \in [0,\pi)} \left[ \int_0^\pi|\sin(\gamma - \tilde{\gamma})|d\rho^{\vec{\theta}_0,\vec{p}_0}(\tilde{\gamma})\right] < 1
	\end{equation} then there exists an open neighborhood $U$ of $(\lambda_0,l_0,\vec{\theta}_0,\vec{p}_0)$ such that for all $(\lambda,l,\vec{\theta},\vec{p}) \in U$ the model with parameters $(\lambda,l,\vec{\theta},\vec{p})$ has a unique Gibbs measure $\mu(\lambda,l,\vec{\theta},\vec{p})$. \mbox{Moreover,} the applications $(\lambda,l,\vec{\theta},\vec{p}) \mapsto \mu(\lambda,l,\vec{\theta},\vec{p})$ and $(\lambda,l,\vec{p}) \mapsto \mu(\lambda,l,\vec{\theta}_0,\vec{p})$ are respectively \textit{weakly} continuous on $U$ and \textit{locally} continuous on $U_{\vec{\theta}_0} = \{(\lambda,l,\vec{p}) : (\lambda,l,\vec{\theta}_0,\vec{p}) \in U\}$.
\end{teo}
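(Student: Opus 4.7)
The plan follows the pattern of Theorems~\ref{convabswr} through \ref{diswr}. First I would establish the openness of $U$ together with uniqueness on $U$ via Theorem~\ref{teounigibbs}. Observe that $\alpha_{TR}^{(c)}(\lambda,l,\vec{\theta},\vec{p})$ in \eqref{eq:trmq} is precisely the diluteness coefficient of the thin-rods model associated to the constant size function $q\equiv 1$, since two length-$2l$ segments with orientations $\gamma,\tilde{\gamma}$ intersect exactly when the difference of their centres lies in a parallelogram of area $4l^{2}|\sin(\gamma-\tilde{\gamma})|$. This coefficient is jointly continuous in $(\lambda,l,\vec{\theta},\vec{p})$ because $|\sin(\gamma-\theta_{i})|$ is uniformly continuous in $(\gamma,\theta_{i})$ and the supremum is taken over the compact set $[0,\pi]$. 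Hence $U:=\{\alpha_{TR}^{(c)}<1\}$ is open and contains $(\lambda_{0},l_{0},\vec{\theta}_{0},\vec{p}_{0})$, so Theorem~\ref{teounigibbs} yields a unique Gibbs measure $\mu(\lambda,l,\vec{\theta},\vec{p})$ at each point of $U$.

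For the local continuity on the slice $U_{\vec{\theta}_{0}}$ I would mimic Theorems~\ref{convabswr} and \ref{convabscwr} via Corollary~\ref{convabs}. Given a convergent sequence in $U_{\vec{\theta}_{0}}$, the intensities $\nu^{\lambda_{n},l_{n},\vec{\theta}_{0},\vec{p}_{n}}=\lambda_{n}\mathcal{L}^{2}\times\rho^{\vec{\theta}_{0},\vec{p}_{n}}$ share the common support $\R^{2}\times\{\theta_{1}^{0},\dots,\theta_{k}^{0}\}$ and are mutually absolutely continuous, with density $(\lambda_{n}/\lambda_{*})(p_{i}^{n}/p_{i}^{*})$ on each orientation layer. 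The modified energy leap $\Delta\tilde{E}_{\eta}^{n}$ differs from $\Delta E_{\eta}^{H_{l_{n}}}$ only by this bounded logarithmic correction; both are controlled uniformly, and the dynamically negligible tangency set $N=\{\xi:\exists\,\tilde{\gamma}_{y},\hat{\gamma}_{z}\in\langle\xi\rangle\text{ whose length-}2l_{*}\text{ rods are tangent}\}$ captures all failures of convergence, while a slight enlargement of $I^{H_{l_{*}}}(\gamma_{x})$ yields the remaining hypotheses of Corollary~\ref{convabs}.

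For the weak continuity on the whole of $U$, I would apply Theorem~\ref{convdis} through an orientation-shift approximation. Given a sequence $(\lambda_{n},l_{n},\vec{\theta}_{n},\vec{p}_{n})\to(\lambda_{*},l_{*},\vec{\theta}_{*},\vec{p}_{*})$ in $U$, define $\tilde{D}_{n}\colon[0,\pi)\to[0,\pi)$ as the piecewise-linear bijection with breakpoints at $\theta_{i}^{*}$ satisfying $\tilde{D}_{n}(\theta_{i}^{*})=\theta_{i}^{n}$, and set $D_{n}(x,\gamma):=(x,\tilde{D}_{n}(\gamma))$; these form an approximation family with $a(\varepsilon_{n})=\max_{i}|\theta_{i}^{n}-\theta_{i}^{*}|\to 0$. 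The pushforward $\tilde{\nu}_{n}:=\nu^{0}\circ D_{n}^{-1}$ of $\nu^{0}:=\nu^{\lambda_{*},l_{*},\vec{\theta}_{*},\vec{p}_{*}}$ equals $\lambda_{*}\mathcal{L}^{2}\times\rho^{\vec{\theta}_{n},\vec{p}_{*}}$, and shares its support with the target intensity $\nu_{n}:=\nu^{\lambda_{n},l_{n},\vec{\theta}_{n},\vec{p}_{n}}$, so $\nu_{n}\ll\tilde{\nu}_{n}$ with bounded density. Defining the effective Hamiltonian $\tilde{H}_{n}$ as $H_{l_{n}}$ plus the chemical-potential correction $-\sum_{\gamma_{x}\in\sigma}\log(d\nu_{n}/d\tilde{\nu}_{n})(\gamma_{x})$ makes the Gibbs measure of $(\tilde{\nu}_{n},\tilde{H}_{n})$ coincide with $\mu_{n}$, and Theorem~\ref{convdis} applied to $(\tilde{\nu}_{n},\tilde{H}_{n})_{n\ge 0}$ yields $\mu_{n}\overset{w}{\longrightarrow}\mu_{*}$.

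The main obstacle is the verification of the hypotheses of Theorem~\ref{convdis} for this effective family. Condition (ii) is the most delicate: off the tangency set $N$, which is dynamically negligible for $\nu^{0}$ as in Theorem~\ref{convabscwr}, the rod-intersection indicator in $H_{l_{n}}$ converges as $l_{n}\to l_{*}$ and $\vec{\theta}_{n}\to\vec{\theta}_{*}$, and the chemical-potential correction converges to $0$. Condition (iii) requires the corrections $\log(\lambda_{n}p_{i}^{n}/(\lambda_{*}p_{i}^{*}))$ to stay bounded, which holds for $n$ large since $(\lambda_{*},\vec{p}_{*})\in(0,\infty)^{k+1}$. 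Condition (iv) is where the openness of $U$ enters: a small spatial enlargement of $I^{H_{l_{*}}}(\gamma_{x})$ dominates $I^{H_{l_{n}}}(\gamma_{x}^{n})$ for $n$ large, and the continuity of $\alpha_{TR}^{(c)}$ keeps $\alpha_{q}^{\nu^{0},V}<1$.
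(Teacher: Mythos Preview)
Your proposal is correct and follows essentially the same route as the paper: identify $\alpha_{TR}^{(c)}$ as the diluteness coefficient for $q\equiv 1$, obtain $U$ and uniqueness from Theorem~\ref{teounigibbs}, handle local continuity on the $\vec{\theta}_0$-slice via Corollary~\ref{convabs}, and obtain weak continuity in all parameters by combining an orientation-shift approximation family with a chemical-potential correction folded into an effective Hamiltonian, then applying Theorem~\ref{convdis}. The only notable difference is your choice of approximation map: you take $\tilde{D}_n$ to be a piecewise-linear bijection of $[0,\pi)$ sending $\theta_i^*\mapsto\theta_i^n$, whereas the paper, exploiting that $\nu^0$ is supported on the finite set $\{\theta_1^*,\dots,\theta_k^*\}$, simply sets $D_n(x,\theta_i^*)=(x,\theta_i^n)$ and $D_n=\mathrm{id}$ elsewhere. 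Both produce the same pushforward $\nu^0\circ D_n^{-1}=\lambda_*\mathcal{L}^2\times\rho^{\vec{\theta}_n,\vec{p}_*}$, but the paper's choice avoids the bookkeeping of ordering the $\theta_i$'s and defining the interpolation, and makes the bound $a(n)=\max_i|\theta_i^n-\theta_i^*|$ immediate.
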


\begin{proof} A straightforward calculation shows that if $(\nu^0,H^0)$ is the model with parameters $(\lambda_0,l_0,\vec{\theta}_0,\vec{p}_0)$ then the diluteness coefficient associated to any constant size function $q$ is 
	$$
	\alpha^{\nu^0,H^0}_q = 4\lambda^0 (l^0)^2 \sup_{\gamma \in [0,\pi)} \left[ \int_0^\pi|\sin(\gamma - \tilde{\gamma})|d\rho^{\vec{\theta}_0,\vec{p}_0}(\tilde{\gamma})\right].
	$$ From this the first assertion about uniqueness of the Gibbs measure immediately follows. To establish the weak continuity, we fix $(\lambda_\infty,l_\infty,\vec{\theta}_\infty,\vec{p}_\infty) \in U$ and an arbitrary sequence $(\lambda_n,l_n,\vec{\theta}_n,\vec{p}_n)_{n \in \N} \subseteq U$ converging to $(\lambda_\infty,l_\infty,\vec{\theta}_\infty,\vec{p}_\infty) \in U$ and show that 
	$$
	\mu(\lambda_n,l_n,\vec{\theta}_n,\vec{p}_n) \overset{d}{\longrightarrow} \mu(\lambda_\infty,l_\infty,\vec{\theta}_\infty,\vec{p}_\infty).
	$$ For this purpose, we define the approximation family $\mathcal{D}=(D_n)_{n \in \N \cup \{\infty\}}$ by the formula
	$$
	D_n(x,\gamma) =\left\{\begin{array}{ll} (x,\vec{\theta}_n(i)) & \text{ if $\gamma=\vec{\theta}_\infty(i)$ for some $i=1,\dots,k$} \\ \\ (x,\gamma) & \text{ otherwise,}\end{array}\right.
	$$ where for each $n \in \N \cup \{\infty\}$ and $i=1,\dots,k$, $\vec{\theta}_n(i)$ denotes the $i$-th coordinate of $\vec{\theta}_n(i)$. Furthermore, for each $n \in \N \cup \{\infty\}$ consider the Hamiltonian prescription $H^n$ given by the potential $\Phi_n = (\Phi^{(1)}_n,\Phi^{(2)}_n)$, where
	$$
	\Phi^{(1)}_n(\gamma_x) = \left\{\begin{array}{ll} - \log\left( \frac{\lambda_n \vec{p}_n(i)}{\lambda_\infty \vec{p}_\infty(i)}\right) & \text{ if $\gamma=\vec{\theta}_n(i)$ for some $i=1,\dots,k$} \\ \\ 0 & \text{ otherwise,}\end{array}\right.
	$$ and
	$$
	\Phi^{(2)}(\gamma_x,\tilde{\gamma}_y) := \left\{ \begin{array}{ll} +\infty &\text{if }(L^{l_n}_\gamma + x) \cap (L^{l_n}_{\tilde{\gamma}} + y) \neq \emptyset\\ 0 &\text{otherwise.}\end{array}\right.
	$$ Now, if we write $\nu^n:=\nu^\infty \circ D_n^{-1}$ where $\nu^\infty:= \lambda_\infty \mathcal{L}^2 \times \rho^{\vec{\theta}_\infty,\vec{p}_\infty}$, then it is not hard to see that the model $(\nu^n,H^n)$ is equivalent to the thin rods model with \mbox{parameters $(\lambda_n,l_n,\vec{\theta}_n,\vec{p}_n)$.} The result now follows at once upon noticing that the sequence $(\nu^n,H^n)_{n \in \N \cup \{\infty\}}$ satisfies the hypotheses of Theorem \ref{convdis} if $(\lambda_n,l_n,\vec{\theta}_n,\vec{p}_n)_{n \in \N}$ is sufficiently close to $(\lambda_\infty,l_\infty,\vec{\theta}_\infty,\vec{p}_\infty)$. Finally, to obtain the local continuity whenever the vector of orientations is fixed we follow a similar argument using Corollary \ref{convabs} instead.
\end{proof}

To conclude, we consider the case of a continuum of possible orientations and derive the analogue of Theorem \ref{diswr} in this context for spin discretizations. For definiteness, \mbox{we state the result} for the continuum model with an uniform orientation measure on $[0,\pi)$. 

\begin{teo} Given $\lambda, l > 0$, let us consider the thin rods model on $\R^2$ with fugacity $\lambda$, rod length $2l$ and uniform orientation measure $\rho_u$ on $[0,\pi)$. Then, if 
	$$
	\alpha^{(c)}_{TR}(\lambda,l,\rho_u)= \frac{8}{\pi}\lambda l^2 < 1
	$$ there exists a unique Gibbs measure $\mu$ of the model and for every $n \in \N$ sufficiently large the model with orientation measure 
	$$
	\rho^n_u := \frac{1}{n} \sum_{i=0}^{n-1} \delta_{\frac{i}{n}}
	$$ also has a unique Gibbs measure $\mu^n$ which, furthermore, satisfies $\mu^n \overset{d}{\rightarrow} \mu$.
\end{teo}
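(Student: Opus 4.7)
The plan is to combine Theorem \ref{teounigibbs} (to get uniqueness of the target measure $\mu$) with Theorem \ref{convdis} applied to an orientation-discretization scheme (to obtain simultaneously the uniqueness of $\mu^n$ and the convergence $\mu^n \overset{d}{\rightarrow} \mu$). Concretely, for each $n \in \N$ I introduce an approximation operator $D_{1/n}: \R^2 \times [0,\pi) \to \R^2 \times [0,\pi)$ that leaves positions untouched and rounds each orientation to its nearest atom of $\rho_u^n$; setting $D_0 = \mathrm{id}$, the family $(D_{1/n})_{n \in \N} \cup \{D_0\}$ satisfies Definition \ref{discdefi} with $a(1/n) = \pi/n$. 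Writing $\nu^0 := \lambda \mathcal{L}^2 \times \rho_u$ and $H^0$ for the rod-exclusion Hamiltonian prescription \eqref{tru}, I set $\nu^{1/n} := \nu^0 \circ D_{1/n}^{-1}$ (which coincides with $\lambda \mathcal{L}^2 \times \rho_u^n$ under the natural identification of atoms) and $H^{1/n} := H^0$.

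Uniqueness of $\mu$ follows from Theorem \ref{teounigibbs} with constant size function $q \equiv 1$: a direct geometric calculation gives $\nu^0(I^{H^0}(\gamma_x)) = 4\lambda l^2 \int_0^\pi |\sin(\gamma - \tilde\gamma)|\, d\rho_u(\tilde\gamma) = \frac{8}{\pi}\lambda l^2 < 1$ for every $\gamma_x$. For each discretized model $(\nu^{1/n},H^{1/n})$ the corresponding coefficient is $4\lambda l^2 \sup_\gamma \int_0^\pi|\sin(\gamma-\tilde\gamma)|\,d\rho_u^n(\tilde\gamma)$, which is a Riemann sum converging uniformly in $\gamma$ to $\frac{8}{\pi}\lambda l^2$; hence it stays below $1$ for $n$ large, yielding uniqueness of $\mu^n$ by Theorem \ref{teounigibbs}.

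It remains to verify the four hypotheses of Theorem \ref{convdis}. Condition (i) holds by construction and (iii) holds trivially with $\Delta E = 0$ since the interaction is purely repulsive. For (ii), the energy leap takes values in $\{0,+\infty\}$ according to rod intersection, and because each rod's orientation is perturbed by at most $\pi/n\to 0$, we have $\Delta E^{H^0}_{\eta^{1/n}}(\gamma_x^{1/n}) \to \Delta E^{H^0}_\eta(\gamma_x)$ except when some rod in $\eta + \delta_{\gamma_x}$ is tangent to another (shares exactly one boundary point); these tangency configurations form a dynamically negligible set $N$ for $\nu^0$ since tangency is a codimension-one condition on the pair (position, orientation) and the spatial marginal of $\nu^0$ is Lebesgue-continuous. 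For (iv), given $\gamma_x = (x,\gamma)$ and a small $\delta > 0$ I take
$$
V(\gamma_x) := \bigl\{\tilde\gamma_y : \exists\, \gamma' \in [\gamma-\delta,\gamma+\delta],\, \tilde\gamma' \in [\tilde\gamma-\delta,\tilde\gamma+\delta] \text{ with } (L^{l}_{\gamma'} + x) \cap (L^{l}_{\tilde\gamma'} + y) \neq \emptyset\bigr\}.
$$
This set contains $D_{1/n}^{-1}(I^{H^0}(\gamma_x^{1/n}))$ as soon as $\pi/n \leq \delta$, and a direct area estimate gives $\alpha^{\nu^0,V}_q = \frac{8}{\pi}\lambda l^2 + O(\delta)$, which is still below $1$ for $\delta$ small enough in terms of the slack $1 - \frac{8}{\pi}\lambda l^2 > 0$. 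Theorem \ref{convdis} then delivers both the weak convergence $\mu^n \overset{d}{\rightarrow} \mu$ and the coupled perfect simulation asserted in the statement.

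The main technical obstacle I foresee is the careful construction of the enlargement $V(\gamma_x)$ in step (iv): the $O(1/n)$ angular distortion induced by $D_{1/n}$ acts simultaneously on the inserted rod and on every potentially interacting neighbour, so the enlargement must absorb both perturbations while keeping $\nu^0(V(\gamma_x))$ close enough to $\nu^0(I^{H^0}(\gamma_x))$ that the diluteness bound is preserved. A closely related subtlety is a clean proof that the tangency set in (ii) is genuinely dynamically negligible, which reduces to a Fubini-type argument using the Lebesgue-continuity of the spatial marginal of $\nu^0$ together with the $\sigma$-local, addition-closed structure required by Definition \ref{diset}.
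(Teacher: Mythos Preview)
Your proposal is correct and follows essentially the same approach as the paper: the paper omits the proof entirely, saying only that it ``goes very much along the lines of Theorem \ref{diswr} but using the spin discretization family introduced in \eqref{spindiscret} instead,'' which is precisely the orientation-discretization scheme you set up and feed into Theorem \ref{convdis}. Your verification of hypotheses (i)--(iv), including the tangency-based dynamically negligible set and the $\delta$-enlarged interaction range $V(\gamma_x)$, is exactly the analogue of the corresponding steps in the proof of Theorem \ref{diswr}.
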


We omit the proof of this result since it goes very much along the lines of \mbox{Theorem \ref{diswr}} but using the spin discretization family introduced in \eqref{spindiscret} instead.

\section{The Fern\'andez-Ferrari-Garcia dynamics}\label{secffg}

In this section we study the Fern\'andez-Ferrari-Garcia dynamics first \mbox{introduced in \cite{FFG1}}. In their work the authors focus on the Peierls contours model of Section \ref{secexamples} and show that, for a sufficiently large value of the inverse temperature $\beta$, the unique gas measure of this contour model can be realized as the unique invariant measure of this dynamics. In \cite{SJS} it is shown that this result can be extended to the broader class of gas models, where the sufficiently large inverse temperature condition is replaced by the one in \eqref{hdiluted}. This extension of the dynamics will be the main tool used in the proof of Theorem \ref{convdis}. We give a brief overview of it now.

Given a gas model $(\nu,H)$, we consider the following dynamics on $\mathcal{N}(S\times G)$:
\begin{enumerate}
    \item [$\bullet$] At rate $e^{-\Delta E}$ the birth of new animals is proposed with intensity given by $\nu$.
	\item [$\bullet$] Each $\gamma_x$ proposed for birth will be effectively born with probability $e^{-(\Delta E_{\eta}(\gamma_x)-\Delta E)}$, where $\eta$ is the state of the system at the time in which the birth of $\gamma_x$ is proposed.
	\item [$\bullet$] Every animal which has effectively been born will have an independent lifetime, with exponential distribution of mean 1.
	\item [$\bullet$] After its lifetime has expired, each animal dies and vanishes from the configuration.
\end{enumerate} 
Thus, the infinitesimal generator $L$ associated to such dynamics is formally given by 
\begin{equation}
\label{generator}
L(f)(\sigma) = \sum_{\gamma_x \in \langle \sigma \rangle}\sigma(\gamma_x)(f(\sigma - \delta_{\gamma_x}) - f(\gamma_x)) + \int e^{- \Delta E_\sigma (\gamma_x)}(f(\sigma + \delta_{\gamma_x}) - f(\gamma_x))d\nu(\gamma_x)
\end{equation} for any bounded local function $f:\mathcal{N}(S \times G) \to \R$. The following result is shown in \cite{SJS}. 

\begin{teo}
	\label{teoffg}
	If $(\nu,H)$ is heavily diluted then the Markov process with generator $L$ exists and has the unique gas measure of $(\nu,H)$ as its unique invariant measure. Furthermore, there exists a time-stationary construction $\mathcal{K}=(\mathcal{K}_t)_{t \in \R}$ of the process.
\end{teo}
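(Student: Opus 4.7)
The plan is to construct the process $\mathcal{K}$ via a graphical representation (a space-time Poisson birth scheme together with lifetimes and acceptance marks), to use the heavy dilution condition of Definition \ref{defi:heavy} to guarantee almost-sure finiteness of the backward clan of ancestors of any cylinder, and finally to identify the time-$t$ marginal of $\mathcal{K}$ with the unique gas measure $\mu$ produced by Theorem \ref{teounigibbs}.

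First I would realize a free birth-and-death scheme: a Poisson point process $\Pi$ on $(S\times G)\times\R$ of intensity $e^{-\Delta E}\,d\nu\otimes dt$; to each point $(\gamma_x,b)\in\Pi$ I would attach an independent $\mathrm{Exp}(1)$ lifetime $L_{\gamma_x,b}$ and an independent uniform mark $U_{\gamma_x,b}\in[0,1]$. This produces a collection $\mathcal{C}$ of \emph{cylinders} $C=(\gamma_x,b,b+L_{\gamma_x,b},U_{\gamma_x,b})$, each carrying a space-time track and an acceptance variable. For $C\in\mathcal{C}$, define its \emph{direct ancestors} as those $C'=(\tilde\gamma_y,b',d',U')\in\mathcal{C}$ with $\tilde\gamma_y\in I(\gamma_x)$ and $b'<b<d'$, and let $A(C)$ be the full backward ancestor clan obtained by iterating this relation.

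The key step is to show that $A(C)$ is $\mathbb P$-almost surely finite for every $C$. Using the Mecke formula for $\Pi$ together with the fact that an ancestor $C'$ of birth $b'<b$ is alive at time $b$ with probability $e^{-(b-b')}$, a direct computation gives that the expected $q$-weighted size of the $n$-th generation in $A(C)$ is at most $q(\gamma_x)\,(\alpha_q^{\nu,H})^n$. Summing the geometric series yields
\[
\E\!\left[\sum_{C'\in A(C)} q(\tilde\gamma_y)\right]\;\le\;\frac{q(\gamma_x)}{1-\alpha_q^{\nu,H}}\;<\;+\infty,
\]
where the hypothesis $\alpha_q^{\nu,H}<1$ is precisely the subcriticality condition. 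By a union argument over cylinders intersecting any bounded space-time box, ancestor clans of bounded windows are also a.s. finite. I expect this subcritical branching bound to be the main obstacle, since it requires the heavy dilution condition to be converted into an estimate uniform in the starting particle and compatible with the integrable interaction range assumption.

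Once clans are a.s. finite I would \emph{resolve} each cylinder by induction on birth time within its clan: keep $C=(\gamma_x,b,d,U)$ iff $U\le e^{-(\Delta E_{\eta_{b^-}}(\gamma_x)-\Delta E)}$, where $\eta_{b^-}$ is the configuration of already-kept cylinders alive just before $b$ (well defined because only finitely many earlier ancestors have to be resolved). Set $\mathcal K_t$ equal to the restriction at time $t$ of the kept cylinders; a standard infinitesimal check shows that $(\mathcal K_t)_{t\in\R}$ is Markov with generator $L$ in \eqref{generator}, while time-translation invariance of $\Pi$ and of the acceptance rule yields the stationarity of $(\mathcal K_t)_{t\in\R}$. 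To identify the common law $\mu^*$ of $\mathcal K_t$ with the gas measure $\mu$ of Theorem \ref{teounigibbs}, I would fix $\Lambda\in\B^0_S$, condition on $\mathcal K_t\restriction\Lambda^c=\eta$, and use the Poisson independence together with the acceptance probabilities to verify the DLR-type identity $\mu^*=\int\mu_{\Lambda|\eta}\,d\mu^*(\eta)$; the finiteness of ancestor clans provides $\mu^*$-a.s. finiteness of the $H$-interaction range of $\eta$, so Theorem \ref{teounigibbs} forces $\mu^*=\mu$. Finally, any other invariant measure can be coupled to $\mathcal K_0$ by running both through the same graphical representation from time $-T$ onwards; since the clan of ancestors of any bounded window becomes exhausted in finite time a.s., the two realizations coincide on that window eventually, giving uniqueness of the invariant measure.
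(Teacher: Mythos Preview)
Your graphical construction, the definition of ancestor clans, and the subcritical branching bound for their finiteness coincide with the paper's approach (the paper quotes the finiteness as Proposition~\ref{finit1}, proved in \cite{SJS} by exactly the $q$-weighted expectation estimate you outline). The resolution of cylinders by backward induction within a finite clan and the coupling argument for uniqueness of the invariant measure are likewise the same.

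The identification of the stationary law $\mu^*$ with the gas measure, however, differs from the paper and has a gap in the generality claimed. You propose to verify a DLR-type identity $\mu^* = \int \mu_{\Lambda|\eta}\,d\mu^*(\eta)$ and then invoke Theorem~\ref{teounigibbs}. This is fine for Gibbsian models, but the paper explicitly covers non-Gibbsian cases (e.g.\ the Peierls contours of Section~\ref{secexamples}) in which $\Delta E_{\Lambda|\eta}(\gamma_x)$ genuinely depends on $\Lambda$ and the gas kernels do \emph{not} form a specification. The infinite-volume dynamics you build uses the limiting leap $\Delta E_\eta$, not $\Delta E_{\Lambda|\eta}$, so the conditional law of $\mathcal K_t$ inside $\Lambda$ given the outside need not be $\mu_{\Lambda|\eta}$; and even if some DLR-type relation held, Theorem~\ref{teounigibbs} says nothing about uniqueness of DLR solutions outside the Gibbsian setting (part~(ii) is explicitly restricted to that case). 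The paper avoids this by running \emph{finite-volume} dynamics $\mathcal{K}^{\Lambda|\eta}$ with $\Delta E_{\Lambda|\cdot}$ in the acceptance rule, checking via the generator that each has invariant law exactly $\mu_{\Lambda|\eta}$, and then using clan finiteness to prove $\mathcal{K}^{\Lambda|\eta}_t \overset{loc}{\longrightarrow} \mathcal{K}_t$ as $\Lambda\nearrow S$. This exhibits $\mu^*$ directly as a local limit of gas kernels---the very definition of gas measure in Definition~\ref{gas1}---so Theorem~\ref{teounigibbs}(i) applies without any Gibbsian hypothesis.
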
 

For our purposes it will be convenient to understand the basics of the proof \mbox{of this result,} so we summarize it below. The idea is to construct $\mathcal{K}$ explicitly as a suitable thinning of a non-interacting birth-and-death process on $S \times G$ with the appropriate rates. 

To make matters more precise, let us consider the product space $\mathcal{C}= (S \times G) \times \R \times \R^+$. The elements of $\mathcal{C}$ are called \textit{cylinders}, since any $(\gamma_x, t, l) \in \mathcal{C}$ can be seen as a cylinder on $S \times \R$ of axis $\{x\}\times [t,t+l]$ and diameter $\gamma$. However, we shall prefer to describe each cylinder $C=(\gamma_x,t,s)\in \mathcal{C}$ in terms of its \textit{basis} $\gamma_x$, its \textit{time of birth} $t$ and its \textit{lifespan} $l$. We denote these three features of $C$ by $basis(C)$, $b_C$ and $l_C$, respectively. 

In the following we consider particle configurations that belong either to $\C$ or to $\C \times [0,1]$. This requires the obvious adaptation of all the definitions in Section \ref{sec:pc}. We perform the necessary adjustments in the following definition.

\begin{defi}\label{def:ext} Let $(X,d)$ be a locally compact complete separable metric space. 
\begin{enumerate}
	\item [$\bullet$] A measure $\theta$ on $(X,\B_{X})$ is called a Radon measure if $\theta(B) < +\infty$ for any $B \in \B^0_{X}$. It is called a particle configuration on $X$ if in fact $\theta(B) \in \N_0$ for any $B \in \B^0_{{X}}$.
	\item [$\bullet$] The space of particle configurations on $X$ is denoted by $\mathcal{N}^*(X)$.\footnote{Notice that in Section \ref{sec:pc} we asked the configurations in $\mathcal{N}(S \times G)$ to be of locally finite allocation. Hence the extra index $*$ in $\mathcal{N}^*(X)$. This restriction was imposed in Section \ref{sec:pc} for simplicity of notation in the statement of Theorem \ref{convdis}, but it is unnecessary here.}
	\item [$\bullet$] $\mathcal{N}^*(X)$ is endowed with a measurable space structure by considering the $\sigma$-algebra generated by the counting events on $X$. 
	\item [$\bullet$] Given a Radon measure $\vartheta$ on $X$ we define the Poisson distribution $\pi^\vartheta$ as the unique probability measure on $\mathcal{N}^*(X)$
	which satisfies
	$$
	\pi^{\vartheta} ( \{ \theta \in \mathcal{N}^*(X) : \theta(B_i) = k_i \text{ for all }i=1,\dots,n \} ) = \prod_{i=1}^n \frac{e^{-\vartheta(B_i)} \left(\vartheta(B_i)\right)^{k_i}}{k_i!}
	$$ for all $k_1,\dots,k_n \in \N_0$, disjoint $B_1,\dots,B_n \in \B^0_{X}$ and $n \in \N$.
	\item [$\bullet$] A random particle configuration on $X$ is called a Poisson process  with intensity $\vartheta$ if it is distributed according to $\pi^\vartheta$.
\end{enumerate} 
\end{defi}
In the following, we will often identify a given random cylinder configuration $\mathcal{V}$ with a birth-and-death process on $S \times G$ through its time sections: if for each $t \in \R$ we define the random particle configuration $\mathcal{V}_t \in \mathcal{N}(S \times G)$ by the formula
$$
\mathcal{V}_t ( \{ \gamma_x\})  := \#\{ C \in \mathcal{V} : basis(C)=\gamma_x \text{ and }b_C \leq t < b_C + l_C \}
$$ for every $\gamma_x \in S \times G$, then $\mathcal{V}=(\mathcal{V}_t)_{t \in \R}$ constitutes a birth-and-death process on $S \times G$. From this point of view, we interpret any cylinder $(\gamma_x,t,l)$ as an animal $\gamma$ born at time $t$ on location $x$ with a lifetime of length $l$. 

Now, consider a Poisson process $\Pi$ on $\mathcal{C}$ with intensity measure $\phi_\nu: = \nu \times e^{-\Delta E} \mathcal{L} \times \mathcal{E}^1$, where $\mathcal{L}$ is the Lebesgue measure on $\R$ and $\mathcal{E}^1$ is the exponential distribution \mbox{of mean 1.} We call $\Pi$ the \textit{free process}, since it is a non-interacting birth-and-death process on $S \times G$. It is stationary and has $\pi^{e^{-\Delta E}\nu}$ as its invariant measure. The process $\mathcal{K}$ will be obtained as an appropriate thinning of $\Pi$. However, to properly conduct such thinning we need to add an additional component to $\Pi$: to each cylinder in $\Pi$ we attach an independent uniform random variable, which we call its \textit{flag}. Each of these flags will be used to determine the success of its cylinder's attempted birth in the dynamics. More formally, we define the \textit{flagged free process} $\overline{\Pi}$ as the Poisson process on $\mathcal{C} \times [0,1]$ with intensity measure $\overline{\phi}_\nu :=\phi_\nu \times \mathcal{L}_{[0,1]}$. For any $(\gamma_x,t,l) \in \Pi$ we denote its corresponding flag by $F(\gamma_x,t,l)$. Thus, elements of $\overline{\Pi}$ are simply pairs of the form $(C,F(C))$ with $C \in \mathcal{C}$. Finally, we define the thinned process $\mathcal{K}$ by the formula
\begin{equation}\label{keptformula2}
\mathcal{K} = \{ (\gamma_x,t,l) \in \Pi : F(\gamma_x,t,l) \leq M(\gamma_x | \mathcal{K}_{t^-}) \}
\end{equation} where, for $\gamma_x \in S \times G$ and $\xi \in \mathcal{N}(S\times G)$ we use the notation $M(\gamma_x|\xi):=e^{-(\Delta E_{\xi}(\gamma_x)-\Delta E)}$. Observe that the self-referential nature of the thinning rule in \eqref{keptformula2} could keep the process $\mathcal{K}$ from being well-defined. Indeed, let us introduce some definitions that will help us give further details on this matter.

\begin{defi}\label{defiances}$\,$
	\begin{enumerate}
		\item [$\bullet$] Given $C, \tilde{C} \in \C$ we say that $\tilde{C}$ is a \textit{first generation ancestor} of $C$ and write $\tilde{C} \rightharpoonup C$ whenever
		$$
		basis(\tilde{C}) \rightharpoonup basis(C)\hspace{1.5cm} \text{ and }\hspace{1.5cm} b_{\tilde{C}} < b_C < b_{\tilde{C}} + l_{\tilde{C}},
		$$ 
		where $\rightharpoonup$ is the impact relation defined in \eqref{eq:rob1}. We will denote by $\mathcal{P}(C)$ \mbox{the set of} all first generation ancestors of a given $C \in \C$.
		\item [$\bullet$] For $C \in \C$ we define $\A_1(C):=\Pi_{\mathcal{P}(C)}$, the restriction of $\Pi$ to $\mathcal{P}(C)$, and for $n \in \N$ we set
		$$
		\A_{n+1}(C):= \bigcup_{\tilde{C} \in \mathcal{A}_n(C)} \A_1(\tilde{C}).
		$$ We define the \textit{clan of ancestors} of $C$ in $\Pi$ as
		$$
		\A(C):= \bigcup_{n \in \N} \A_n(C).
		$$ 
		\item [$\bullet$] For $t \in \R$ and $\Lambda \in \B^0_S$ let us define the \textit{clan of ancestors of $\Lambda \times G$ at time $t$} as
		$$
		\mathcal{A}^t(\Lambda \times G) := \bigcup_{n \in \N_0} \mathcal{A}^t_{n}(\Lambda \times G)
		$$ where $\mathcal{A}^t_0(\Lambda \times G) := \{ C \in \Pi : basis(C) \in \Lambda \times G \text{ , } b_C \leq t < b_C + l_C \}$ and for $n \in \N$
		$$
		\mathcal{A}^t_n(\Lambda \times G) := \bigcup_{C \in \mathcal{A}^t_0 (\Lambda \times G)} \mathcal{A}_n(C).
		$$ 
	\end{enumerate}
\end{defi}

Let us return to the discussion of the well-definiteness of the process $\mathcal{K}$. Notice that if we wish to determine whether a given cylinder $C=(\gamma_x,t,l) \in \Pi$ belongs to $\mathcal{K}$ or not then first we need to specify the configuration $\mathcal{K}_{t^-}$ in order to evaluate whether the condition in \eqref{keptformula2} is satisfied. To be more accurate, due to Assumptions \ref{assump} we will only need to specify $\mathcal{K}_{t^-}$ inside the set $I(\gamma_x)$. Hence, recalling Definition \ref{defiances}, we see that \mbox{to determine} the fate of $C$ we must first determine the fate of, in principle, all of its ancestors in $\A_1(C)$. \mbox{But this itself} involves determining the fate of a second generation of ancestors, $\A_2(C)$. In general, to determine if $C$ belongs to $\mathcal{K}$ one may need to study the fate of every cylinder in the clan of ancestors of $C$. If $\A(C)$ were to span over an infinite number of generations then it may be impossible to decide whether to keep $C$ or not and, hence, the process $\mathcal{K}$ would not be well-defined in this situation. On the other hand, if for every cylinder $C \in \Pi$ the clan $\A(C)$ spans only over a finite number of generations then $\mathcal{K}$ would effectively be well-defined. Indeed, since $M(\gamma_x | \mathcal{K}^\sigma_{t^-}) = M(\gamma_x | \emptyset)$ for any $(\gamma_x, t,l) \in \Pi$ with no ancestors preceding it, the fate of every cylinder in the last generation of ancestors of a given $C \in \Pi$ can be decided upon inspecting their respective flags (and nothing else). By proceeding one generation at a time, the fate of all their descendants, including $C$, can be determined. More precisely, take $C \in \Pi$ and define
$$
N_C= \max\{ n \in N: \mathcal{A}_n(C) \neq \emptyset\}.
$$ If $N_C < +\infty$ set
$$
K_{N_C}(C):= \{ (\tilde{\gamma}_y,r,l) \in \mathcal{A}_{N_C}(C) : F(\tilde{\gamma}_y,r,l)< M(\tilde{\gamma}_y|\emptyset)\}
$$ and for $1 \leq i \leq N_C-1$ inductively define
$$
K_i(C) = K_{i+1}(C) \cup \{ (\tilde{\gamma}_y,r,l) \in \mathcal{A}_i(C) : F(\tilde{\gamma}_y,r,l)< M(\tilde{\gamma}_y|K_{i+1}(C))\}.
$$ Then the cylinder $C \in \Pi$ is kept in $\mathcal{K}$ if and only if
$$
F(C) < M(\tilde{\gamma}_y|K_1(C)).
$$ This algorithm decides whether $C$ is kept in finitely many steps. Therefore, we have that for $\mathcal{K}$ to be well-defined it suffices to have $N_C < +\infty$ for every $C \in \Pi$. As it turns out, the heavy diluteness condition of Definition \ref{defi:heavy} is enough to ensure this.

\begin{prop}[{\cite[Proposition~9.3]{SJS}}] \label{finit1} If $(\nu,H)$ is heavily diluted then almost surely the clans of ancestors $\mathcal{A}^t(\Lambda \times G)$ are finite for every $t \in \R$ and $\Lambda \in \B^0_S$.
\end{prop}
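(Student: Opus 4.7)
The plan is to bound the expected $q$-weighted cardinality of $\mathcal{A}^t(\Lambda \times G)$ via a Campbell-type computation for the Poisson free process $\Pi$, use the diluteness hypothesis $\alpha := \alpha_q^{\nu,H} < 1$ to obtain geometric decay across generations, and conclude by Markov's inequality.

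The key one-step estimate is for a fixed cylinder $C = (\gamma_x, t, l)$: a first-generation ancestor $\tilde{C} = (\tilde{\gamma}_y, \tilde{t}, \tilde{l}) \in \Pi$ must satisfy $\tilde{\gamma}_y \in I(\gamma_x)$ and $\tilde{t} < t < \tilde{t} + \tilde{l}$. Since $\Pi$ has intensity $\phi_\nu = \nu \times e^{-\Delta E} \mathcal{L} \times \mathcal{E}^1$, Campbell's formula gives
\begin{equation*}
\E\!\left[\sum_{\tilde{C} \in \mathcal{A}_1(C)} q(basis(\tilde{C}))\right] \;=\; e^{-\Delta E}\!\int_{I(\gamma_x)} q(\tilde{\gamma}_y)\, d\nu(\tilde{\gamma}_y) \cdot \int_{-\infty}^t\!\int_{t - \tilde{t}}^\infty e^{-\tilde{l}}\, d\tilde{l}\, d\tilde{t} \;\leq\; \alpha\, q(\gamma_x),
\end{equation*}
the time factor evaluating to $1$ and the spatial factor being bounded by heavy diluteness. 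Critically, the bound is independent of the birth time $t$ of $C$, which is exactly what enables the iteration.

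Iterating along ancestral chains counted as a multiset (so the generations decouple across $\Pi$), an induction yields $\E[|\mathcal{A}_n(C)|_q] \leq \alpha^n q(basis(C))$, where $|\cdot|_q$ denotes the $q$-weighted count; summing the geometric series gives $\E[|\mathcal{A}(C)|_q] \leq q(basis(C))/(1 - \alpha)$, and since $q \geq 1$ this also controls the plain cardinality. Passing to $\mathcal{A}^t(\Lambda \times G)$, its zeroth generation is a Poisson thinning of alive cylinders with basis in $\Lambda \times G$, and one further Campbell step (using Slivnyak--Mecke to decouple the root cylinder from the remainder of $\Pi$, which keeps its Poisson law) produces
\begin{equation*}
\E\bigl[|\mathcal{A}^t(\Lambda \times G)|_q\bigr] \;\leq\; \frac{e^{-\Delta E}}{1 - \alpha}\int_{\Lambda \times G} q(\gamma_x)\, d\nu(\gamma_x).
\end{equation*}
Markov's inequality then gives a.s.\ finiteness of $\mathcal{A}^t(\Lambda \times G)$ for any single fixed pair $(t, \Lambda)$.

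I expect the main obstacle to be twofold. First, the last bound is only useful when $\int_{\Lambda \times G} q\, d\nu < \infty$, which is not directly implied by heavy diluteness; in applications such as the Peierls contours with $q(\gamma_x) = |\gamma_x|$ this must be checked separately, and in general one can handle it by truncating to $\{q \leq N\}$, applying the above argument on each level, and using the contraction $\alpha < 1$ to rule out ancestors of arbitrarily large $q$-value. Second, upgrading the a.s.\ finiteness from each fixed $(t, \Lambda)$ to the uniform statement ``for every $t \in \R$ and $\Lambda \in \B^0_S$'' requires a countable cofinal family of pairs together with monotonicity of clans under enlargement of $\Lambda$ and a regularity-in-$t$ argument for the ancestry process; both are routine but deserve care.
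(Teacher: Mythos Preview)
The paper does not supply its own proof of this proposition; it is imported verbatim from \cite[Proposition~9.3]{SJS}. Your argument is the standard branching-process domination used throughout the FFG framework and is almost certainly what appears in \cite{SJS}: the one-step Campbell computation, the geometric decay across generations via the $q$-weighted count, and Markov's inequality are exactly the ingredients of that proof.

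One remark on your first ``obstacle'': the integrability of $q$ over $\Lambda\times G$ is not actually needed, and the truncation workaround you sketch is heavier than necessary. Instead, observe that $\mathcal{A}_0^t(\Lambda\times G)$ is almost surely finite simply because its mean is $e^{-\Delta E}\nu(\Lambda\times G)<\infty$, and that for each \emph{fixed} cylinder $C=(\gamma_x,t,l)$ your bound already gives $\E[|\mathcal{A}(C)|_q]\le q(\gamma_x)\,\alpha/(1-\alpha)<\infty$, hence $\mathcal{A}(C)$ is almost surely finite. Then
\[
P\bigl(\exists\,C\in\mathcal{A}_0^t(\Lambda\times G):\ \mathcal{A}(C)\text{ infinite}\bigr)
\;\le\;\E\bigl[\#\{C\in\mathcal{A}_0^t(\Lambda\times G):\mathcal{A}(C)\text{ infinite}\}\bigr]\;=\;0
\]
by the Mecke formula, since by Slivnyak the reduced Palm law of $\Pi$ at $C$ is again $\phi_\nu$ and the integrand vanishes pointwise. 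This bypasses $\int_{\Lambda\times G}q\,d\nu$ entirely. Your second obstacle (uniformity over all $t$ and $\Lambda$) is handled, as you say, by monotonicity in $\Lambda$ along a countable cofinal family together with the piecewise-constancy in $t$ of the set of alive cylinders with basis in $\Lambda\times G$.
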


It follows from the discussion above and a straightforward computation that if $(\nu,H)$ is heavily diluted then $\mathcal{K}$ is well-defined and its infinitesimal generator is given by \eqref{generator}. Furthermore, Proposition \ref{finit1} implies the following facts:\begin{itemize}
	\item[i.] The dynamics defining $\mathcal{K}$ loses memory of the initial condition, a fact which implies uniqueness of the invariant measure. Furthermore, since $\mathcal{K}$ is a stationary process (due to its time-translational invariant construction and the stationarity of $\Pi$), this implies that for each $t \in \R$ the configuration $\mathcal{K}_t$ is distributed according to the unique invariant measure of the dynamics. 
	\item[ii.] $\mathcal{K}^{\Lambda|\eta}_t \overset{loc}{\longrightarrow}  \mathcal{K}_t$ as $\Lambda \nearrow S$ for all $t \in \R$ and $\eta \in \mathcal{N}(S \times G)$ of finite $H$-interaction range, where $\mathcal{K}^\Lambda$ is the process obtained by running the dynamics in the volume $\Lambda$ with boundary condition $\eta$ (and with $\Delta E_\xi$ replaced by $\Delta E_{\Lambda|\xi_{\Lambda}\cdot\eta_{\Lambda^c}}$ in the definition of $M$, see \ref{keptformula2}). Each process $\mathcal{K}^{\Lambda|\eta}$ is again stationary and a straightforward calculation with its generator shows that $\mu_{\Lambda|\eta}$ is its unique invariant measure. From this and the local convergence stated above, we conclude that the invariant measure of $\mathcal{K}$ is the unique gas measure of $(\nu,H)$.
\end{itemize}


\section{Proofs of Theorem \ref{convdis} and Corollary \ref{convabs}}\label{secdis}

In this final section we give the proof of our main result, Theorem \ref{convdis}. Following the proof, \mbox{we discuss} some relaxations of its hypotheses and then conclude the section by showing how Corollary \ref{convabs} follows from Theorem \ref{convdis}.

\subsection{Proof of Theorem \ref{convdis}} We divide the proof in three steps.

\subsubsection{Uniqueness of the gas measure for each model $(\nu^\varepsilon,H^\varepsilon)$}

Let us begin by considering a Poisson process $\overline{\Pi}$ on $\mathcal{C} \times [0,1]$ with intensity measure $\overline{\phi}_\nu:=\nu \times e^{-\Delta E} \mathcal{L} \times \mathcal{L}_{\R^+} \times \mathcal{U}[0,1]$. Notice that, by the nature of the intensity, all cylinders in $\overline{\Pi}$ have multiplicity one. Thus, it makes sense to define for each $\varepsilon \geq 0$ the $\varepsilon$-discretized process $\overline{\Pi}^\varepsilon$ (or simply $\varepsilon$-process) by the formula
\begin{equation}\label{piepsilon}
\overline{\Pi}^\varepsilon := \{ ( \gamma_x^\varepsilon, t, s, u ) \in \mathcal{C} \times [0,1] : (\gamma_x, t, s, u ) \in \overline{\Pi}\}.
\end{equation} Observe that $\overline{\Pi}^\varepsilon$ is a Poisson process on $\mathcal{C} \times [0,1]$ with intensity $\nu^\varepsilon \times e^{-\Delta E} \mathcal{L} \times \mathcal{L}_{\R^+} \times \mathcal{U}[0,1]$.
Moreover, \eqref{piepsilon} establishes a one-to-one correspondence between cylinders of \mbox{$\Pi$ and $\Pi^\varepsilon$.} Thus, in the following we write $C_\varepsilon$ to denote the $\varepsilon$-cylinder in $\Pi^\varepsilon$ which corresponds to the cylinder $C \in \Pi$, i.e.
if $C=(\gamma_x,t,s)$ then we set $C_\varepsilon = (\gamma_x^\varepsilon,t,s )$.

Now, by the proof of Theorem \ref{teounigibbs} we see that it suffices to show that for each $\varepsilon \geq 0$ $\mathcal{A}^{0,H^\varepsilon}(\Lambda \times G)$, the clan of ancestors at time 0 with respect to the Hamiltonian prescription $H^\varepsilon$ and with underlying free process $\Pi^\varepsilon$, is finite almost surely for all $\Lambda \in \B^0_S$. To see this, notice that if we define an impact relation $\rightharpoonup_V$ by the condition 
$$
\tilde{\gamma}_y \rightharpoonup_V \gamma_x \Longleftrightarrow \tilde{\gamma_y} \in V(\gamma_x)
$$ and for $\Lambda \in \B^0_S$ we consider the corresponding clan of ancestors $\mathcal{A}^{0,V}(\Lambda \times G)$ \mbox{given by $\rightharpoonup_V$,} then for each $\varepsilon \geq 0$ we have 
\begin{equation}\label{ancesincdis}
\mathcal{A}^{0,H^\varepsilon}(\Lambda \times G) \subseteq D_\varepsilon \left( \A^{0,V}(\Lambda_{\varepsilon} \times G)\right)
\end{equation} where $\Lambda_{\varepsilon}=\{ x \in S : d(x,\Lambda) \leq \varepsilon\}$ and for $\Gamma \subseteq \mathcal{C}$ we set
$$
D_\varepsilon (\Gamma) = \{ (\gamma^\varepsilon_x , t,s ) \in \mathcal{C} : (\gamma_x,t,s) \in \Gamma \}.
$$ By the proof of Proposition \ref{finit1} (see \cite{SJS} for details) we see that $\alpha^{\nu^0,V}_q < 1$ implies that almost surely $\A^{0,V}(\Lambda \times G)$ is finite for every $\Lambda \in \B^0_S$, so that from \eqref{ancesincdis} we conclude that $\mathcal{A}^{0,H^\varepsilon}(\Lambda \times G)$ must be finite as well. This shows that $(\nu^\varepsilon,H^\varepsilon)$ has a unique gas measure. Notice that our argument does not imply that the model $(\nu^\varepsilon,H^\varepsilon)$ is heavily diluted itself.  Indeed, the only available estimate  
$$
\frac{e^{-\Delta E^{H^\varepsilon}}}{q(\gamma_x^\varepsilon)} \int_{I^{H^\varepsilon}(\gamma_x^\varepsilon)} q(\tilde{\gamma}_y) d\nu^\varepsilon(\tilde{\gamma}_y) \leq \frac{e^{-\Delta E}}{q(\gamma_x^\varepsilon)} \int_{V(\gamma_x)} q(\tilde{\gamma}_y^\varepsilon) d\nu^0(\tilde{\gamma}_y)
$$ for every $\gamma_x \in S\times G$ is, in principle, not enough to show that $\alpha^{\nu^0,V}_q < 1$ implies $\alpha^{\nu^\varepsilon,H^\varepsilon}_q < 1$.

\subsubsection{Construction of the coupling $(\mathcal{Z}^\varepsilon)_{\varepsilon \geq 0}$}

We need first the following technical lemma.

\begin{lema}\label{lema:nim} Suppose that $N \subseteq \mathcal{N}(S \times G)$ is a $\sigma$-local $\pi^\nu$-null event. Then 
	\begin{equation}
	\label{din}
	P( \Pi_t \in N \text{ for some $t \in \R$} )=0.
	\end{equation}
\end{lema}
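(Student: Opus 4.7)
The plan is to reduce the uncountable union $\{\Pi_t\in N\text{ for some }t\in\R\}$ to a countable one by combining two features: the $\sigma$-local structure of $N$, and the piecewise-constant, finite-jump behaviour of $\Pi_\cdot$ on bounded windows. As a first move I would use $\sigma$-locality to write $N=\bigcup_{k\in\N} N_k$ with $N_k\in\F_{\Lambda_k\times G}$ for some $\Lambda_k\in\B^0_S$. By countable subadditivity this reduces the claim to showing, for a single local event $A\in\F_{\Lambda\times G}$ with $\pi^\nu(A)=0$, that $\Pi_t\in A$ for no $t\in\R$ almost surely.

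Next I would handle each fixed time $t\in\R$ separately. A direct computation of the mean measure of $\Pi_t$ from the intensity $\phi_\nu=\nu\times e^{-\Delta E}\mathcal{L}\times\mathcal{E}^1$ (integrating out the birth time and lifespan gives a factor of $1$) shows that $\Pi_t$ is a Poisson process on $S\times G$ with intensity $e^{-\Delta E}\nu$. On the local $\sigma$-algebra $\F_{\Lambda\times G}$ the Poisson laws $\pi^{e^{-\Delta E}\nu}$ and $\pi^\nu$ are mutually absolutely continuous --- both are finite Poisson distributions whose intensities differ only by a strictly positive multiplicative constant --- so $\pi^\nu(A)=0$ forces $P(\Pi_t\in A)=0$ for every $t$. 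An application of Fubini then yields
\[
\E\!\left[\int_\R \1_{\{\Pi_t\in A\}}\,dt\right]=\int_\R P(\Pi_t\in A)\,dt=0,
\]
so almost surely the random set $T_A:=\{t\in\R:\Pi_t\in A\}$ has Lebesgue measure zero.

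The decisive step is to upgrade this from ``measure zero'' to ``empty''. For each $n\in\Z$, the $S$-locally finite allocation of $\nu$ together with the $\mathrm{Exp}(1)$ lifespans give that the expected number of cylinders in $\Pi$ with basis in $\Lambda\times G$ whose lifetime interval meets $[n,n+1]$ is finite (a short integration against $\phi_\nu$ yields $2e^{-\Delta E}\nu(\Lambda\times G)$). Hence a.s.\ there are only finitely many such cylinders, and $t\mapsto\Pi_t|_{\Lambda\times G}$ is a càdlàg step function on $[n,n+1]$ with only finitely many jumps, taking finitely many distinct values $\xi_0,\dots,\xi_M$, each attained on a subinterval of positive length. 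Since $A\in\F_{\Lambda\times G}$ depends only on the restriction to $\Lambda\times G$, the set $T_A\cap[n,n+1]$ is the disjoint union of those subintervals for which $\xi_i\in A$; its Lebesgue-nullness then forces every $\xi_i\in A^c$, i.e.\ $\Pi_t\notin A$ for \emph{every} $t\in[n,n+1]$. Countable unions over $n\in\Z$ and then over $k$ finish the proof.

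The main obstacle is precisely this final upgrade: ``null for each fixed $t$'' does not by itself imply ``null for every $t$'' when the index set is uncountable. What makes the passage possible is the interplay between $\sigma$-locality --- which reduces the analysis to a bounded spatial window on which $\nu$ is finite --- and the integrability of the lifespans, which together ensure that on that window $\Pi_\cdot$ has a step-function trajectory with only finitely many transitions per unit of time, so its uncountably many time slices collapse to finitely many distinct configurations.
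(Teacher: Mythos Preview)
Your proof is correct and shares with the paper's the same structural backbone: decompose $N$ into local events $A\in\F_{\Lambda\times G}$, then use the finiteness of $\nu(\Lambda\times G)$ to see that $t\mapsto\Pi_t|_{\Lambda\times G}$ is a step function with finitely many jumps on each bounded time interval. The paper exploits this piecewise constancy more directly than you do: since the trajectory is a step function, $\{\Pi_t\in A\text{ for some }t\in\R\}=\{\Pi_r\in A\text{ for some }r\in\Q\}$, and the latter has probability at most $\sum_{r\in\Q}P(\Pi_r\in A)=0$. Your route through Fubini (to get $\mathcal{L}(T_A)=0$ a.s.) followed by the positive-length-of-each-step argument is an equivalent but slightly longer detour to the same conclusion. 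On the other hand, you make explicit a point the paper leaves implicit: the time-slice $\Pi_t$ is distributed as $\pi^{e^{-\Delta E}\nu}$ rather than $\pi^\nu$, and one needs the mutual absolute continuity of these two Poisson laws on $\F_{\Lambda\times G}$ to conclude $P(\Pi_t\in A)=0$ from $\pi^\nu(A)=0$.
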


\begin{proof} If $N= \bigcup_{n \in \N} N_n$ for a sequence $(N_n)_{n \in \N}$ of local events, then it suffices to show that for each $n \in \N$ one has
	$$
	P( \Pi_t \in N_n \text{ for some $t \in \R$} )=0.
	$$ Thus, let us fix $n \in \N$ and consider $\Lambda \in \B^0_S$ such that $N_n \in \F_{\Lambda \times G}$. Let us observe that since $N_n$ is $\F_{\Lambda \times G}$-measurable we have 
	\begin{equation}\label{eq:nim2}
	\Pi_t \in N_n \Longleftrightarrow \left(\Pi_t\right)_{\Lambda \times G} \in N_n.
	\end{equation} Now, let $M_\Lambda$ be the set of cylinder configurations $\theta \in \mathcal{N}^*(\Lambda \times G \times \R \times \R^+)$ such that:
	\begin{enumerate}
		\item [i.] $\theta_t(\Lambda \times G) <+\infty$ for all $t \in \R$.
		\item [ii.] Every cylinder in $\theta$ has a strictly positive lifespan, i.e $l_C > 0$ for all $C \in \theta$. 
	\end{enumerate} Notice that if $\theta \in M_\Lambda$ then the process $\theta= (\theta_t)_{t \in \R}$ is piecewise constant. In particular, for $\theta \in M_\Lambda$ we have 
	$$
	\theta_t \in N_n \text{ for some $t \in \R$} \Longleftrightarrow \theta_r \in N_n \text{ for some $r \in \mathbb{Q}$}.
	$$ 
	Now, the choice of the intensity measure $\phi_\nu$ plus the fact that $\nu(\Lambda \times G) < +\infty$ yields that 
	$$
	P( \Pi_{\Lambda \times G \times \R \times \R^+}  \in M_\Lambda)=1.
	$$ 
	Hence, from \eqref{eq:nim2} we obtain that
	\begin{align*}
	P( \Pi_t \in N_n \text{ for some $t \in \R$} )&= P( \left(\Pi_r\right)_{\Lambda \times G} \in N_n \text{ for some $r \in \mathbb{Q}$} )\\
	\\
	& \leq \sum_{r \in \mathbb{Q}} P( \Pi_r \in N_n) = \sum_{r \in \mathbb{Q}} \pi^\nu(N_n) = 0.
	\end{align*}
\end{proof}

Now, we construct the coupling $(\mathcal{Z}^\varepsilon)_{\varepsilon \geq 0}$ by considering the processes
$$
\mathcal{K}^\varepsilon = \{ (\gamma_x^\varepsilon,t,s) \in \Pi : F(\gamma_x^\varepsilon,t,s) < M^\varepsilon(\gamma_x^\varepsilon | \mathcal{K}^\varepsilon_{t^-}) \}
$$ where for each $\gamma_x \in S \times G$ and $\xi \in \mathcal{N}(S \times G)$ we define
$$
M^\varepsilon (\gamma_x |\xi) := e^{- (\Delta E^{H^\varepsilon}_\xi(\gamma_x) - \Delta E)}.
$$ By the arguments in Section \ref{secffg}, each process $\mathcal{K}^\varepsilon$ is stationary with invariant measure $\mu^\varepsilon$. Thus, for each $\varepsilon \geq 0$ we may define $\mathcal{Z}^\varepsilon := \mathcal{K}^\varepsilon_0$. We need to check that for any $B \in \B^0_{S \times G}$ there exists $\varepsilon_B > 0$ such that for every $\varepsilon \leq \varepsilon_B$ 
\begin{equation}
\label{eq:igualdad}\mathcal{Z}^\varepsilon_B = D_\varepsilon\left( \mathcal{Z}^0_{D^{-1}_\varepsilon(B)}\right).
\end{equation}
To see this, we notice that if $N$ denotes the dynamically negligible event in the statement of Theorem \ref{convdis} then by Lemma \ref{lema:nim} and since $\mathcal{K}^0$ is a thinning of $\Pi$ we can assume that, with the exception of a set $O$ of realizations of $\Pi$ with zero probability, for every $C \in \Pi$ we have 
\begin{equation} \label{eq:nim3}
\left(\mathcal{K}^0_{b_C}\right)_{I(basis(C))} + \delta_{basis(C)} \in N^c.
\end{equation} What we will show in fact is that, for almost every realization of the process $\Pi$ outside of this exceptional set $O$, for any given $B \in \B^0_{S \times G}$ if $\varepsilon$ is sufficiently small then \eqref{eq:igualdad} holds. The heart of the proof is contained in the next lemma, which states that clans of ancestors stabilize for $\varepsilon$ small. 

\begin{lema}\label{lema:heart}
Fix $\Lambda \in \B^0_S$ and let $O$ be the exceptional set above. Then in $O^c$ there exists almost surely (a random) $\varepsilon_0 > 0$ such that for $0 \leq \varepsilon < \varepsilon_0$
\begin{equation}\label{ancesincdisc1}
\mathcal{K}^\varepsilon_{D_\varepsilon(\A^{0,V} ( \Lambda \times G))} = D_\varepsilon \left(\mathcal{K}^0_{\A^{0,V} ( \Lambda \times G)}\right),
\end{equation} 
i.e. any $C \in \mathcal{A}^{0,V}(\Lambda \times G)$ is kept in $\mathcal{K}^0$ if and only if $C_\varepsilon$ is kept in $\mathcal{K}^\varepsilon$ for every $0 \leq \varepsilon < \varepsilon_0$.
\end{lema}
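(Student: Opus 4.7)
My plan is to establish \eqref{ancesincdisc1} by a backwards induction over the generations in the finite clan $\A^{0,V}(\Lambda\times G)$, leveraging hypothesis (ii) of Theorem \ref{convdis} to pass from $M^0$ to $M^\varepsilon$ and hypothesis (iv) to control the interaction ranges uniformly in $\varepsilon$. The finiteness of $\A^{0,V}(\Lambda\times G)$ was already established in the uniqueness step via Proposition \ref{finit1} applied to the auxiliary impact relation $\rightharpoonup_V$, so almost surely one may enumerate its cylinders $C^{(1)},\dots,C^{(n)}$ in order of increasing birth time. The inductive hypothesis at stage $i$ is that for each $j<i$ there is a random $\varepsilon_j>0$ with $C^{(j)}\in\mathcal{K}^0 \Longleftrightarrow C^{(j)}_\varepsilon\in\mathcal{K}^\varepsilon$ whenever $\varepsilon<\varepsilon_j$.

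For the inductive step, write $\gamma_x:=basis(C^{(i)})$ and $b:=b_{C^{(i)}}$. By the measurability condition (6.ii) of Definition \ref{assump}, the keep-decision threshold $M^\varepsilon(\gamma_x^\varepsilon\mid \mathcal{K}^\varepsilon_{b^-})$ depends on $\mathcal{K}^\varepsilon_{b^-}$ only through its restriction to $I^{H^\varepsilon}(\gamma_x^\varepsilon)$. Any cylinder of $\Pi^\varepsilon$ contributing to that restriction is the $\varepsilon$-image of a cylinder of $\Pi$ whose basis lies in $D_\varepsilon^{-1}(I^{H^\varepsilon}(\gamma_x^\varepsilon))\subseteq V(\gamma_x)$ by \eqref{eq:incdis}, and hence is a first-generation $V$-ancestor of $C^{(i)}$; in particular it equals some $C^{(j)}$ with $j<i$. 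Invoking the inductive hypothesis, for $\varepsilon<\min_{j<i}\varepsilon_j$ the $\varepsilon$-restriction agrees with the $\varepsilon$-image of $\mathcal{K}^0_{b^-}$ on $V(\gamma_x)$. Setting $\eta:=\mathcal{K}^0_{b^-}$ and applying hypothesis (ii) together with the condition $\eta+\delta_{\gamma_x}\in N^c$ (which holds on $O^c$ by \eqref{eq:nim3}), one obtains
$$
\lim_{\varepsilon\to 0^+} M^\varepsilon(\gamma_x^\varepsilon\mid\mathcal{K}^\varepsilon_{b^-}) \;=\; M^0(\gamma_x\mid \mathcal{K}^0_{b^-}).
$$

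Since the flag $F(C^{(i)})$ is independent of everything on which the right-hand side depends and is uniformly distributed on $[0,1]$, almost surely $F(C^{(i)})\neq M^0(\gamma_x\mid\mathcal{K}^0_{b^-})$; consequently the strict inequality defining membership of $C^{(i)}$ in $\mathcal{K}^0$ is preserved under the limit, yielding a random $\varepsilon_i>0$ with $C^{(i)}\in\mathcal{K}^0\Longleftrightarrow C^{(i)}_\varepsilon\in\mathcal{K}^\varepsilon$ for all $\varepsilon<\varepsilon_i$. Taking $\varepsilon_0:=\min_{1\le i\le n}\varepsilon_i$ (a finite minimum, hence positive almost surely) closes the induction.

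The main obstacle I anticipate is in the identification step: one must verify that \emph{every} cylinder of $\Pi^\varepsilon$ that can affect the fate of $C^{(i)}_\varepsilon$ under $\mathcal{K}^\varepsilon$ is the $D_\varepsilon$-image of some $C^{(j)}$ with $j<i$ inside $\A^{0,V}(\Lambda\times G)$, so that the inductive hypothesis truly pins down the relevant restriction of $\mathcal{K}^\varepsilon_{b^-}$. This is precisely what the inclusion \eqref{eq:incdis} is designed to guarantee, but it requires a careful check that no ``new'' $\varepsilon$-ancestor can appear outside the $V$-clan; the set $V(\gamma_x)$ serves exactly as the envelope absorbing all $H^\varepsilon$-interaction ranges uniformly in $\varepsilon$, which is why hypothesis (iv) is imposed in the form stated.
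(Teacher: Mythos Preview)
Your proposal is correct and follows essentially the same approach as the paper's proof. Both argue by induction over the finite clan $\A^{0,V}(\Lambda\times G)$, using \eqref{eq:incdis} to ensure that every $H^\varepsilon$-ancestor of $C_\varepsilon$ arises from a $V$-ancestor of $C$ already handled by the inductive hypothesis, then applying hypothesis (ii) to obtain $M^\varepsilon\to M^0$ and noting that $F(C)\neq M^0$ almost surely. The only cosmetic difference is that the paper inducts over generation number (starting from the last nonempty generation $K$ and working forward) while you induct over birth time; since $V$-ancestors always have strictly earlier birth times, the two orderings are interchangeable for this purpose.
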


\begin{proof} Define $K:=\max\left\{ n \in \N : \A^{0,V}_n(\Lambda \times G) \neq \emptyset\right\}$, which is finite since $\alpha_q^{\nu^0,V} < 1$. \mbox{Notice that} for every cylinder $C \in \A^{0,V}_K(\Lambda \times G)$ and $\varepsilon \geq 0$ one has 
$$
C_\varepsilon \in \mathcal{K}^\varepsilon \Longleftrightarrow F(C) < M^\varepsilon( basis(C_\varepsilon) | \emptyset ).
$$	Thus, since $\emptyset + \delta_{basis(C)} \in N^c$ by \eqref{eq:nim3} and $I(basis(C))=\emptyset$ by definition of $K$, by recalling that
$$
\lim_{\varepsilon \rightarrow 0^+} \Delta E^{H^\varepsilon}_{\eta^\varepsilon} (\gamma^\varepsilon_x) = \Delta E^{H^0}_\eta (\gamma_x)
$$ 
for every $\gamma_x \in S \times G$ and $\eta \in \mathcal{N}(S \times G)$ with $\eta + \delta_{\gamma_x} \in N^c$, we deduce that for $\varepsilon$ (randomly) small enough  
$$
\mathcal{K}^\varepsilon_{D_\varepsilon(\A^{0,V}_N ( \Lambda \times G))} = D_\varepsilon \left(\mathcal{K}^0_{\A^{0,V}_N( \Lambda \times G)}\right),
$$ except perhaps if $F(C) = M^0(basis(C)|\emptyset)$ for some cylinder $C \in \A^{0,V}_N(\Lambda \times G)$, \mbox{a fact which} can only occur with zero probability. 
In a similar way one may proceed with the following generations by induction, using at each step \eqref{eq:nim3} and \eqref{ancesincdis}, to arrive ultimately at \eqref{ancesincdisc1}.
\end{proof}

Now, fix $B \in \B^0_{S \times G}$ and take $\Lambda \in \B^0_S$ sufficiently large so that $B^{(1)} \subseteq \Lambda \times G$, where $B^{(1)}$ is as in Definition \ref{discdefi}. By Lemma \ref{lema:heart} applied to this volume $\Lambda$ and the inclusion 
$$
D^{-1}_\varepsilon (K) \subseteq \Lambda \times G
$$ 
valid for every $0 <\varepsilon \leq 1$, we conclude that \eqref{eq:igualdad} holds for $\varepsilon$ small enough. 

\subsubsection{Concluding the weak convergence $\mu^\varepsilon \to \mu^0$}
The weak limit now follows from \eqref{eq:igualdad} by the dominated convergence theorem and the following lemma.

\begin{lema}\label{lemaconvaga} Let us suppose that $(\xi^{(\varepsilon)})_{\varepsilon \geq 0} \subseteq \mathcal{N}(S \times G)$ satisfies that for any $B \in \B^0_{S \times G}$ there exists $\varepsilon_B > 0$ such that $\xi^{(\varepsilon)}_B = D_\varepsilon ( \xi^{(0)}_{D^{-1}_\varepsilon(B)})$ for all $\varepsilon \leq \varepsilon_B$. Then $\xi^{(\varepsilon)} \rightarrow \xi^{(0)}$ vaguely.
\end{lema}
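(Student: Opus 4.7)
The plan is to verify directly that $\xi^{(\varepsilon)}$ lies in the basic vague neighborhood $(\xi^{(0)})_{K,\delta}$ as soon as $\varepsilon$ is small, i.e.\ to produce the two $\delta$-embeddings $\xi^{(0)}_K \preceq_\delta \xi^{(\varepsilon)}$ and $\xi^{(\varepsilon)}_K \preceq_\delta \xi^{(0)}$. Fix a compact $K \subseteq S \times G$ and $\delta > 0$. First I would pick $\varepsilon_0 \in (0,1]$ small enough that $a(\varepsilon) < \delta$ for all $\varepsilon \in [0,\varepsilon_0]$, which is possible by (iii) of Definition \ref{discdefi}. Next, using (ii) of Definition \ref{discdefi} with $B=K$ and $\delta=\varepsilon_0$, I produce a set $K^+ \in \B^0_{S\times G}$ containing $D_\varepsilon(K)$ for every $\varepsilon \in [0,\varepsilon_0]$; in particular $K \subseteq D_\varepsilon^{-1}(K^+)$ for those $\varepsilon$. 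Applying the hypothesis of the lemma to both $B=K$ and $B=K^+$, I obtain some $\varepsilon^* \in (0,\varepsilon_0]$ for which
\begin{equation*}
\xi^{(\varepsilon)}_K \,=\, D_\varepsilon\bigl(\xi^{(0)}_{D_\varepsilon^{-1}(K)}\bigr) \qquad \text{and} \qquad \xi^{(\varepsilon)}_{K^+} \,=\, D_\varepsilon\bigl(\xi^{(0)}_{D_\varepsilon^{-1}(K^+)}\bigr)
\end{equation*}
for every $\varepsilon \leq \varepsilon^*$. The two embeddings will then be established separately for any such $\varepsilon$.

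For $\xi^{(0)}_K \preceq_\delta \xi^{(\varepsilon)}$, the inclusion $K \subseteq D_\varepsilon^{-1}(K^+)$ yields $\xi^{(0)}_K \leq \xi^{(0)}_{D_\varepsilon^{-1}(K^+)}$ as measures and, applying $D_\varepsilon$ and using the second identity above, $D_\varepsilon(\xi^{(0)}_K) \leq \xi^{(\varepsilon)}_{K^+} \leq \xi^{(\varepsilon)}$. Read in terms of standard representations, this says that for each $\gamma_x \in \langle \xi^{(0)}_K\rangle$ the point $\gamma_x^\varepsilon$ appears in $\xi^{(\varepsilon)}$ with multiplicity at least $\sum_{\tilde{\gamma}_y \in \langle \xi^{(0)}_K\rangle,\, \tilde{\gamma}_y^\varepsilon = \gamma_x^\varepsilon} m_{\xi^{(0)}}(\tilde{\gamma}_y)$. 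Hence I can define an injection $p \colon [\xi^{(0)}_K] \to [\xi^{(\varepsilon)}]$ sending $(\gamma_x,i) \mapsto (\gamma_x^\varepsilon, j_{\gamma_x,i})$ for any appropriate enumeration of indices on the target side; since $d(\gamma_x,\gamma_x^\varepsilon) \leq a(\varepsilon) < \delta$, this is a $\delta$-embedding.

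For $\xi^{(\varepsilon)}_K \preceq_\delta \xi^{(0)}$, the first identity says that every particle $(\tilde{\gamma}_y,j) \in [\xi^{(\varepsilon)}_K]$ comes from some $(\gamma_x,i) \in [\xi^{(0)}_{D_\varepsilon^{-1}(K)}]$ with $\gamma_x^\varepsilon = \tilde{\gamma}_y$, with matching total multiplicities. Any such assignment made injective, followed by the natural inclusion $[\xi^{(0)}_{D_\varepsilon^{-1}(K)}] \hookrightarrow [\xi^{(0)}]$, furnishes the required injection, and again $d(\tilde{\gamma}_y,\gamma_x) \leq a(\varepsilon) < \delta$. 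The only potentially delicate point is the bookkeeping of multiplicities when several distinct particles of $\xi^{(0)}$ are identified by $D_\varepsilon$, but this is automatic from the total-mass equality built into the definition of $D_\varepsilon(\xi)$ as a sum over the support of $\xi$.
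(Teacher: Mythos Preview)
Your argument is correct and follows the same overall scheme as the paper's proof: fix $K$ and $\delta$, and exhibit the two $\delta$-embeddings directly. The one genuine difference is in how multiplicities are handled. The paper first shrinks $\varepsilon$ below the minimum pairwise distance $\rho_{K^{(\delta)}}$ among the (finitely many) points of $\langle\xi^{(0)}\rangle$ inside an enlarged compact, thereby forcing $D_\varepsilon$ to be \emph{injective} on the relevant support; the embeddings are then literally $D_\varepsilon$ and $D_\varepsilon^{-1}$ at the level of points, with the identity on indices. You instead allow $D_\varepsilon$ to collapse distinct particles of $\xi^{(0)}$ onto the same point and absorb this through the measure inequality $D_\varepsilon(\xi^{(0)}_K)\le \xi^{(\varepsilon)}$ and the total-mass identity built into $D_\varepsilon(\xi)$, which supply exactly the multiplicity room needed to make the injections work. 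Your route is marginally more robust (no separation threshold to track) at the cost of a small amount of index bookkeeping; the paper's route makes the embedding maps completely explicit. Both are short and entirely adequate here.
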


\begin{proof} It suffices to show that for each compact set $K \subseteq S \times G$ and $\delta > 0$ there exists $\varepsilon_0 > 0$ small enough such that $\xi^{(\varepsilon)} \in (\xi^{(0)})_{K,\delta}$ for all $0 <\varepsilon < \varepsilon_0$. But if given $B \in \B^0_{S \times G}$ we define
	$$
	\rho_B:= \frac{1}{2}\min\{ d_{S \times G}\left(\gamma_x,\tilde{\gamma}_y\right) : \gamma_x\neq \tilde{\gamma}_y \in \langle \xi^{(0)}_{B} \rangle\}
	$$ and take
	$$
	\varepsilon_0 := \rho_{K^{(\delta)}} \wedge \varepsilon_{K_{(\delta)}} \wedge \delta > 0
	$$ where $K^{(\delta)}$ and $K_{(\delta)}$ are the sets from Definition \ref{discdefi}, then $\xi^{(\varepsilon)} \in (\xi^{(0)})_{K,\delta}$ for every $\varepsilon < \varepsilon_0$, since:
	\begin{enumerate}
		\item [i.] From the definition $\varepsilon_{K_{(\delta)}}$ and the fact that $K \subseteq K_{(\delta)}$ we see that every $\gamma^{(\varepsilon)}_x \in \langle \xi^{(\varepsilon)}_K\rangle$ is of the form $\gamma^{(\varepsilon)}_x = D_\varepsilon(\gamma_x)$ for some $\gamma_x \in \langle \xi^{(0)} \rangle$. Moreover, since $\varepsilon < \rho_{K^{(\delta)}}$ \mbox{there is} at most one such $\gamma_x$ so that, in particular, their multiplicities must be the same. Thus, the application $p: [\xi^{(\varepsilon)}_K] \to [\xi]$ given by 
		$$
		p(\gamma_x^{(\varepsilon)},i)=\left(D_\varepsilon^{-1}(\gamma_x^{(\varepsilon)}),i\right)
		$$ is well-defined and injective. Since $\varepsilon < \delta$ we obtain that $\xi^{(\varepsilon)}_K \preceq_\delta \xi^{(0)}$.
		\item [ii.] By definition of $\varepsilon_{K_{(\delta)}}$ and the fact that $K \subseteq D^{-1}_\varepsilon(K_{(\delta)})$ we see that for every $\gamma_x \in \langle \xi^{(0)}_K \rangle$ there exists $\gamma^{(\varepsilon)}_x \in \langle \xi^{(\varepsilon)}\rangle$ such that $D_\varepsilon(\gamma_x) = \gamma^{(\varepsilon)}_x$. Furthermore, since $ \varepsilon < \rho_{K^{(\delta)}}$ there is at most one $\gamma_x \in \langle\xi^{(0)}_K\rangle$ being mapped by $D_\varepsilon$ to $\gamma^{(\varepsilon)}_x$ so that, in particular, the multiplicity of $\gamma^{(\varepsilon)}_x$ in $\xi^{(\varepsilon)}$ is at least equal to that of $\gamma_x$ in $\xi^{(0)}_K$. Therefore, the application $p:[\xi^{(0)}_K] \to [\xi^{(\varepsilon)}]$ given by
		$$
		p(\gamma_x, i)=\left(D_\varepsilon(\gamma_x),i\right)
		$$ is well-defined and injective. Since $\varepsilon < \delta$ this shows that $\xi^{(0)}_K \preceq_\delta \xi^{(\varepsilon)}$.
	\end{enumerate}
\end{proof} 

\subsection{Some relaxations in the hypotheses of Theorem \ref{convdis}} Notice that for the proof of Theorem \ref{convdis} we only required the dynamically negligible event $N$ to satisfy the following two properties:

\begin{enumerate}
	\item [$\bullet$] $N$ is closed under the addition of particles.
	\item [$\bullet$] $P( \Pi_t \in N \text{ for some $t \in \R$} )=0.$
\end{enumerate} Thus, if we call any event $N$ verifying these properties a \textit{dynamically impossible} event, then condition (ii) in the hypotheses of Theorem \ref{convdis} may be relaxed by requiring that the event in question be only dynamically impossible. However, we should point out that \mbox{in most cases} of interest the original condition is already satisfied and it is simple to verify, so that there is no real gain from this relaxation. Furthermore, we also point out that in the statement of the theorem all infima and suprema may be taken to be essential in the measure-theoretical sense, arriving at the more general statement of Theorem \ref{convdis} below. 

\begin{teo}
Let $(D_\varepsilon)_{\varepsilon \geq 0}$ be an approximation family and suppose that $(\nu^\varepsilon,H^\varepsilon)_{\varepsilon \geq 0}$ is a family of diluted models such that:
\begin{enumerate}
\item[I.]For every $\varepsilon > 0$ the intensity measure $\nu^\varepsilon$ is given by 
	$$
	\nu^\varepsilon := \nu^0 \circ D_\varepsilon^{-1}.
	$$
	\item [II.] There exist a $\nu^0$-null set $M \subseteq S \times G$ and a measurable set $N \subseteq \mathcal{N}(S \times G)$ closed under the addition of particles satisfying:
	\begin{enumerate}
		\item [i.]
		$$
		\Delta E := \inf_{\varepsilon \geq 0} \left[ \inf_{\Lambda \in \B^0_S} \left[ \inf_{\substack{ \gamma_x \in M^c \\ \eta + \delta_{\gamma_x} \in N^c }}\Delta E^{H^\varepsilon}_{\eta^\varepsilon}(\gamma_x^\varepsilon)\right] \right] > -\infty.
		$$ 
		\item[ii.] $P( \Pi_t \in N \text{ for some $t \in \R$})=0$, where $\Pi$ is the Poisson process on $\mathcal{C}$ with intensity measure $\nu^0 \times e^{-\Delta E} \mathcal{L} \times \mathcal{E}^1$, where $\Delta E$ is as in (i). [Notice that $\Delta E$ (and hence $\Pi$) depends, in principle, on the particular choice of $N$.]  
		\item [iii.] For every $\gamma_x \in M^c$ and all $\eta \in \mathcal{N}(S \times G)$ with $\eta + \delta_{\gamma_x} \in N^c$ one has
		$$
		\lim_{\varepsilon \rightarrow 0^+} \Delta E^{H^\varepsilon}_{\eta^\varepsilon} (\gamma^\varepsilon_x) = \Delta E^{H^0}_\eta (\gamma_x).
		$$ 	
		\item [iv.] There exists for each $\gamma_x \in S \times G$ a set $V(\gamma_x) \in \B_{S \times G}$ such that:
		\begin{itemize}
			\item[$\bullet$] For every $\varepsilon \geq 0$ one has the inclusion
			$$
			D_\varepsilon^{-1}\left(I^{H^\varepsilon}(\gamma^\varepsilon_x)\right) \subseteq V(\gamma_x).
			$$
			\item[$\bullet$] There exists a size function $q: S \times G \rightarrow [1,+\infty)$ which verifies 
			$$
			\alpha_{q}^{\nu^0,V} := \sup_{\gamma_x \in S \times G} \left[ \frac{e^{-\Delta E}}{q(\gamma_x)} \int_{V(\gamma_x)} q(\tilde{\gamma}_y)d\nu^0(\tilde{\gamma}_y)\right] < 1.
			$$
		\end{itemize}
\end{enumerate}
\end{enumerate}
Then the conclusions of Theorem \ref{convdis} hold, namely: 
\begin{itemize}
	\item[(a)] Each model $(\nu^\varepsilon,H^\varepsilon)$ admits exactly one gas measure $\mu^\varepsilon$.
	\item[(b)]  As $\varepsilon \rightarrow 0^+$, we have the weak convergence
	$$
	\mu^{\varepsilon} \overset{w}{\longrightarrow} \mu^0.
	$$ 
	\item[(c)] There exists a coupling $(\mathcal{Z}^\varepsilon)_{\varepsilon \geq 0}$ of the measures $(\mu^\varepsilon)_{\varepsilon \geq 0}$ such that for any $B \in \B^0_{S \times G}$ there exists (a random) $\varepsilon_B > 0$ verifying that for all $\varepsilon \leq \varepsilon_B$
	\begin{equation}
	\label{eq:couple}
	\mathcal{Z}^\varepsilon_B = D_{\varepsilon}\left( \mathcal{Z}^0_{D^{-1}_\varepsilon(B)}\right).
	\end{equation} 
	In particular, $\mathcal{Z}^\varepsilon \overset{as}{\longrightarrow} \mathcal{Z}^0$ with respect to the vague topology.
\end{itemize}
\end{teo}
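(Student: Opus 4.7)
The plan is to replicate the three-stage proof of Theorem \ref{convdis}---(a) uniqueness of $\mu^\varepsilon$, (b) construction of the coupling $(\mathcal{Z}^\varepsilon)$ via simultaneous FFG thinnings of a single Poisson field on $\mathcal{C}\times[0,1]$, and (c) passage to the weak limit via Lemma \ref{lemaconvaga}---while adjusting two ingredients: the $\sigma$-locality and $\pi^{\nu^0}$-nullness of $N$ used through Lemma \ref{lema:nim} must be bypassed, and the $\nu^0$-null set $M$ on which the energy leaps may misbehave must be accommodated. Stage (a) carries over without change: build $\overline{\Pi}$ with intensity $\nu^0\times e^{-\Delta E}\mathcal{L}\times\mathcal{E}^1\times\mathcal{U}[0,1]$ and its $\varepsilon$-images $\overline{\Pi}^\varepsilon$ via \eqref{piepsilon}; the inclusion from II.(iv) still yields $\mathcal{A}^{0,H^\varepsilon}(\Lambda\times G)\subseteq D_\varepsilon(\mathcal{A}^{0,V}(\Lambda_\varepsilon\times G))$, and the coefficient $\alpha_q^{\nu^0,V}<1$ forces the right-hand side to be almost surely finite for every $\Lambda\in\mathcal{B}^0_S$ by the branching-process estimate underlying Proposition \ref{finit1}. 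The thinnings $\mathcal{K}^\varepsilon$ with acceptance $M^\varepsilon(\gamma_x|\xi)=e^{-(\Delta E^{H^\varepsilon}_\xi(\gamma_x)-\Delta E)}$ are then well-defined, and each $(\nu^\varepsilon,H^\varepsilon)$ has a unique gas measure $\mu^\varepsilon$, which coincides with the law of $\mathcal{Z}^\varepsilon:=\mathcal{K}^\varepsilon_0$.

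For stage (b) the exceptional set to discard is the union of the event that some cylinder in $\overline{\Pi}$ has basis in $M$ (which has probability zero since $\nu^0(M)=0$) and the event of II.(ii) that $\Pi_t\in N$ for some $t\in\mathbb{R}$. On the complement, every $C\in\overline{\Pi}$ satisfies $basis(C)\in M^c$, and since almost surely no two cylinders are born simultaneously one has $\Pi_{b_C}=\Pi_{b_C^-}+\delta_{basis(C)}\in N^c$; because $N$ is closed under addition of particles (equivalently, $N^c$ is downward closed under $\preceq$) and $\mathcal{K}^0_{b_C^-}\preceq\Pi_{b_C^-}$, this descends to $\mathcal{K}^0_{b_C^-}+\delta_{basis(C)}\in N^c$. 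Hypothesis II.(iii) then supplies the pointwise convergence of the relevant energy leaps, and the inductive argument of Lemma \ref{lema:heart}---starting at the last generation of $\mathcal{A}^{0,V}(\Lambda\times G)$, working backwards, and discarding a further null event on which a flag $F(C)$ exactly matches its limiting acceptance probability---transplants verbatim, yielding for each $B\in\mathcal{B}^0_{S\times G}$ a random $\varepsilon_B>0$ with $\mathcal{Z}^\varepsilon_B=D_\varepsilon(\mathcal{Z}^0_{D^{-1}_\varepsilon(B)})$ whenever $\varepsilon\leq\varepsilon_B$. Stage (c) is then immediate: Lemma \ref{lemaconvaga} upgrades this identity to $\mathcal{Z}^\varepsilon\to\mathcal{Z}^0$ vaguely almost surely, and bounded convergence against continuous test functions gives $\mu^\varepsilon\overset{w}{\to}\mu^0$.

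The main obstacle is precisely the implication ``$\Pi_{b_C}\in N^c\Rightarrow\mathcal{K}^0_{b_C^-}+\delta_{basis(C)}\in N^c$'' that must replace the appeal to Lemma \ref{lema:nim}. It rests on two mild observations that were invisible in the original, stricter setting: the right-continuous Poisson field already carries the birth particle inside $\Pi_{b_C}$ itself, so hypothesis II.(ii) directly controls the ``after-birth'' configuration rather than the ``just-before'' one; and the upward-closedness of $N$, inherited from the notion of dynamical negligibility and preserved in the relaxed notion adopted here, is precisely what transfers this control from $\Pi$ down to the thinning $\mathcal{K}^0$. Once this is noted, and once one observes separately that $\nu^0(M)=0$ together with $\overline{\Pi}$ being Poisson of intensity driven by $\nu^0$ almost surely prevents any cylinder from having basis in $M$, the rest of the proof of Theorem \ref{convdis} is a line-by-line adaptation.
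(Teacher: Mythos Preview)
Your proposal is correct and follows exactly the route the paper intends: the paper's own ``proof'' of this relaxed theorem is the short remark in Section~\ref{secdis} that the proof of Theorem~\ref{convdis} only invoked Lemma~\ref{lema:nim} to obtain $P(\Pi_t\in N\text{ for some }t)=0$, which is now assumed directly as II.(ii), and that the infima may be taken essential; you have simply spelled out this inspection in detail, including the observations that $\nu^0(M)=0$ forces every cylinder basis into $M^c$ almost surely and that upward-closedness of $N$ transfers $\Pi_{b_C}\in N^c$ down to $\mathcal{K}^0_{b_C^-}+\delta_{basis(C)}\in N^c$, which is precisely the mechanism behind the paper's phrase ``since $\mathcal{K}^0$ is a thinning of~$\Pi$''.
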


\subsection{Proof of Corollary \ref{convabs}}
 
To conclude this last section, we show that Corollary \ref{convabs} is a particular case of Theorem \ref{convdis}. Indeed, notice that, since $\nu^\varepsilon \ll \nu$, for every $\Lambda \in \B^0_S$ we have that $\pi^{\nu^\varepsilon}_\Lambda \ll \pi^{\nu}_\Lambda$ with density given by
\begin{equation}\label{density}
\frac{d\pi^{\nu^\varepsilon}_\Lambda}{d\pi^{\nu}_\Lambda} (\sigma) = e^{-( \nu^\varepsilon(\Lambda \times G) - \nu(\Lambda \times G))} \prod_{(\gamma_x,i) \in [\sigma]} \frac{d\nu^\varepsilon}{d \nu}(\gamma_x),
\end{equation} a fact which can be deduced from the integration formula (see {\cite[Proposition~3.1]{MW}})
$$
\int_{\mathcal{N}(\Lambda \times G)} f(\sigma) d\pi^{\upsilon}_\Lambda = \sum_{n \in \N_0} \frac{e^{-\upsilon(\Lambda \times G)}}{n!} \int_{(\Lambda \times G)^n} f\left( \sum_{i=1}^n \delta_{\gamma_x^i}\right) d\upsilon^n(\gamma_x^1,\dots,\gamma_x^n)
$$ valid for all bounded $\F_{\Lambda\times G}$-measurable functions $f$ and measures $\upsilon$ on $S \times G$ of $S$-locally finite allocation, where $\upsilon^n$ above denotes the $n$-fold product measure of $\upsilon$. In particular, the models $(\nu,H^\varepsilon)$ and $(\nu,\tilde{H}^\varepsilon)$ are equivalent, i.e. they produce the same gas kernels $\mu^\varepsilon_{\Lambda |\eta}$, where $\tilde{H}^\varepsilon$ is the Hamiltonian prescription given by 
$$
\tilde{H}^\varepsilon_{\Lambda|\eta} (\sigma) := H^\varepsilon_{\Lambda|\eta}(\sigma) - \sum_{(\gamma_x,i) \in [\sigma]} \log\left( \frac{d \nu^\varepsilon}{d \nu}(\gamma_x)\right).
$$ It is not difficult to check that for each $\varepsilon \geq 0$ the pair $(\nu,\tilde{H}^\varepsilon)$ satisfies Definition \ref{assump}. Moreover, since for every $\gamma_x \in S \times G$ it is possible to verify that $I^{\tilde{H}^\varepsilon}(\{\gamma_x\}) = I^{H^\varepsilon}(\{\gamma_x\})$ and we also have that 
$$
\Delta \tilde{E}^{\varepsilon}_{\Lambda|\eta} = \Delta E^{\tilde{H}^\varepsilon}_{\Lambda|\eta}
$$ for every $\Lambda \in \B^0_S$ and $\eta \in \mathcal{N}(S\times G)$, we see that the family of models $(\nu,\tilde{H}^\varepsilon)_{\varepsilon \geq 0}$ falls under the hypotheses of Theorem \ref{convdis} for the approximation family $\mathcal{D}=(D_\varepsilon)_{\varepsilon \geq 0}$ given by the identity operator on $S \times G$ for each $\varepsilon \geq 0$. The result then follows from Theorem \ref{convdis}.  

\bigskip
\noindent {\bf Acknowledgments.} The authors would like to thank Pablo Ferrari and Siamak Taati for the many fruitful discussions and suggestions that helped us make this a better article.

\bibliographystyle{plain}
\bibliography{bibliografia}
\end{document}